\title{Discrete Load Balancing in Heterogeneous Networks with a Focus on Second-Order Diffusion}
\author{\IEEEauthorblockN{\makebox[\textwidth]{\hfill Hoda Akbari\IEEEauthorrefmark{1} \hfill Petra Berenbrink\IEEEauthorrefmark{1} \hfill Robert Elsässer\IEEEauthorrefmark{2} \hfill Dominik Kaaser\IEEEauthorrefmark{2} \hfill}}
\IEEEauthorblockA{
\makebox[0.5\textwidth]{\IEEEauthorrefmark{1}Simon Fraser University}
\makebox[0.5\textwidth]{\IEEEauthorrefmark{2}University of Salzburg}
}
\IEEEauthorblockA{
\makebox[0.5\textwidth]{\textit{\{hodaa, petra\}@sfu.ca}}
\makebox[0.5\textwidth]{\textit{\{elsa, dominik\}@cosy.sbg.ac.at}}
}
}
\newcommand{\shorten}[1]{}
\newcommand*\tageq{\refstepcounter{equation}\tag{\theequation}}
\newcommand{\bx}{\ensuremath{\mathbf{x}}}
\newcommand{\by}{\ensuremath{\mathbf{y}}}
\newcommand{\be}{\ensuremath{\mathbf{w}}}
\newcommand{\bi}{\ensuremath{\hat{\mathbf{i}}}}
\newcommand{\bj}{\ensuremath{\hat{\mathbf{j}}}}
\newcommand{\ba}{\ensuremath{\mathbf{a}}}
\newcommand{\bOne}{\ensuremath{\mathbf{1}}}
\newcommand{\bv}{\ensuremath{\mathbf{v}}}
\newcommand{\bu}{\ensuremath{\mathbf{u}}}
\newcommand{\x}[3]{\ensuremath{x^{{#1}}_{#2}{(#3)}}}
\newcommand{\X}[3]{\ensuremath{X^{{#1}}_{#2}{(#3)}}}
\newcommand{\y}[4]{\ensuremath{y^{{#1}}_{#2,#3}{(#4)}}}
\newcommand{\Y}[4]{\ensuremath{Y^{{#1}}_{#2,#3}{(#4)}}}
\newcommand{\xv}[2]{\ensuremath{x^{{#1}}(#2)}}
\newcommand{\yv}[2]{\ensuremath{y^{{#1}}(#2)}}
\newcommand{\yell}{\mathbb{Y}_{\boldsymbol\ell}}
\newcommand{\yellmin}{\mathbb{Y}_{\boldsymbol\ell-1}}
\newcommand{\e}[3]{\ensuremath{e_{#1,#2}(#3)}}
\newcommand{\E}[3]{\ensuremath{E_{#1,#2}(#3)}}
\newcommand{\Q}[4]{\ensuremath{\mathcal{C}^{#1}_{#2,#3}(#4)}}
\newcommand{\Qv}[3]{\ensuremath{\mathcal{C}^{#1}_{#2}(#3)}}
\newcommand{\DC}[2]{\ensuremath{C^{\infty}\circ D^{#1}}}
\NewDocumentCommand{\xnewtheorem}{m o m}{\IfNoValueTF{#2}{\newtheorem{#1}{#3}\expandafter\def\csname #1autorefname\endcsname{#3}}{
\newaliascnt{#1}{#2}
\newtheorem{#1}[#1]{#3}
\aliascntresetthe{#1}
\expandafter\def\csname #1autorefname\endcsname{#3}
}
}
\newtheorem{theorem}{Theorem}
\newcommand\thmref[1]{\autoref{thm:#1}}
\newcommand\defref[1]{\autoref{def:#1}}
\newcommand\secref[1]{\autoref{sec:#1}}
\newcommand\lemref[1]{\autoref{lem:#1}}
\newcommand\obsref[1]{\autoref{obs:#1}}
\newcommand\appref[1]{Appendix \ref{app:#1}}
\renewcommand\appref[1]{the appendix}
\newcommand\eq[1]{\eqref{eq:#1}}
\DeclareMathOperator{\PROBABILITY}{P\scriptstyle{r}}
\DeclareMathOperator{\EXPECTED}{E}
\let\Pro\Probability
\let\PrBig\Probability
\let\Exbig\Expected
\let\ExBig\Expected
\let\Ex\Expected
\begin{document}

\maketitle
\begin{abstract}
In this paper we consider a wide class of discrete diffusion load balancing
algorithms. The problem is defined as follows. We are given an interconnection
network and a number of load items, which are arbitrarily distributed among the
nodes of the network. The goal is to redistribute the load in iterative
discrete steps such that at the end each node has (almost) the same number of
items. In diffusion load balancing nodes are only allowed to balance their load
with their direct neighbors.

We show three main results. Firstly, we present a general framework for
randomly rounding the flow generated by continuous diffusion schemes over the
edges of a graph in order to obtain corresponding discrete schemes. Compared to
the results of Rabani, Sinclair, and Wanka, FOCS'98, which are only valid
w.r.t.~the class of homogeneous first order schemes, our framework can be used
to analyze a larger class of diffusion algorithms, such as algorithms for
heterogeneous networks and second order schemes. Secondly, we bound the
deviation between randomized second order schemes and their continuous
counterparts. Finally, we provide a bound for the minimum initial load in a
network that is sufficient to prevent the occurrence of negative load at a node
during the execution of second order diffusion schemes.

Our theoretical results are complemented with extensive simulations on
different graph classes. We show empirically that second order schemes, which
are usually much faster than first order schemes, will not balance the load
completely on a number of networks within reasonable time. However, the maximum
load difference at the end seems to be bounded by a constant value, which can
be further decreased if first order scheme is applied once this value is
achieved by second order scheme.

\end{abstract}

\section{Introduction}
\label{sect:introduction}

Load balancing is a fundamental task in many parallel and distributed
applications. Often there are significant differences in the amount of work load generated
on the processors of a parallel machine, which have to be balanced in order to
obtain a substantial benefit w.r.t.~the runtime of a parallel computation. One
of the most prominent examples are so-called finite element simulations
\cite{FWM94}.
 
In the load balancing problem we are given an
interconnection network and a number of load items which are arbitrarily
distributed over the nodes of the network. The goal is to redistribute the
items such that at the end each node has (almost) the same  load.
To achieve this goal, nodes are only allowed to communicate with their direct
neighbors. We assume that each node has access to a global clock, and the
algorithm works in synchronous rounds.


A prominent class of load balancing algorithms are so-called diffusion schemes
\cite{DFM99}. In these algorithms, the nodes are allowed to balance their load
with all their neighbors simultaneously in a round. We distinguish between
\emph{continuous} and \emph{discrete} settings. In the continuous case it is
assumed that the load can be split into arbitrarily small pieces. Although
often not realistic, this assumption is very helpful for analyzing these
algorithms \cite{DFM99}. Discrete load balancing algorithms, on the other hand,
assume that tasks are atomic units of load, called \emph{tokens}. Hence, two
adjacent nodes cannot balance their load any way they want; only integral
amounts of load can be transferred. As a consequence, discrete diffusion
algorithms are usually not able to balance the load completely
\cite{ABS12,EMS06}.

Two fundamental diffusion type algorithms are the first order scheme (FOS) and
the second order scheme (SOS) \cite{MGS98}. In the first order scheme the
amount of load that nodes send to their neighbors in a step only depends on
their current load difference. In SOS the flow over an edge is a function of
the current load difference between its incident nodes and the load that was
sent in the previous round. Note that SOS can lead to negative load at some nodes if the
loads of the nodes are not sufficient to fulfill the calculated demand of all
edges. There are tight bounds on the worst-case convergence time of both, FOS
and SOS, in the continuous case \cite{DFM99}. In general, for the optimal choice of
parameters SOS converges much faster than FOS.

The common approach for analyzing discrete diffusion algorithms is to consider
a closely related continuous version of the algorithm and to bound the load
deviation between load vectors of the two processes (\cite{RSW98}).
To explain the approach we need a couple of  definitions first.
We assume that the network is modeled by an undirected graph $G = (V,E)$, where
$V=\{1,\dots,n\}$ represents the set of processors and the edges in $E$
describe the  connections between them. A total
of $m$ identical load items are distributed over the nodes. We use a vector
$x=(x_1,\dots,x_n)$ to indicate the amount of load assigned to every node. In
the \emph{heterogeneous} network model the nodes may have different speeds
$(s_1,\dots,s_n)$. The aim of a load balancing algorithms is to distribute the
load proportional to the processors' speeds. Hence, the ideal load of a node
$i$ is $\bar{x}_i = m s_i/s$, where $s = \sum_{i=1}^n s_i$.
The
deviation of a load vector $x$ from another load vector $x'$ is $\max_{i\le n}
|x_i-x'_i|$. 

In the case of the common approach mentioned above the continuous process would forward a fractional amount of
load $\ell_e$ over some edge $e$, the discrete algorithm rounds $\ell_e$ to an
integer $\ell'_e$. The rounding can be done deterministically or randomized,
whereas randomized rounding often outperforms deterministic rounding (for
example, the always round down approach \cite{SS12}). The difference between $\ell_e$ and
$\ell'_e$ is called the \emph{rounding error}. The propagation of the rounding
errors causes the two processes to deviate from each other.

In this paper we show three main results. Firstly, we present a general
framework for randomly rounding continuous diffusion schemes to discrete
schemes. Compared to the results of \cite{RSW98}, which are only valid
w.r.t.~the class of homogeneous first order schemes, our framework can be used
to analyze a larger class of diffusion algorithms, such as algorithms for
heterogeneous networks and second order schemes. Secondly, we bound the
deviation between randomized second order schemes and their continuous
counterparts. Finally, we provide a bound for the minimum initial load in a
network that is sufficient to prevent the occurrence of negative load at a node
during the execution of second order diffusion schemes.
Our results are supported by extensive simulations on various graph classes,
comparing the performance of FOS and SOS and giving an empiric insight into the
behavior of diffusion based load balancing processes.

\section{Models and Results}\label{sec:known}


\paragraph{First Order Diffusion}
\shorten{
The first order scheme (FOS) was independently introduced by~\cite{C89}
and~\cite{B90}.
}
FOS in the \emph{homogeneous} network model is defined as follows. Let $N(i)$ be the set of neighbors of node $i$ and
$d_i$ be its degree. We define $\xv{}{t}=(\x{}{1}{t}, \dots, \x{}{n}{t})$ as
the \emph{load vector} at the beginning of round~$t\ge 0$, where $\x{}{i}{t}$ is
the load of node $i$. The
amount of load transferred from node~$i$ to node~$j$ in round~$t$ is denoted by $\y{}{i}{j}{t}$. Then FOS is
characterized by the following equations, where $\alpha_{i,j}$ is a parameter, usually $\alpha_{i,j}=1/\left(\max(d_i, d_j)+1\right)$.
\begin{align}
\y{}{i}{j}{t} &= \alpha_{i,j}\cdot \left(\x{}{i}{t} - \x{}{j}{t}\right) \label{eq:fosy}\\
\x{}{i}{t+1} &= \x{}{i}{t} - \sum_{j\in N(i)} \alpha_{i,j}\left(\x{}{i}{t} - \x{}{j}{t}\right) \notag
\end{align}
The process can be expressed with a \emph{diffusion matrix} $M$, where
$M_{i,i}=1-\sum_j \alpha_{i,j}$ and $ M_{i,j}= \alpha_{i,j}$ for $j\in N(i)$.
All other entries of $M$ are zero. Then
\begin{equation}
\xv{}{t+1} = M \cdot \xv{}{t} \enspace , \label{eq:MFOS}
\end{equation}
where $M$ is a symmetric doubly stochastic $n \times n$ matrix.
\shorten{ that can be viewed as the 
transition matrix of an ergodic Markov chain with uniform steady-state distribution.
Hence, repeatedly applying the equation leads to the perfectly balanced state.}
Let $K$ denote the difference between the maximum and minimum load at the beginning 
of the process. Let $\lambda$ denote the second-largest eigenvalue (in magnitude) of $M$.
Then \cite{MGS98, RSW98} show that FOS converges in
$ O\left(\log (Kn)/(1-\lambda)\right)$ rounds.
In \cite{RSW98}
the authors introduce a framework to analyze a wide class of discrete FOS processes. 
This framework served as a foundation for analyzing several discrete FOS 
algorithms. Many of these publications consider uniform processors~\cite{BCFFS11,BFH09,EM03,ES10,FGS12,FS09,GM96,MGS98,RSW98,SS12}, while a few others incorporate processor speeds into the model~\cite{AB12,EMS06}.
The authors of \cite{FGS12} consider a discrete process where the continuous flow is 
rounded randomly. 
This algorithm achieves a deviation bound of 
$O((d\log\log n)/(1-\lambda))$.
The drawback of this method is that rounding up on too many edges might result 
in negative load.
The process of~\cite{BCFFS11} avoids negative load. 
A node first rounds down all the flows on the adjacent edges, which leaves some 
surplus tokens which are randomly distributed among the neighbors.
This algorithm achieves a deviation bound of 
$O(d\sqrt{\log n}+\sqrt{(d\log n \log d)/(1-\lambda)})$.
In \cite{SS12} the authors study two natural discrete diffusion-based protocols 
and their discrepancy bounds depend only polynomially on 
the maximum degree of the graph and logarithmically on $n$.

The balancing process of~\cite{ABS12} 
 simulates a continuous process using a corresponding discrete process. 
In every round the discrete flow on each edge is 
determined such that it stays as close as possible to the total continuous 
flow that is sent over the edge. This process results in a deviation of 
$O(d)$ (for a more detailed description see next section). In \cite{ES10} the authors consider an approach that is based on random 
walks where tokens of overloaded nodes use a random walk to 
reach underloaded nodes. While this approach leads to a situation at the end, in which
no node has more than a constant number of tokens above average \cite{ES10}, 
it needs to keep track of the load traffic the continuous scheme would produce. 
Moreover, 
the corresponding random walks of the tokens 
result in a huge amount of load transmissions between the nodes, which is not 
the case 
in diffusion based schemes \cite{DFM99}.

\paragraph{Second Order Diffusion (SOS)}
Muthukrishnan et~al.~\cite{MGS98} introduce the continuous second order scheme which
is based on a numerical iterative method called successive over-relaxation 
\cite{GV61} and is one of the fastest diffusion load balancing algorithms.
In SOS, the amount of load transmitted over each edge depends on the current load as well as the load transferred in the previous round.
The only exception is the very first round in which FOS is applied.
%
Subsequent rounds follow the equations below.
\begin{align}
\y{}{i}{j}{t} & = (\beta-1)\,\,\y{}{i}{j}{t-1} + \beta \alpha_{i,j} \left(\x{}{i}{t}-\x{}{j}{t}\right) \label{eq:sosy} \\
\x{}{i}{t+1} & = \beta \cdot\left(\x{}{i}{t} - \sum_{j\in N(i)} \alpha_{i,j}\left(\x{}{i}{t} - \x{}{j}{t}\right)\right) \notag \\
& \phantom{{}={}} + (1-\beta)\cdot \x{}{i}{t-1}{} \notag
\end{align}
Here, $\beta$ is independent of the iteration number $t$. 
From the above equations we get
\begin{equation}\label{eq:sositeration}
\xv{}{t+1} =
\begin{cases}
		 M\, \xv{}{t} & \text{if } t=0 \\
		 \beta\cdot M \,\xv{}{t} + (1-\beta)\cdot \xv{}{t-1} & \text{if } t > 0
	\end{cases}
\end{equation}

For the process to converge, $\beta$ must be in the interval $(0,2)$. 
For the optimal choice of $\beta_{\small{opt}} = 2/(1+\sqrt{1-\lambda^2})$
SOS converges in $ O(\log (Kn)/\sqrt{1-\lambda})$ rounds~\cite{MGS98}
which is in general faster than FOS; for graphs with some eigenvalue gap 
$(1-\lambda)^{-1} = \log^{\omega(1)} n$, the convergence time of SOS is 
almost quadratically faster than FOS. 
 Unfortunately, it can happen that the total outgoing flow from a node 
 exceeds its current load, which results in so-called \emph{negative load}.

\paragraph{Heterogeneous Networks}
Continuous FOS and SOS processes in the \emph{heterogeneous network model} 
were first studied in \cite{EMP02}.
In heterogeneous networks, processors have different speeds and the aim is to 
distribute the load proportional to their speeds.
The minimum speed is $1$, the maximum speed is $s_{\max}$, and $s=s_1+\cdots+s_n$.
Let the diagonal matrix $S$ be defined by $S_{i,i} = s_i$.
Then the heterogeneous FOS/SOS processes are defined as before (see (\ref{eq:MFOS}) and (\ref{eq:sositeration})), 
 except the diffusion matrix is now 
$M = I-LS^{-1}$ where $L$ is the normalized Laplacian matrix of the graph \cite{EMP02}.
In \cite{AB12}, the authors analyze a discrete FOS for homogeneous networks.
In \cite{EMP02} the authors
 show that continuous 
FOS/SOS processes converge in $O(\log(Kns_{\max})/({1-\lambda}))$ and 
$O(\log(Kns_{\max})/\sqrt{1-\lambda})$ rounds, respectively.
In \cite{EMS06}, the authors consider a discrete version of 
SOS too. They show that the euclidean distance between the discrete and continuous load vectors in the discrete version is~$O(d\cdot \sqrt{n \cdot s_{\max}}/(1-\lambda))$.


\newcommand\resultbox[1]{\noindent\parbox[t]{0.66in}{\emph{Result #1.}}}

\subsection{New Results}\label{sec:results}
\resultbox{I} We present a general framework for rounding continuous diffusion 
schemes to discrete schemes. 
Our approach described in \secref{general} estimates the error between a continuous
diffusion scheme and the rounded discrete version first, similar to \cite{RSW98}.
Then we combine that error term with 
martingales techniques (similar to the ones used in \cite{BCFFS11}) 
to bound the deviation between the continuous 
scheme and a discrete scheme based on randomized rounding.
Note that the results in \cite{RSW98} are only valid for a class of
homogeneous first order schemes and \cite{BCFFS11} analyzes a fixed 
first order diffusion scheme with a specific transition matrix. In this paper we introduce an error estimation 
that allows us to show results for a larger class of diffusion algorithms
(see Definition \ref{def:linear}) in heterogeneous networks, including SOS.

In the homogeneous case our bounds are the same as the best results for FOS.
Our bound is worse than the $O(d)$ bound of~\cite{ABS12}.
In the current paper 
we bound the deviation of a class of very natural and stateless algorithms. 
That is, the 
amount of load that is forwarded over an edge in step $t$ only depends on the 
load at the beginning of step $t$ and the amount that was sent in step $t-1$. 
The approach of~\cite{ABS12} is not stateless as it simulates the
continuous process. The flow that is sent over the edges in step $t$ takes into 
account the 
difference of the cumulative load that was sent by the continuous process up 
to step $t$ and the cumulative load that was sent by the discrete process so 
far. 

\resultbox{II} We show that randomized SOS has a deviation (after the balancing time of continuous SOS) of
$O\left( d\cdot \log s_{\max}\cdot\sqrt{\log n}/(1-\lambda)^{3/4} \right)$,
where $\lambda$ is the second largest eigenvalue of $M$ and
$s_{\max}$ is the maximum speed.
Note that the runtime of SOS
is in most cases much better than the runtime of FOS, i.e., $ O(\log (Kn)/\sqrt{1-\lambda})$
(assuming optimal $\beta$) compared to
$ O\left(\log (Kn)/(1-\lambda)\right)$ in the case of FOS.

\resultbox{III} We show that the continuous second order scheme with optimal $\beta$
will not generate negative load if at time $t=0$ the minimum load of every node
is at least $O\left(\sqrt{n}\cdot \Delta(0)/\sqrt{1-\lambda}\right)$. Here
$\Delta(0)$ is the difference between the maximum load and the average load at
time $t=0$. For discrete SOS and graphs with proper eigenvalue gap we show a
bound of $O\left((\sqrt{n}\cdot \Delta(0) +d^2)/\sqrt{1-\lambda}\right)$. To
the best of our knowledge these are the first results specifying a sufficient
amount of minimum load w.r.t.~SOS to avoid negative load.

\noindent\parbox[t]{0.66in}{\emph{Simulations.}} We implemented a network and
simulated both, FOS and SOS load balancing processes. Especially in tori, our
results show a clear advantage of SOS over FOS w.r.t.~the number of steps
required to balance the loads. We also empirically analyze the remaining
imbalance that arises in discrete load balancing schemes once the system has
converged such that no node has more than a constant number of additional load
tokens. We propose to switch from SOS to FOS once this threshold is reached,
and our simulations show that this change of the scheme leads to a further drop
of the remaining load imbalance.

\shorten{
\begin{center}
\vspace{-3mm}
\renewcommand\arraystretch{1.5}
\small
\ctable[pos = {h},
	caption = {{Comparison of deviation bounds of discrete FOS and SOS from their continuous counterparts.}},
	label  = {tab:comparison},
]{l||l|l|l|l|l|l}{
}
{ \FL
&  \multicolumn{2}{l}{\textbf{Arbitrary Rounding}}  & \multicolumn{2}{l}{\textbf{Randomized Rounding}} & \multicolumn{2}{l}{\textbf{Runtimes}} \\
 \cmidrule(r){2-3} \cmidrule(r){4-5} \cmidrule(r){6-7}
\textbf{Graph}&FOS \cite{RSW98}&SOS (\thmref{sosdet})&FOS \cite{BCFFS11}&SOS (\thmref{sosrand}) & FOS & SOS \ML
Ring& $n$ & $n^2\sqrt{n}$ & $\sqrt{\log n}$ & $n^{6/4}\sqrt{\log n}$  & $n^2 \log n$ & $n \log n$ \\
2-D Torus & $\sqrt{n}$ & $n \sqrt{n}$ & $\sqrt{\log n}$ & $n^{3/4}\sqrt{\log n}$  & $n \log n$ & $\sqrt{n} {\log n}$ \\
Hypercube& $(\log n)^{2}$ & $(\log n)^{2}\sqrt{n}$ & $\log n$ & $(\log n)^{9/4}$ & $\log^2 n$ & $\log^{3/2} n$ \\
Expander& $\log n$ & $\sqrt{n}$ & $\log\log n$ & $\sqrt{\log n}$ & $\log n$ & $\log n$ 
\LL}
\vspace{-8mm}
\end{center}
}

\section{General Framework for FOS Schemes} \label{sec:general}

In this section we first generalize the framework of Rabani et~al.~\cite{RSW98}
to a wider class of processes (see \secref{framework}) and obtain an equation
estimating the deviation of the discrete process from its continuous version.
The estimation is valid as long as the continuous process is \emph{linear}
(\defref{linear}). In \cite{RSW98} the deviation is expressed in terms of the
diffusion matrix. Here, we present an analysis from a different perspective
which allows us to obtain essentially the same deviation formula for a
larger class of processes. Our analysis can be applied to the second order
processes and heterogeneous models. In \secref{rand} we present the framework
that transforms a continuous load balancing process $C$ into a discrete process
$R(C)$ using randomized rounding.

For simplicity we consider in this section only first order processes. In the
next section we generalize the framework to SOS.


\subsection{Deviation between Continuous and Discrete FOS}
\label{sec:framework}

For a load balancing process $A$, we use $\x{A}{i}{t}$ to denote the load of a
node $i$ at the beginning of the round~$t$, and
$\xv{A}{t} = (\x{A}{1}{t},\dots,\x{A}{n}{t})$. For $j \in N(i)$, we define
$\y{A}{i}{j}{t}$ as the amount of load sent from $i$ to $j$ in round $t$ (this
value is negative if load items are transferred from $j$ to $i$), where $N(i)$
represents the set of neighbors of $i$. Then $\yv{A}{t}$ is the matrix with
$\y{A}{i}{j}{t}$ as its entry in row $i$ and column $j$. Note that each
balancing process $A$ can be regarded as a function that, given the current
state of the network, determines for every edge $e$ and round $t$ the amount of
load that has to be transferred over $e$ in $t$. Hence, we can regard
$\yv{A}{t}$ as the result of applying a function $A$, i.e.,
$\yv{A}{t} = A(\xv{A}{t})$. Using this we formally define discrete processes as follows.
\begin{definition}\label{def:rounding}
Let $C$ be a continuous process. A process $D$ is said to be a discrete version
of $C$ with {rounding scheme} $R_D$ if for every vector $\bx$, we have
$D(\bx) = R_D(C(\bx))$ where $R_D$ is a function that rounds each entry of the
matrix to an integer.
\end{definition}

Note that for a load balancing process load conservation over each edge must hold.
Although it may not be a necessary condition, our analyses in this section
requires the process to exhibit a \emph{linearity} property in the following
sense.

\begin{definition}[Linearity]\label{def:linear}
A diffusion process $A$ is said to be \emph{linear} if for all
$\bx, \bx'\in \mathbb{R}^n$ and $a,b\in\mathbb{R}$ we have
$A(a\bx+b\bx') = a \cdot A(\bx)+b \cdot A(\bx')$.
\end{definition}

\begin{lemma}\label{lem:linear}
Both  FOS and SOS  as defined in \secref{known} are  linear.
\end{lemma}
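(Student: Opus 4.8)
The plan is to verify the linearity property of Definition~\ref{def:linear} directly from the defining equations \eqref{eq:fosy} and \eqref{eq:sosy}. For FOS this is immediate: the map $A$ sends a load vector $\bx$ to the flow matrix whose $(i,j)$ entry is $\alpha_{i,j}(x_i - x_j)$, which is a fixed linear function of the coordinates of $\bx$; hence for any scalars $a,b$ and vectors $\bx,\bx'$ we have $\alpha_{i,j}\big((ax_i+bx'_i)-(ax_j+bx'_j)\big) = a\,\alpha_{i,j}(x_i-x_j) + b\,\alpha_{i,j}(x'_i-x'_j)$, entrywise, so $A(a\bx+b\bx') = a\,A(\bx)+b\,A(\bx')$. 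Equivalently one can note that $\yv{}{t}$ depends linearly on $\xv{}{t}$ through the (fixed) incidence-weighted structure, and that $\xv{}{t+1}=M\xv{}{t}$ with $M$ a fixed matrix.

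For SOS the subtlety is that the flow in round $t$ defined by \eqref{eq:sosy} depends not only on the current load vector but also on the previous round's flow $\y{}{i}{j}{t-1}$, i.e.\ the process carries state. The clean way to handle this is to observe that, unrolling the recurrence, $\y{}{i}{j}{t}$ is a fixed linear combination of the load differences $\big(\x{}{i}{\tau}-\x{}{j}{\tau}\big)$ over all rounds $\tau \le t$ (with coefficients that are polynomials in $\beta$, $\alpha_{i,j}$, and $\beta-1$), and each load difference is itself linear in the corresponding load vector. Since the update \eqref{eq:sositeration} expresses $\xv{}{t+1}$ as a fixed linear combination ($\beta M \xv{}{t} + (1-\beta)\xv{}{t-1}$, with the first round being $M\xv{}{0}$) of earlier load vectors, an easy induction on $t$ shows that the entire trajectory $(\xv{}{0},\xv{}{1},\dots)$ — and hence each flow matrix $\yv{}{t}$ — is a fixed linear function of the initial load vector $\xv{}{0}$. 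Consequently, for the process started from $a\bx+b\bx'$, every quantity is the corresponding $a$-combination plus $b$-combination of the two processes started from $\bx$ and $\bx'$ separately, which is exactly the linearity asserted.

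The one point that needs care — and the step I expect to be the main (very mild) obstacle — is pinning down precisely what ``$A(\bx)$'' means for a stateful scheme like SOS, since Definition~\ref{def:linear} is phrased in terms of a function of a single vector $\bx$. I would resolve this by taking the convention (consistent with the framework of \secref{framework}, where $\yv{A}{t}=A(\xv{A}{t})$ is used with the understanding that the process also remembers $\yv{A}{t-1}$) that the relevant object is the map from the initial configuration to the full sequence of flow matrices, or equivalently that linearity is asserted roundwise along trajectories: if $\xv{}{\tau}=a\,\xv{(1)}{\tau}+b\,\xv{(2)}{\tau}$ and $\yv{}{\tau-1}=a\,\yv{(1)}{\tau-1}+b\,\yv{(2)}{\tau-1}$, then $\yv{}{\tau}=a\,\yv{(1)}{\tau}+b\,\yv{(2)}{\tau}$ and $\xv{}{\tau+1}=a\,\xv{(1)}{\tau+1}+b\,\xv{(2)}{\tau+1}$. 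Once this interpretation is fixed, the verification is just reading off the linearity of \eqref{eq:fosy}, \eqref{eq:sosy}, \eqref{eq:MFOS}, and \eqref{eq:sositeration} in their arguments, and closing the induction; no real computation is required.
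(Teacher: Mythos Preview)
Your argument is correct, but the paper takes a more direct route. Rather than unrolling the recurrence and proving by induction that the whole trajectory is a linear function of the initial load vector, the paper simply treats $A$ as a function of \emph{two} arguments, writing the single-step update (covering both FOS and SOS via the parameter $\beta$) as $A(\bx,\by) = (\beta-1)\by + \beta M\bx$, and then checks $A(a\bx+b\bx',\,a\by+b\by') = aA(\bx,\by)+bA(\bx',\by')$ in one line of algebra. This is exactly the ``roundwise along trajectories'' interpretation you mention at the end, and indeed the paper later restates Definition~\ref{def:linear} in that two-argument form (Definition~\ref{def:linear2}) to make the SOS case fit cleanly. So you correctly identified the interpretive issue and resolved it in a way consistent with the paper; the paper just skips the trajectory-level induction entirely and verifies linearity of the one-step map directly, which is all that the later arguments actually use.
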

\begin{proof}
Let $M$ be the diffusion matrix and $0\le \beta \le 2 $.
Observe that both FOS and SOS can be described by the following general equation (see equations \eqref{eq:fosy} and \eqref{eq:sosy}).
\begin{align*}
\y{ }{i}{j}{t} &= (\beta-1)\cdot \y{ }{i}{j}{t-1} + \beta\cdot M_{i,j}\cdot \x{ }{i}{t}
&\text{ for } t\ge1,
\end{align*}
Thus  the algorithm $A$ -- where based on the choice of parameter $\beta$, $A$ can represent either FOS and SOS -- is defined by
\begin{equation*} A(\bx,\by) =  (\beta-1)\by + \beta M\bx \end{equation*}
Let $\bx, \bx'\in \mathbb{R}^n, \by,\by'\in \mathbb{R}^{n\times n} $ and $a,b\in\mathbb{R}$.
Then we have
\begin{align*}
\MoveEqLeft A(a\bx+b\bx', a\by+b\by' ) \\
& = (\beta-1)\,(a\by+b\by' ) + \beta M(a\bx+b\bx')\\
& = a((\beta-1) \by+\beta M\bx) + b((\beta-1) \by'+\beta M\bx')\\
&= a A(\bx, \by)+b A(\bx', \by')
\end{align*}
which shows that $A$ is linear.
\end{proof}

Let $C$ be a continuous process and $D$ its discrete version.
Let $\hat{Y}(t)$ represent $C(\xv{D}{t})$.
Then we can say that $D$ always \emph{attempts to} set $\y{D}{i}{j}{t}$
to $\hat{Y}_{i,j}(t)$. Hence, we call $\hat{Y}(t)$ the \emph{continuous
scheduled load}. We define the \emph{rounding error} as
$\e{i}{j}{t} = \hat{Y}_{i,j}(t) -\y{D}{i}{j}{t};$
note that $\e{i}{j}{t} = -\e{j}{i}{t}$.

In the next definition $\bi$ denotes the unit vector of length $n$ with $1$ as its $i$'th entry.
\begin{definition}[Contributions]\label{def:contribution}
Let $\bx$ and \bx' be the load vectors obtained from applying $C$ for $t$ rounds
on $\bi$ and $\bj$, respectively. For two fixed nodes $i$ and $k$ and $j\in N(i)$
the \emph{contribution} of
edge $(i,j)$ on node $k$ after $t$ rounds is defined as
\[\Q{C}{k}{i\rightarrow j}{t} = \bx_k - \bx'_k \enspace . \]
\end{definition}

The next theorem provides a general form of the FOS deviation formula of \cite{RSW98} which has served as a basis for analyzing
several discrete FOS processes.

\begin{lemma}\label{lem:deterministic}
Consider a linear diffusion process $C$ and its discrete version
$D$ with an arbitrary rounding scheme. Then, for an arbitrary
node $k$ and round $t$ we have
$$\x{D}{k}{t}-\x{C}{k}{t}
= \sum_{s=1}^{t} \sum_{\{i,j\}\in E} \e{i}{j}{t-s}\, \Q{C}{k}{i\rightarrow j}{s}$$
\end{lemma}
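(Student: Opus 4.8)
\emph{Plan.} I would track the discrepancy vector $\bw(t) := \xv{D}{t}-\xv{C}{t}$. Both processes start from the same load vector, so $\bw(0)=\mathbf{0}$; the goal is to show that $\bw$ satisfies a linear recursion whose inhomogeneous term is built from the round‑$t$ rounding errors $\e{i}{j}{t}$. Unrolling that recursion and then splitting each error term over the edges of $G$ turns the single‑edge pieces into exactly the contributions of \defref{contribution}, which is the claimed identity.

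\emph{Step 1: a linear recursion for $\bw$.} In this section $C$ is a first‑order scheme, so a single continuous round is the linear map $\Phi\colon\mathbb{R}^n\to\mathbb{R}^n$ sending a load vector $\bx$ to $\bx$ minus its vector of net outflows under $C(\bx)$; for standard FOS, $\Phi(\bx)=M\bx$ with $M$ the diffusion matrix of~\eqref{eq:MFOS}. Linearity of $\Phi$ is immediate from linearity of $C$ (\defref{linear}), and $\xv{C}{t+1}=\Phi(\xv{C}{t})$. For $D$, load conservation over each edge gives $\x{D}{k}{t+1}=\x{D}{k}{t}-\sum_{j\in N(k)}\y{D}{k}{j}{t}$, while \defref{rounding} together with the definition of the rounding error yields $\y{D}{k}{j}{t}=\hat{Y}_{k,j}(t)-\e{k}{j}{t}$ with $\hat{Y}(t)=C(\xv{D}{t})$. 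Hence
\begin{equation*}
\xv{D}{t+1}=\Phi(\xv{D}{t})+\bu(t),\qquad\text{where }\bu_k(t):=\sum_{j\in N(k)}\e{k}{j}{t}.
\end{equation*}
Subtracting the two update rules and using linearity of $\Phi$ gives $\bw(t+1)=\Phi(\bw(t))+\bu(t)$ with $\bw(0)=\mathbf{0}$, which unrolls (by an easy induction) to $\bw(t)=\sum_{s=0}^{t-1}\Phi^{\,t-1-s}\bigl(\bu(s)\bigr)$.

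\emph{Step 2: edge decomposition and conclusion.} Since $\e{i}{j}{s}=-\e{j}{i}{s}$, a coordinatewise check shows $\bu(s)=\sum_{\{i,j\}\in E}\e{i}{j}{s}\,(\bi-\bj)$, where an arbitrary orientation is fixed for each undirected edge and the summand is independent of that choice. Plugging this into the formula for $\bw(t)$, using linearity of $\Phi^{\,t-1-s}$, and reading off the $k$‑th entry,
\begin{equation*}
\x{D}{k}{t}-\x{C}{k}{t}=\sum_{s=0}^{t-1}\ \sum_{\{i,j\}\in E}\e{i}{j}{s}\,\bigl[\Phi^{\,t-1-s}(\bi-\bj)\bigr]_k .
\end{equation*}
Now $\Phi^{\tau}\bi$ is exactly the load vector obtained by running $C$ for $\tau$ rounds from $\bi$, so $\bigl[\Phi^{\tau}\bi\bigr]_k-\bigl[\Phi^{\tau}\bj\bigr]_k$ is precisely the contribution $\Q{C}{k}{i\rightarrow j}{\tau}$ of \defref{contribution}. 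Substituting this and relabeling the round index of the outer sum puts the right‑hand side into the stated form $\sum_{s=1}^{t}\sum_{\{i,j\}\in E}\e{i}{j}{t-s}\,\Q{C}{k}{i\rightarrow j}{s}$.

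\emph{Main obstacle.} The only step that genuinely relies on the hypothesis, and the one I would be most careful with, is the reduction in Step 1: the discrete one‑round map $\bx\mapsto R_D(C(\bx))$ is not linear, so a priori $\bw(t)$ need not satisfy any clean recursion. It does only because the nonlinearity of $D$ is confined to the additive, round‑local correction $\bu(t)$, while propagation is governed by the genuinely linear operator $\Phi$; this is where \defref{linear}/\lemref{linear} is used, both via $\Phi(\xv{D}{t})-\Phi(\xv{C}{t})=\Phi(\bw(t))$ and via pulling $\Phi^{\,t-1-s}$ inside the edge sum. Everything afterwards --- the antisymmetric edge decomposition of $\bu(s)$ and matching the round index with \defref{contribution} --- is routine bookkeeping.
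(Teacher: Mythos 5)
Your proof is correct, but it takes a genuinely different route from the paper's. The paper sequentializes the load-balancing actions by imposing an ordering on the edges, introduces the $|E|\cdot t$ hybrid processes $\DC{\ell}{\tau-\ell}$ (run $D$ for the first $\ell$ sequential steps and $C$ thereafter), and telescopes: consecutive hybrids differ only in the flow over one edge in one round, and linearity turns that single-edge perturbation into the single term $\e{i}{j}{s}\,\Q{C}{k}{i\rightarrow j}{t-s}$. You instead derive the inhomogeneous linear recursion $w(t+1)=\Phi(w(t))+u(t)$ for the discrepancy vector, unroll it by variation of constants, and only then split the per-node error $u(s)$ over edges via the antisymmetry $\e{i}{j}{s}=-\e{j}{i}{s}$. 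Both arguments use linearity in the same essential way; yours is shorter and exposes the propagation operator $\Phi^{\tau}$ explicitly (which is also what \lemref{soscontribution} later computes for SOS), while the paper's hybrid decomposition perturbs one edge at a time and therefore matches \defref{contribution} directly without needing the edge decomposition of $u(s)$.

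One indexing caveat you should make explicit rather than bury in ``relabeling'': your unrolled sum is $\sum_{s=0}^{t-1}\e{i}{j}{s}\,\bigl[\Phi^{t-1-s}(\bi-\bj)\bigr]_k$, whereas the stated identity, rewritten with $s\mapsto t-s$, reads $\sum_{s=0}^{t-1}\e{i}{j}{s}\,\Q{C}{k}{i\rightarrow j}{t-s}$. If \defref{contribution} is read literally ($\Q{C}{k}{i\rightarrow j}{\tau}$ being the effect of $\tau$ applications of the one-round map to $\bi-\bj$), these differ by one application of $\Phi$. The paper's operative convention --- visible in \defref{contribution2} and \lemref{soscontribution}, where $\Q{\operatorname{SOS}}{k}{i\rightarrow j}{\tau}=Q_{k,i}(\tau-1)-Q_{k,j}(\tau-1)$ with $Q(0)=\mathbf{I}$ --- is that index $\tau$ corresponds to $\tau-1$ applications, under which your formula matches the statement exactly; the paper's own proof contains the identical one-unit shift (the two hybrids differ at the beginning of round $s+1$ and are then evolved $t-s-1$ further rounds, yet the contribution is written with index $t-s$). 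So this is an ambiguity inherited from the paper rather than a gap in your argument, but you should pin down the convention at the moment you identify $\bigl[\Phi^{\tau}(\bi-\bj)\bigr]_k$ with a contribution.
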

\begin{proof}
Fix a node $k$ and round $t$.
Suppose we \emph{sequentialize} the load balancing actions of the process
by imposing an arbitrary ordering on the edges.
Then,~$t$ rounds in the parallel view is equivalent to $|E|\cdot t$ steps
in the sequentialized view.
In the following, let $\tau =|E|\cdot t$.
With a slight abuse of notation we let $\DC{\ell}{\tau-\ell}$ denote a
 \emph{hybrid} process in which the load balancing actions are determined
 by $D$ in steps $1$ to $\ell$ and by $C$ afterwards, where $0\le \ell \le \tau$.
Observe that $\x{\DC{\tau}{0}}{k}{t} = \x{D}{k}{t}$ and $\x{\DC{0}{}}{k}{t} = \x{C}{k}{t}$.
Thus we can write $\x{D}{k}{t}-\x{C}{k}{t}$ in the form of a telescoping sum as follows.
\begin{align}\label{eq:successive}
\x{D}{k}{t}-\x{C}{k}{t} &= \x{\DC{\tau}{0}}{k}{t}-\x{\DC{0}{\tau}}{k}{t} \notag \\
&= \sum_{\ell=1}^{\tau} \left(\x{\DC{\ell}{\tau-\ell}}{k}{t}-\x{\DC{\ell-1}{\tau-\ell+1}}{k}{t}\right) 
\end{align}

Fix an arbitrary step $\ell$ and let $\{i,j\}$ and $s$ be the edge and the round corresponding to the step~$\ell$.
Both~$\DC{\ell-1}{\tau-\ell+1}$ and~$\DC{\ell}{\tau-\ell}$ start their round $s+1$ with load vectors that are the same except maybe in~$i$ and~$j$.
This happens because $\DC{\ell-1}{\tau-\ell+1}$ forwards $\hat{Y}_{i,j}(s)$ over $\{i,j\}$ while in $\DC{\ell}{\tau-\ell}$ this amount is $\hat{Y}_{i,j}(s)-\e{i}{j}{s}$.
Thus, by the definition of $\Q{C}{k}{i\rightarrow j}{t}$ and using the linearity property of the process we get
\[ \x{\DC{\ell}{\tau-\ell}}{k}{t}-\x{\DC{\ell-1}{\tau-\ell+1}}{k}{t} = \e{i}{j}{s}\, \Q{C}{k}{i\rightarrow j}{t-s} \enspace .\] 
Plugging the above into \eq{successive} and translating the summation index we get
\begin{align*}
\x{D}{k}{t}-\x{C}{k}{t} &= \sum_{s=0}^{t-1} \sum_{\{i,j\}\in E} \e{i}{j}{s}\, \Q{C}{k}{i\rightarrow j}{t-s}\\
& = \sum_{s=1}^{t} \sum_{\{i,j\}\in E} \e{i}{j}{t-s}\, \Q{C}{k}{i\rightarrow j}{s} \enspace .
\end{align*}
\end{proof}


\subsection{Framework for Randomized FOS}\label{sec:rand}

In this section we use \lemref{deterministic} to analyze a
randomized rounding scheme for a general class of continuous load
balancing algorithms. Our technique is based on the results in
\cite{BCFFS11} where the authors analyzed a fixed discrete FOS process
for homogeneous
$d$-regular graphs using randomized rounding.
Their algorithm is based on a continuous process in which every
node sends a $1/(d+1)$-fraction of its load to each neighbor.
Initially, the discrete algorithm rounds $x_i/(d+1)$ down if it is not an integer.
This leaves
$(d+1)\cdot\lfloor x_i/(d+1)\rfloor$ surplus tokens on node $i$, which they
call \emph{excess} tokens. The excess tokens are then distributed by
sending the tokens to neighbors which are uniformly
sampled without replacement.

Here we apply the technique in a much more general way, using \lemref{deterministic} to express the deviation between the randomized and
deterministic algorithm.
We introduce a randomized framework that
converts a general class of continuous processes to their discrete versions using
randomized rounding.
 For $a\in\mathbb{R}$ we use $\{a\}$ to denote $a - \lfloor a \rfloor$.

\noindent
\textbf{The Randomized Rounding Algorithm.}
Fix a node $i$.
Let $\hat{Y}(t)$ represent $C(\xv{D}{t})$. For each edge $e=\{i,j\}$ let the corresponding $\hat{Y}_{i,j}(t)$ be the load that would be sent over
$e$ by the continuous process $C$. The rounding scheme works as follows.
First, it rounds $\hat{Y}_{i,j}(t)$ down for all the edges.
This leaves $r = \sum_{j:\hat{Y}_{i,j}(t)\ge 0} \{\hat{Y}_{i,j}(t)\} $ excess
load on node $i$. Then it takes~$\lceil r\rceil$ additional tokens and
 sends each of them out with a probability of $r/\lceil r\rceil$. With the remaining
 probability the excess tokens remain on node $i$.
The tokens which do not remain on $i$ are sent to a neighbor $j$
with a probability of
$\{\hat{Y}_{i,j}(t)\}/r$.
Let~$Z_{i,j}(t)$ be a counting random variable denoting the number of excess
tokens that~$i$ sends to~$j$ in round~$t$. Then we have
\begin{equation*}
\Y{R}{i}{j}{t} =
	\begin{cases}
		\lfloor\hat{Y}_{i,j}(t)\rfloor + Z_{i,j}(t) & \text{if } \hat{Y}_{i,j}(t) \geq 0 \\
		-\Y{R}{j}{i}{t} & \text{otherwise.}
	\end{cases}
\end{equation*}


The deviation bound is expressed based on the \emph{refined local divergence} $\Upsilon^{C}(G)$ defined below, which is a function of both the algorithm and the graph:
\[ \Upsilon^{C}(G)= \max_{k \in V} \bigg( \sum_{s=0}^{\infty} \sum_{i=1}^{n}
 \max_{j\in N(i)} \big ( \Q{C}{k}{i\rightarrow j}{s} \big)^2 \bigg)^{1/2} \]
$\Upsilon^{C}(G)$ is a generalization of the refined local divergence $\Upsilon(G)$ introduced in \cite{BCFFS11}.
Then we have the following result.

\begin{theorem}\label{thm:randomized}
Let $C$ be a continuous FOS and let $R= R(C)$ be a discrete FOS using our randomized rounding transformation.
In an arbitrary round $t$ we have w.h.p.\footnote{Throughout this paper, w.h.p.\ means with probability at least $1-n^{-\alpha}$ for some constant $\alpha> 0$.}
\[\left|\X{R}{k}{t} - \x{{C}}{k}{t}\right| = O \left( \Upsilon^{C}(G) \cdot \sqrt{d\log n} \,\right) \enspace .\]
\end{theorem}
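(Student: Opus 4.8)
The plan is to start from the exact deterministic deviation formula in \lemref{deterministic}, applied to the two discrete processes $R$ and $D$ where $D$ is the ``always round down'' discretization of $C$; but more directly, since \lemref{deterministic} already expresses $\X{R}{k}{t}-\x{C}{k}{t}$ as a double sum of rounding errors weighted by contributions, the task is to bound that sum when the rounding errors are the random quantities produced by the randomized rounding algorithm. Concretely, for each round $s$ and each node $i$, the excess-token mechanism sends $Z_{i,j}(s)$ tokens to each neighbor $j$, and the rounding error on edge $\{i,j\}$ decomposes as $\e{i}{j}{s} = \{\hat Y_{i,j}(s)\} - Z_{i,j}(s)$ (for the ``forward'' orientation), whose conditional expectation given the history up to round $s$ is zero by construction of the sending probabilities $\{\hat Y_{i,j}(s)\}/r$ and the inflation factor $r/\lceil r\rceil$. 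So the first step is: fix $k$ and $t$, write $\X{R}{k}{t}-\x{C}{k}{t} = \sum_{s} \sum_{\{i,j\}} \e{i}{j}{t-s}\, \Q{C}{k}{i\to j}{s}$, and reorganize this as a sum over rounds of martingale differences, exactly mirroring the argument of \cite{BCFFS11}.

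Second, I would set up the martingale. Order the nodes (or node--round pairs) and reveal the randomness of round $t-s$ for node $i$ one at a time; the Doob martingale of the partial sums has bounded differences because (a) at node $i$ in round $t-s$ the number of excess tokens moved is $\lceil r\rceil \le d$, so each $Z_{i,j}$ and hence each $\e{i}{j}{t-s}$ is bounded in absolute value by $O(d)$ — actually the per-edge error is in $[-1, d]$ but the total excess is at most $d$ — and (b) each unit of excess token at node $i$ affects node $k$ by at most $\max_{j \in N(i)} |\Q{C}{k}{i\to j}{s}|$ in magnitude, since moving a token to neighbor $j$ versus leaving it corresponds to one unit of flow along edge $\{i,j\}$. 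Thus the martingale difference associated with node $i$ in round $t-s$ is bounded by $O\big(\max_{j\in N(i)} |\Q{C}{k}{i\to j}{s}|\big)$ in the $L_\infty$ sense, and its conditional variance is bounded by something like $\{\hat Y_{i,j}(s)\}(1-\{\hat Y_{i,j}(s)\}) \cdot (\ldots)^2$ summed appropriately — in any case by $O\big(\max_{j\in N(i)}(\Q{C}{k}{i\to j}{s})^2\big)$ up to a factor accounting for the fact that one node's excess tokens split among up to $d$ neighbors. Summing the squared bounds over all $i$ and all $s$ gives exactly $(\Upsilon^{C}(G))^2$, possibly times an extra $d$ coming from the multinomial nature of the split (each of the $\le d$ tokens independently chooses a neighbor).

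Third, apply a martingale concentration inequality — Azuma–Hoeffding if one only uses the difference bounds, giving a bound of $O\big(\Upsilon^{C}(G)\sqrt{d \log n}\big)$ once the extra $d$ from the split is absorbed into the difference-bound version, or a Bernstein/Freedman-type inequality using the variance bound for a tighter constant; either way one gets, with probability $1-n^{-\Omega(1)}$, that $|\X{R}{k}{t}-\x{C}{k}{t}| = O\big(\Upsilon^{C}(G)\sqrt{d\log n}\big)$ for the fixed $k$. A union bound over the $n$ choices of $k$ (and, if needed, over the rounds up to $t$, absorbing the logarithm) finishes the proof. The one subtlety worth stating carefully is why the infinite sum $\sum_{s=0}^{\infty}$ in $\Upsilon^{C}(G)$ converges and dominates the finite sum up to round $t$: this follows because $C$ is a convergent diffusion process, so the contributions $\Q{C}{k}{i\to j}{s}$ decay geometrically in $s$ (the difference of two load vectors started from unit vectors shrinks like $\lambda^{s}$, or like the SOS convergence rate), making the tail negligible; in the FOS case this is immediate from $\|M^s(\bi-\bj)\|$ decaying, and it is precisely this quantity that $\Upsilon^{C}(G)$ packages.

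The main obstacle I anticipate is bookkeeping the combinatorics of the excess-token split correctly so that the extra factor is exactly $\sqrt d$ and not, say, $d$: one must decide whether to treat the $\lceil r \rceil \le d$ excess tokens at a node as independent multinomial trials (giving variance proportional to $d$ times the per-token squared effect, hence an extra $\sqrt d$ after taking the square root in concentration) or to use the coarser worst-case difference bound (which would naively cost a full factor $d$ and must be sharpened via the variance term). Getting this factor right is exactly the point where the proof of \cite{BCFFS11} is delicate, and reproducing it in the present, more general, setting — where $\Q{C}{k}{i\to j}{s}$ is an arbitrary contribution rather than one tied to a specific regular-graph transition matrix — is the crux; everything else is a faithful transcription of their argument through the lens of \lemref{deterministic}.
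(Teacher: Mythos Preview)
Your proposal is correct and follows essentially the same approach as the paper: express the deviation via \lemref{deterministic}, set up a Doob martingale on the randomness of the excess-token destinations, bound the differences by $\max_{j\in N(i)}|\Q{C}{k}{i\to j}{s}|$, and apply Azuma--Hoeffding followed by a union bound over $k$. The one point you flag as the ``main obstacle'' is resolved in the paper exactly as you conjecture, by taking the filtration at the granularity of individual excess tokens---indexing the exposure martingale by triples $(s,i,b)$ with $1\le b\le d$, for $tnd$ variables in total---so that each $c_\ell \le 2\max_{j\in N(i_1)}|\Q{C}{k}{i_1\to j}{t-s_1}|$ and the factor $d$ appears once in $\sum_\ell c_\ell^2$ from summing over $b$; plain Azuma then suffices and no Bernstein/Freedman variance argument is needed.
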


The proof of \thmref{randomized} relies on the fact that FOS is a
linear process 
and hence the estimation
of \lemref{deterministic} can be used as a basis for the
randomized analysis.
The proof is similar to the proof of \cite{BCFFS11}, the difference is
that we use $\Q{C}{k}{i\rightarrow j}{t}$'s instead of the diffusion matrix. 
%
We begin the proof of Theorem \ref{thm:randomized} with a simple observation.

\begin{observation}\label{obs:randsoserr}
The following statements are true (Recall that $\{a\}$ denotes $a - \lfloor a \rfloor$).
\begin{enumerate}
\item
If $\hat{Y}_{i,j}(t)\ge 0$ then $
 \E{i}{j}{t} =
\{\hat{Y}_{i,j}(t) \} - Z_{i,j}(t);
$
\item
$\Exbig{ \E{i}{j}{t}} = 0$.
\end{enumerate}
\end{observation}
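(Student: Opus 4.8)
The first claim is immediate from the definitions and needs no real work: when $\hat{Y}_{i,j}(t)\ge 0$ the rounding scheme sets $\Y{R}{i}{j}{t}=\lfloor\hat{Y}_{i,j}(t)\rfloor+Z_{i,j}(t)$, so substituting into $\E{i}{j}{t}=\hat{Y}_{i,j}(t)-\Y{R}{i}{j}{t}$ and using $\hat{Y}_{i,j}(t)-\lfloor\hat{Y}_{i,j}(t)\rfloor=\{\hat{Y}_{i,j}(t)\}$ yields the stated identity.

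For the second claim, the plan is to compute the conditional expectation of $Z_{i,j}(t)$ given the network state at the start of round $t$ and then invoke the tower rule. Let $\mathcal{F}_t$ be the $\sigma$-algebra generated by all the random rounding decisions made in rounds $0,\dots,t-1$, and note that $\xv{D}{t}$, hence $\hat{Y}(t)=C(\xv{D}{t})$ and the excess $r=\sum_{j:\hat{Y}_{i,j}(t)\ge 0}\{\hat{Y}_{i,j}(t)\}$, are $\mathcal{F}_t$-measurable. By the description of the rounding algorithm, conditioned on $\mathcal{F}_t$ node $i$ offers $\lceil r\rceil$ tokens, each of which is actually sent with probability $r/\lceil r\rceil$, and a sent token is routed to neighbor $j$ with probability $\{\hat{Y}_{i,j}(t)\}/r$; hence each of the $\lceil r\rceil$ tokens lands on $j$ with probability $\{\hat{Y}_{i,j}(t)\}/\lceil r\rceil$. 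By linearity of expectation, $\Ex{Z_{i,j}(t)\mid\mathcal{F}_t}=\lceil r\rceil\cdot\{\hat{Y}_{i,j}(t)\}/\lceil r\rceil=\{\hat{Y}_{i,j}(t)\}$ (the degenerate case $r=0$ being trivial). Combining this with the first claim, $\Ex{\E{i}{j}{t}\mid\mathcal{F}_t}=\{\hat{Y}_{i,j}(t)\}-\{\hat{Y}_{i,j}(t)\}=0$ whenever $\hat{Y}_{i,j}(t)\ge 0$, and taking expectations gives $\Ex{\E{i}{j}{t}}=0$ in this case.

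It remains to treat the case $\hat{Y}_{i,j}(t)<0$. Then $\hat{Y}_{j,i}(t)=-\hat{Y}_{i,j}(t)>0$, so the previous case applied to the ordered pair $(j,i)$ gives $\Ex{\E{j}{i}{t}}=0$; since the rounding error is skew-symmetric, $\E{i}{j}{t}=-\E{j}{i}{t}$, and therefore $\Ex{\E{i}{j}{t}}=0$ as well. The only point requiring a little care is that the scheduled loads $\hat{Y}_{i,j}(t)$ are themselves random (they depend on the earlier rounding through $\xv{D}{t}$), which is precisely why the expectation has to be taken conditionally on $\mathcal{F}_t$ first and then averaged; beyond that there is no genuine obstacle, the argument being a one-line computation.
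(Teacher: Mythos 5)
Your proposal is correct and follows essentially the same route as the paper: part (1) by direct substitution of the definitions, part (2) by computing that each of the $\lceil r\rceil$ offered tokens reaches $j$ with probability $\{\hat{Y}_{i,j}(t)\}/\lceil r\rceil$ so that $\Ex{Z_{i,j}(t)}=\{\hat{Y}_{i,j}(t)\}$, and the case $\hat{Y}_{i,j}(t)<0$ by skew-symmetry of the rounding error. Your explicit conditioning on $\mathcal{F}_t$ followed by the tower rule is a slightly more careful rendering of the same computation (the paper leaves the conditioning on the random quantity $\hat{Y}_{i,j}(t)$ implicit), but it is not a different argument.
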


The first statement of \autoref{obs:randsoserr} holds by definition, since $ \E{i}{j}{t} = \hat{Y}_{i,j}(t) -\Y{R}{i}{j}{t}$ while $\Y{R}{i}{j}{t} = \left\lfloor\hat{Y}_{i,j}(t)\right\rfloor + Z_{i,j}(t)$.
For the second statement, first suppose that $\hat{Y}_{i,j}(t)\ge 0$.  note that $Z_{i,j}(t)$ can be expressed as a sum of $\lceil r\rceil$ identically distributed Bernoulli random variables each of which is one with probability $(r/\lceil r\rceil) \cdot\left(\{\hat{Y}_{i,j}(t) \}/r\right) $.
Thus we have $\Exbig{Z_{i,j}(t) } = \{\hat{Y}_{i,j}(t) \} $ and from there (2) follows from (1). In the case $\hat{Y}_{i,j}(t)< 0$, we have $\hat{Y}_{j,i}(t)=-\hat{Y}_{i,j}(t)> 0$. Thus by the first case we have $\Exbig{ \E{j}{i}{t}} = 0$ and therefore $\Exbig{ \E{i}{j}{t}} =-\Exbig{ \E{j}{i}{t}} = 0$.
 Let $f_k $ denote the difference in the load of~$k$ in round~$t$ of  $R$ and $C$.
In the following, we first observe that $f_k$ is zero in expectation, and then show that it is well concentrated around its average.
\begin{observation}
$\Exbig{f_k}=0 $
\end{observation}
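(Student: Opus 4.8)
The plan is to use the deterministic deviation formula from \lemref{deterministic} applied to the randomized discrete process $R$ and take expectations. By \lemref{deterministic}, writing the randomized rounding errors as $\E{i}{j}{t-s}$ instead of the deterministic ones, we have
\[
f_k = \X{R}{k}{t} - \x{C}{k}{t}
 = \sum_{s=1}^{t} \sum_{\{i,j\}\in E} \E{i}{j}{t-s}\, \Q{C}{k}{i\rightarrow j}{s} \enspace .
\]
Here it is important to note that the contributions $\Q{C}{k}{i\rightarrow j}{s}$ are deterministic quantities depending only on the continuous process $C$ and the graph, hence they are not random; all the randomness sits in the rounding errors $\E{i}{j}{t-s}$.

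First I would justify that \lemref{deterministic} is indeed applicable: $C$ is a linear FOS by \lemref{linear}, and $R = R(C)$ is a discrete version of $C$ in the sense of \defref{rounding} (its rounding scheme $R_D$ is exactly the randomized rounding algorithm described above). The proof of \lemref{deterministic} nowhere uses that the rounding is deterministic — it only uses load conservation over each edge ($\e{i}{j}{s} = -\e{j}{i}{s}$, which holds for the randomized errors by \obsref{randsoserr} and the antisymmetric definition of $\Y{R}{i}{j}{t}$) and linearity of $C$. So the telescoping argument goes through verbatim with $\E{i}{j}{s}$ in place of $\e{i}{j}{s}$, conditioned on any realization of the process.

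Then I would take expectations. By linearity of expectation,
\[
\Exbig{f_k} = \sum_{s=1}^{t} \sum_{\{i,j\}\in E} \Exbig{\E{i}{j}{t-s}}\, \Q{C}{k}{i\rightarrow j}{s} \enspace ,
\]
and by the second part of \obsref{randsoserr} every term $\Exbig{\E{i}{j}{t-s}}$ vanishes, so $\Exbig{f_k} = 0$.

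The one subtlety — and the only real obstacle — is that the rounding error $\E{i}{j}{t-s}$ in round $t-s$ depends on the load vector $\xv{D}{t-s}$, which is itself random (it depends on the rounding decisions made in rounds $0,\dots,t-s-1$). So $\Exbig{\E{i}{j}{t-s}} = 0$ must be read as a statement about the conditional expectation given the history up to round $t-s$: \obsref{randsoserr}(2) shows that for \emph{any} fixed value of $\hat Y_{i,j}(t-s)$ the rounding is unbiased, hence $\Exbig{\E{i}{j}{t-s}\mid \xv{D}{t-s}} = 0$, and the tower property then gives $\Exbig{\E{i}{j}{t-s}} = 0$ unconditionally. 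I would make this point explicit (it is exactly the martingale-difference structure that the subsequent concentration argument will exploit), but beyond that the proof is immediate.
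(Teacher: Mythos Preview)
Your proposal is correct and follows exactly the same route as the paper's proof, which is simply the one-line observation that the claim follows from \lemref{deterministic} and \obsref{randsoserr}(2) by linearity of expectation. You have merely (and usefully) unpacked the details, including the tower-property justification for why $\Exbig{\E{i}{j}{t-s}}=0$ holds unconditionally despite the dependence on the random history; the paper leaves this implicit.
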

\begin{proof}
The statement follows from \lemref{deterministic} and \obsref{randsoserr}.(2) by  the  linearity of expectation.
\end{proof}

\begin{proof}[Proof of \autoref{thm:randomized}]
As in \cite{BCFFS11}, we are going to use the method of averaged bounded differences to obtain concentration results for the random variable $f_k $.
For a fixed initial load vector $X{(0)}$ the function
$f_k$ depends only on the randomly chosen destinations of the excess tokens. There are $t$ rounds, $n$ nodes, and at most $d$ excess tokens per node per round. Similar to \cite{BCFFS11} we describe these random choices by a sequence of $t  n  d$ random variables, $Y_{1}, Y_{2}, \dots, Y_{tnd}$.  For any $\ell$ with $1 \leq \ell \leq t  n d$, let $(s,i,b) $ be such that
$\ell=snd + (i-1)\,d + b$ (note that $(s,i,b)$ is the $\ell$-th largest element in the sequence). Then $Y_{\ell}$ refers to the destination of the $b$\nobreakdash-th excess token of vertex~$i$ in round~$s$ (if there is one). More precisely,
\begin{equation*}
 Y_{\ell} =
 \begin{cases}
   j & \text{if \;
 the $b$\nobreakdash-th excess token of the vertex~$i$ in }\\
   &\text{ round~$s$ is sent to~$j$, and
   }\\
   0 & \text{otherwise.}
 \end{cases}
\end{equation*}

\noindent
Let $\mathbb{Y}_{\boldsymbol{i}}$ denote $Y_i,\dots, Y_1$. To apply the method of averaged bounded differences, we need to bound the difference sequence below.
\begin{align}
  \left| \Exbig{ f_k \, \mid \, \yell } -
 \Exbig{f_k \, \mid \, \yellmin } \right|. \label{eq:maxdiff}
\end{align}
As in \cite{BCFFS11}, we consider a fixed $\ell$ that corresponds to~$(s_1,i_1,b_1)$ in the lexicographic ordering.

To bound \eq{maxdiff}, we write
\begin{align*}
 c_{\ell}&= \big| \Exbig{ f_k \, \mid \, \yell } - \Exbig{f_k \, \mid \, \yellmin } \big|  \\
&\leq \sum_{s=0}^{t} \sum_{\{i,j\}\in E}
  \left| \Exbig{ \E{i}{j}{s} \mid \, \yell } -
 \Exbig{ \E{i}{j}{s} \mid \, \yellmin }  \right|\\
& \phantom{{}={}} \cdot
\left|\Q{C}{k}{i\rightarrow j}{t-s}\right|
\end{align*}
As in \cite{BCFFS11} we split the sum over $s$ into the three parts $1 \leq s < s_1$, $s = s_1$,
and $s_1 < s \leq t$.
In the following we show that the  sums over $s < s_1$ and $s > s_1$ are both  zero while the part $s = s_1$ is upper bounded by
$ 2 \cdot   \max_{j \in N(i_1)} \left|\Q{C}{k}{i\rightarrow j}{t-s}\right|$.

\noindent\textbf{Case $\mathbf{s < s_1}$:}
For every $\{i,j\} \in E$, $\E{i}{j}{s}$ is already determined by $\mathbb{Y}_{\boldsymbol\ell-1}$.
 Hence,
\begin{align}
\label{eq:main:cases:lower}
& \sum_{s=1}^{s_1-1} \sum_{\{i,j\}\in E}
  \big| \Exbig{ \E{i}{j}{s} \mid \, \yell } -
 \Exbig{ \E{i}{j}{s} \mid \,  \yellmin }  \big| \\
&  \cdot \big| \Q{C}{k}{i\rightarrow j}{t-s} \big|   = 0 \enspace . \notag
\end{align}

\noindent\textbf{Case $\mathbf{s = s_1}$:}
In this case, $\yellmin$ determines $\hat{Y}_{i,j}(t)$ and $\E{i}{j}{s}$ is only affected by $Z_{i,j}{(s)}$'s (see \obsref{randsoserr}).
\begin{align*}
   &  \sum_{\{i,j\}\in E}
 \big| \Exbig{ \E{i}{j}{s} \,\big| \, \yell } -
 \Exbig{ \E{i}{j}{s} \mid \,  \yellmin }  \big|\\
 &\phantom{{}={}} \cdot
 \big| \Q{C}{k}{i\rightarrow j}{t-s}\big| \\
&=  \sum_{i=1}^{n}\sum_{j:\hat{Y}_{i,j}(t) \ge 0}
 \Big(\Big| \Exbig{
   \{\hat{Y}_{i,j}(t) \} - Z_{i,j}(t) \mid \, \yell } \\
 & \phantom{{}={}} -
 \Exbig{ \{\hat{Y}_{i,j}(t) \} - Z_{i,j}(t) \mid \,  \yellmin }  \Big|\Big)
 \cdot
   \big| \Q{C}{k}{i\rightarrow j}{t-s} \big|\\
&=  \sum_{\{i,j\}\in E}
 \Big| \Exbig{
 Z_{i,j}^{(s)} \mid \, \yell }   -
   \Exbig{ Z_{i,j}^{(s)} \mid \,  \yellmin }  \Big|\\
& \phantom{{}={}} \cdot
   \big| \Q{C}{k}{i\rightarrow j}{t-s} \big| \\
&=  \sum_{\{i,j\}\in E}
 \big| \Lambda_{i,j}^{(s)} \big|  \cdot
 \big|\Q{C}{k}{i\rightarrow j}{t-s} \big| \notag \\
&\leq \sum_{i=1}^{n}
 \Big(\max_{j\in N(i)} \big| \Q{C}{k}{i\rightarrow j}{t-s} \big|\Big)\cdot
 \sum_{j\in N(i)}
 \big| \Lambda_{i,j}^{(s)} \big| \enspace, \tageq
 \label{eq:newerboundtwo}
\end{align*}
where we used
\[  \Lambda_{i,j}^{(s)} = \ExBig{  Z_{i,j}^{(s)} \mid \, \yell }   -  \ExBig{   Z_{i,j}^{(s)} \mid \, \yellmin } \]
to simplify the notation.

As in \cite{BCFFS11}, to bound \eq{newerboundtwo} we consider  $\sum_{\{i,j\}\in E} \big| \Lambda_{i,j}^{(s)} \big|$ for $i=i_1$ and $i \neq i_1$ separately.

\smallskip

\noindent\textbf{Case 1:} Let $i = i_1$.
For each  $j \in N(i_1)$, define indicator Bernoulli random variables $I_{u,j}, 1 \le u \le d$, where $I_{u,j}$ is one if the $u$'th excess token of $i_1$ in round $s_1$ goes to $j$ and zero otherwise.
Note that $  Z_{i_1,j}^{(s_1)} = \sum_{1\le u\le d} I_{u,j} $.
Let $r = \sum_{j\in N(i_1)} \left\{\hat{Y}_{i_1,j}(s_1)\right\}$ so that $\lceil r \rceil \geq b_1$ be the number of excess tokens of $i_1$ in round~$s_1$.
Clearly,  $r$
and the destinations of the excess tokens considered in the previous rounds,
are already determined by $Y_{1},\dots,Y_{\ell-1}$. 
 The remaining receivers  $Y_{\ell+1},\dots,Y_{\ell+r-b_1}$
are chosen independently from $N(i_1)\cup \{i_1\}$.
Hence, the choice of $Y_{\ell}$ does not affect the distribution of $I_{u,j}$ except for $u = b_1$, and we have
\begin{align*}
\MoveEqLeft \ExBig{  Z_{i_1,j}{(s_1)} \mid \, \yell }   -  \ExBig{   Z_{i_1,j}{(s_1)} \mid \,  \yellmin }\\
&=
 \ExBig{  I_{1,j}+\dots+I_{\lceil r\rceil,j} \mid \, \yell }  \\
&\phantom{{}={}} -  \ExBig{  I_{1,j}+\dots+I_{\lceil r\rceil,j} \mid \,\yellmin }\\
&=  \ExBig{  I_{b_1,j} \mid \, \yell }   -  \ExBig{  I_{b_1,j} \mid \,  \yellmin }
\end{align*}

Let $w \in  N(i_1)\cup \{i_1\}$ be the destination of the $b_1$\nobreakdash-th excess
token of~$i_1$ in round~$s_1$, that is,
$ Y_{\ell} = w $ and hence,
\begin{equation*}
\Lambda_{i_1,w}^{(s_1)} = 1-\{\hat{Y}_{i_1,w}(t)\}/r \enspace .
\end{equation*}
For any $j \in  N(i_1) \setminus \{w\}$ we have
\begin{equation*}
\Lambda_{i_1,w}^{(s_1)} = -\{\hat{Y}_{i_1,j}(t)\}/r \enspace .
\end{equation*}
Hence,
\begin{align}
\!\! \sum_{j\in N(i_1)} \! \big| \Lambda_{i_1,j}^{(s_1)} \big|
&\le
 1-\{\hat{Y}_{i_1,w}(t)\}/\lceil r\rceil \\
& \phantom{{}={}} + \sum_{j\in N(i_1)\setminus \{w\}} \{\hat{Y}_{i_1,j}(t)\}/\lceil r\rceil \notag\\
&\leq 1 + \sum_{j\in N(i_1)} \{\hat{Y}_{i_1,j}(t)\}/\lceil r\rceil \notag\\
&\leq 2, \label{eq:martingalebound}
\end{align}
where the last inequality holds since \[ \sum_{j\in N(i_1)} \{\hat{Y}_{i_1,j}(t)\} = r \le \lceil r \rceil \enspace .\]

\noindent \textbf{Case 2:}$\ i \neq i_1$.
As $\ell$ corresponds to $(s_1,i_1,b_1)$, the random variable $Z_{i,j}{(s_1)}$ is independent of $Y_{\ell}$ when conditioned on $\mathbb{Y}_{\ell-1}$. Hence, similar to \cite{BCFFS11}, we have
\begin{align*}
  \sum_{\{i,j\}\in E} \big| \Lambda_{i,j}^{(s_1)}\big|
 & = \sum_{j: \{i,j\}\in E}  \left| \ExBig{
  Z_{i,j}^{(s)} \mid \, \yell }   -
 \ExBig{   Z_{i,j}^{(s)} \mid \,  \yellmin } \right|\\
 & = 0 \enspace .
\end{align*}

Combining Case 1 and Case 2 we obtain
\begin{align}
   &  \left( \max_{j\in N(i_1)} \big| \Q{C}{k}{i\rightarrow j}{t-s}\big| \right) \sum_{j: \{i_1,j\}\in E}
   \big| \Lambda_{i_1,j}^{(s)} \big| \notag\\&
+ \sum_{i \in V, i \neq i_1} \left( \max_{j \in N(i)} \big| \Q{C}{k}{i\rightarrow j}{t-s} \big| \right) \sum_{\{i,j\}\in E}
   \big| \Lambda_{i,j}^{(s)} \big|   \notag\\
   \leq {} & \max_{j \in N(i_1)} \big| \Q{C}{k}{i\rightarrow j}{t-s} \big| \cdot 2 + 0 \enspace .
   \label{eq:main:cases:middle}
\end{align}

\noindent\textbf{Case $\mathbf{s > s_1}$:}
Let $\widetilde{\ell}$ be the largest integer that corresponds to round $s-1$. Since $s > s_1$, we have $s - 1 \geq s_1$ and therefore $\widetilde{\ell} \geq \ell$. By the choice of $\widetilde{\ell}$, $Y_{\widetilde{\ell}},\dots,Y_1$ determine the load vector at the end of round $s_1$, $X^{(s_1)}$. By \obsref{randsoserr}, we obtain
$ \Exbig{ \E{i}{j}{s} \, \mid \, Y_{\widetilde{\ell}},\dots,Y_1 } = 0$,
and by the law of total expectation,
\begin{align*}
\MoveEqLeft \Exbig{ \E{i}{j}{s} \,\! \mid \,\! \yell } \\
&=  \Ex{ \Exbig{ \E{i}{j}{s} \,\! \mid \,\! Y_{\widetilde{\ell}},\dots,Y_1 } \,\! \mid \, Y_{\ell},Y_{\ell-1},\dots,Y_1 } \notag\\
&= \Ex{ 0  \,\! \mid \,\! Y_{\ell},Y_{\ell-1},\dots,Y_1 } = 0 \enspace .
\end{align*}
With the same arguments,
$ \Exbig{ \E{i}{j}{s} \, \mid \,  \yellmin } = 0 $, and thus
\begin{align}
\label{eq:main:cases:upper}\notag
&\sum_{s=s_1+1}^{t} \sum_{ \{i,j\}\in E}
  \big| \Exbig{ \E{i}{j}{s} \mid \, \yell } -
 \Exbig{ \E{i}{j}{s} \mid \,  \yellmin }  \big|\\
 & \cdot \big| \Q{C}{k}{i\rightarrow j}{t-s} \big| = 0 \enspace .
\end{align}

\medskip
\noindent
This finishes the case distinction.
Combining equations \eqref{eq:main:cases:lower}, \eqref{eq:main:cases:middle}, and \eqref{eq:main:cases:upper} for the three cases $s < s_1$, $s = s_1$, and $s > s_1$,
similar to \cite{BCFFS11} we obtain that for every fixed $1 \leq \ell \leq tnd$,
\begin{align*}
c_{\ell} &= \left| \Exbig{ f_k \, \mid \, \yell } -
 \Exbig{f_k \, \mid \, \yellmin } \right| \notag \\
 & \le \sum_{s=0}^{t} \sum_{\{i,j\}\in E} \left| \Exbig{ \E{i}{j}{s} \, \mid \, \yell } - \Exbig{ \E{i}{j}{s} \, \mid \,  \yellmin } \right|
\\& \phantom{{}={}} \cdot
 \big|  \Q{C}{k}{i_1\rightarrow j}{t-s_1} \big| \notag\\
 & = 0 + \max_{j\in N(i_1)} \big| \Q{C}{k}{i_1\rightarrow j}{t-s_1}\big| \cdot 2 + 0
 \notag\\
& = 2 \cdot \max_{j \in N(i_1)} \big| \Q{C}{k}{i_1\rightarrow j}{t-s_1} \big|.
\end{align*}

\noindent
Now we consider the sum of the error terms.
\begin{align}
\sum_{\ell=1}^{(t+1) n d} (c_{\ell})^2
 &\le
\sum_{s=0}^{t} \sum_{i=1}^{n} \sum_{b=1}^{d}
 \Big ( 2 \, \max_{j \in N(i)}
 \big| \Q{C}{k}{i\rightarrow j}{t-s} \big|  \Big)^2 \notag\\
& = 4 d \, \sum_{s=0}^{t} \sum_{i=1}^{n}
  \max_{j \in N(i)} \big ( \Q{C}{k}{i\rightarrow j}{s} \big)^2 \notag \\
 &\leq 4d \, \max_{k \in V} \bigg(  \sum_{s=0}^{\infty} \sum_{i=1}^{n}
 \max_{j \in N(i)} \big ( \Q{C}{k}{i\rightarrow j}{s} \big)^2 \bigg)
 \notag\\
&= 8d \, \left( \Upsilon^{\operatorname{C}}(G) \right)^2 \enspace . \label{eq:boundeddiff}
\end{align}
So  by Azuma's inequality \cite[p.~68]{DP09} we have for any $\delta \geq 0$,
\begin{equation*}
 \Pro{ |f_k| > \delta }
 \leq 2 \, \exp
 \big(- \delta^2 \big/ \big({2\sum_{\ell=1}^{t n d} (c_{\ell})^2 } \big)\big) \enspace .
\end{equation*}
Hence by choosing
$\delta= \Upsilon^{C}(G) \, \sqrt{ 32 d \, \ln n } $,
the probability above gets smaller than $2\,n^{-2}$. Applying the union bound
we obtain
\begin{equation*}
\Pro{ \exists k \in V \colon |f_k| > \delta } \leq  2 n^{-1} \enspace .
\end{equation*}
 This implies
\begin{equation*}
\PrBig{  \max_{i,j \in [n]} \big| \X{R}{i}{t} - \x{C}{i}{t} \big| \leq  \delta  } \geq 1 - 2n^{-1} \enspace ,
\end{equation*}
which finishes the proof.
\end{proof}

Using \thmref{randomized} we can also obtain concrete results for
randomized FOS processes as stated in the following theorems.
The first result holds for the homogeneous case and a special class of 
algorithms where $\alpha_{i,j}=1/(\gamma d)$ only. 
Recall that $d$ is the maximum degree.
 The same result was already shown in~\cite{SS12}.

\begin{observation}
Assume $s_1=s_2=\ldots= s_n$ and $\alpha_{i,j}=\frac{1}{\gamma d}$.
Let $C$ be a continuous FOS process and let $R= R(C)$ be a discrete FOS
process based
on the rounding algorithm applied on $C$. Then
\begin{enumerate}
\item[(1)] $ \Upsilon^{\operatorname{C}}(G) = O\left(\sqrt{{\gamma d}/{\left(2-2/\gamma\right)}}\right) \enspace .$
\item[(2)] For any round $t$ we have w.h.p.
\[\left|\x{R}{k}{t} - \x{\operatorname{C}}{k}{t}\right| = O\left(\sqrt{\frac{\gamma d}{2-2/\gamma}}\cdot\sqrt{d\log n} \right) \enspace .\]
\end{enumerate}
\end{observation}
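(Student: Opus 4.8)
The plan is to deduce part~(2) immediately from part~(1) together with \thmref{randomized}, so essentially all the work goes into bounding $\Upsilon^{C}(G)$. The first step is to make the contributions explicit. Since $C$ is homogeneous FOS, applying $C$ to a load vector coincides with multiplication by the diffusion matrix $M$, so applying $C$ for $s$ rounds to the unit vector $\bi$ produces $M^{s}\bi$, i.e.\ the $i$-th column of $M^{s}$. Hence, by \defref{contribution},
\[ \Q{C}{k}{i\rightarrow j}{s} = \big(M^{s}\big)_{k,i} - \big(M^{s}\big)_{k,j} , \]
and $\Upsilon^{C}(G)$ becomes exactly the refined local divergence of \cite{BCFFS11} specialized to this $M$; it remains to bound it when $\alpha_{i,j} = 1/(\gamma d)$, for which $M = I - \tfrac{1}{\gamma d} L$ with $L$ the combinatorial graph Laplacian.

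The second step passes from the coordinate-wise maximum to a Dirichlet form. Writing $e_{k}$ for the $k$-th standard basis vector and using $\max_{j\in N(i)}(\cdot)^{2} \le \sum_{j\in N(i)}(\cdot)^{2}$ together with the fact that each edge is counted twice,
\[ \sum_{i=1}^{n} \max_{j\in N(i)} \big( (M^{s})_{k,i} - (M^{s})_{k,j} \big)^{2} \le 2 \sum_{\{i,j\}\in E} \big( (M^{s})_{k,i} - (M^{s})_{k,j} \big)^{2} = 2\, (M^{s} e_{k})^{\top} L \, (M^{s} e_{k}) . \]
Substituting $L = \gamma d\,(I - M)$ and using the symmetry of $M$, this equals $2\gamma d \cdot e_{k}^{\top} (M^{2s} - M^{2s+1}) e_{k}$. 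Summing over $s \ge 0$, the series $\sum_{s\ge 0} (M^{2s} - M^{2s+1})$ vanishes on the all-ones eigenspace of $M$ and, on its orthogonal complement, where all eigenvalues $\mu$ of $M$ satisfy $|\mu| < 1$, converges to $(I - M^{2})^{-1}(I - M) = (I + M)^{-1}$.

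The final step bounds the operator norm of $(I + M)^{-1}$ on the complement of $\mathbf 1$ by $1/(1 + \mu_{\min})$, where $\mu_{\min} \ge 1 - 2/\gamma$ because the largest Laplacian eigenvalue of a graph with maximum degree $d$ is at most $2d$; this is also where $\gamma > 1$ enters, guaranteeing $\mu_{\min} > -1$ so that the geometric series converges, and it produces exactly the factor $2 - 2/\gamma$ in the claim. Combined with $\lVert e_{k} \rVert = 1$ this yields $\sum_{s\ge 0} e_{k}^{\top}(M^{2s} - M^{2s+1}) e_{k} \le 1/(2 - 2/\gamma)$, hence $\big(\Upsilon^{C}(G)\big)^{2} \le 2\gamma d/(2 - 2/\gamma)$, which is part~(1); part~(2) then follows by substituting this bound into \thmref{randomized}. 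I expect the only delicate point to be the spectral bookkeeping — correctly projecting out the eigenvalue~$1$ of $M$ and verifying the lower bound $\mu_{\min} \ge 1 - 2/\gamma$ — while the remaining manipulations are routine linear algebra.
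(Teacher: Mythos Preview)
Your argument is correct. The paper itself does not give a proof of this observation; it simply records it as a known result, attributing it to~\cite{SS12} and remarking that those authors ``applied a potential function in order to estimate $\Upsilon^{C}(G)$.'' So there is no in-paper proof to compare against.

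Your route is a clean spectral telescoping argument: identify the contributions as $(M^{s})_{k,i}-(M^{s})_{k,j}$, pass to the Dirichlet form $(M^{s}e_{k})^{\top}L(M^{s}e_{k})$, use $L=\gamma d(I-M)$ to turn the sum over $s$ into the telescoping series $\sum_{s\ge 0}\big(M^{2s}-M^{2s+1}\big)$, and evaluate it spectrally as $(I+M)^{-1}$ on $\mathbf{1}^{\perp}$, bounded by $1/(2-2/\gamma)$ via the Laplacian bound $\lambda_{\max}(L)\le 2d$. This is essentially the same computation that underlies the potential-function proof in~\cite{SS12} (the quantity $e_{k}^{\top}M^{2s}e_{k}=\sum_{i}(M^{s}_{k,i})^{2}$ \emph{is} the potential there, and the one-step drop is exactly $e_{k}^{\top}(M^{2s}-M^{2s+1})e_{k}$), but you have packaged it more transparently as a spectral identity rather than a potential decrease. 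The only assumption you should make explicit is $\gamma>1$, needed for $\mu_{\min}>-1$ and hence for convergence of the geometric series; you flag this, so the argument is complete. Part~(2) then follows from \thmref{randomized} exactly as you say.
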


In \cite{SS12} the authors applied a potential function in order to estimate 
$ \Upsilon^{C}(G)$. This proof relies heavily on the fact that the transition probabilities
are uniform for all edges, which is not the case for the heterogeneous model  
or the case where the
$\alpha_{i,j}$ depend on $d_i$ and $d_j$ only.
The next result is more general and applies to both of these cases as well.

\begin{theorem}\label{thm:fosrand}

Let $C$ be a continuous FOS process and let $R= R(C)$ be a discrete FOS
process based
on the rounding algorithm applied on $C$. Then
\begin{enumerate}
\item[(1)] $ \Upsilon^{\operatorname{C}}(G) = O\left(\sqrt{{d\cdot \log s_{\max}}/{\left(1-\lambda\right) }}\right) \enspace .$
\item[(2)] For any round $t$ we have w.h.p.
\[\left|\x{R}{k}{t} - \x{\operatorname{C}}{k}{t}\right| = O\left(d\cdot \sqrt{\frac{\log n\cdot \log s_{\max}}{1-\lambda }}\right) \enspace .\]
\end{enumerate}
\end{theorem}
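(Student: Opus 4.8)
The plan is to deduce part~(2) from part~(1) and then to concentrate on the bound~(1) on the refined local divergence $\Upsilon^{\operatorname{C}}(G)$. The first reduction is immediate: heterogeneous continuous FOS is linear (\lemref{linear}), so \thmref{randomized} applies and gives $\bigl|\x{R}{k}{t}-\x{\operatorname{C}}{k}{t}\bigr| = O\bigl(\Upsilon^{\operatorname{C}}(G)\cdot\sqrt{d\log n}\bigr)$; substituting $\Upsilon^{\operatorname{C}}(G)=O\bigl(\sqrt{d\log s_{\max}/(1-\lambda)}\bigr)$ from~(1) yields exactly $O\bigl(d\sqrt{\log n\log s_{\max}/(1-\lambda)}\bigr)$. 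So the real content is~(1).

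For~(1), note that heterogeneous continuous FOS is the linear map $\bx\mapsto M\bx$ with $M = I-LS^{-1}$, so applying $C$ for $s$ rounds to the unit vector $\bi$ gives $M^{s}\bi$, and \defref{contribution} reads $\Q{C}{k}{i\rightarrow j}{s} = (M^{s})_{k,i}-(M^{s})_{k,j}$. I would reduce the inner double sum of $\Upsilon^{\operatorname{C}}(G)^{2}$ to a Laplacian quadratic form: writing $h_{s} := (M^{s})^{\top}e_{k}$ for the $k$-th row of $M^{s}$, so that $\Q{C}{k}{i\rightarrow j}{s}=h_{s}(i)-h_{s}(j)$, and bounding $\max_{j\in N(i)}(\cdot)^{2}\le\sum_{j\in N(i)}(\cdot)^{2}$, one obtains
\[
 \sum_{i}\max_{j\in N(i)}\bigl(h_{s}(i)-h_{s}(j)\bigr)^{2}\;\le\;2\sum_{\{i,j\}\in E}\bigl(h_{s}(i)-h_{s}(j)\bigr)^{2},
\]
an edge sum that is within an $O(d)$ factor of the diffusion Dirichlet form, since the weights $\alpha_{i,j}$ are $\Theta(1/d)$. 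It therefore suffices to bound $\sum_{s\ge0}$ of this quantity, uniformly over the source $k$, by $O\bigl(d\log s_{\max}/(1-\lambda)\bigr)$.

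The main tool is the spectral decomposition of $M$. Although $M$ is not symmetric it is self-adjoint for the inner product $\langle x,y\rangle := x^{\top}S^{-1}y$; equivalently $\Pi := S^{-1/2}MS^{1/2}$ is symmetric and entrywise nonnegative, with an orthonormal eigenbasis $w_{1},\dots,w_{n}$ and real eigenvalues $1=\mu_{1}>\mu_{2}\ge\cdots\ge\mu_{n}$, where $w_{1}\propto S^{1/2}\bOne$ and $|\mu_{a}|\le\lambda$ for $a\ge2$. Then $h_{s}=\sqrt{s_{k}}\,S^{-1/2}\Pi^{s}e_{k}$, its stationary component equals $(s_{k}/s)\bOne$ (which the Laplacian annihilates), and a direct computation yields a per-round bound of the form $O(d\,s_{k})\cdot\sum_{a\ge2}\mu_{a}^{2s}(1-\mu_{a})\,w_{a}(k)^{2}$. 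Summing this geometric series over $s$ and maximizing over $k$ already gives $\Upsilon^{\operatorname{C}}(G)^{2}=O\bigl(d\,s_{k}(1-w_{1}(k)^{2})/(1-\lambda)\bigr)=O\bigl(d\,s_{\max}/(1-\lambda)\bigr)$ — correct in form, but a factor of $s_{\max}$ too large.

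The crux, and where I expect the genuine difficulty, is improving this $s_{\max}$ to $\log s_{\max}$. The plan is to complement the decaying estimate above — which only becomes effective once the chain governed by $\Pi$ has essentially mixed — with a second, speed-robust estimate on the per-round quantity that does not deteriorate with the $s_{i}$ during the transient phase, exploiting that $h_{s}$ is, up to rescaling, a distribution of the reversible dual walk; then split $\sum_{s\ge0}$ at the crossover round $s^{\star}=\Theta\bigl(\log s_{\max}/(1-\lambda)\bigr)$, so that the first $s^{\star}$ rounds contribute $O(s^{\star}\cdot d)=O\bigl(d\log s_{\max}/(1-\lambda)\bigr)$ and the tail is a geometric series contributing $O\bigl(d/(1-\lambda)\bigr)$. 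Pinning down this transient-phase estimate, so that its total cost over the $\Theta\bigl(\log s_{\max}/(1-\lambda)\bigr)$ unmixed rounds is only $O\bigl(d\log s_{\max}/(1-\lambda)\bigr)$ rather than carrying an extra factor $s_{\max}$, is the one delicate point; the symmetrization of $M$, the passage to the Dirichlet form, the geometric summation in the mixed regime, and the deduction of~(2) from~(1) are all routine.
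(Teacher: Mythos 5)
Your plan is essentially the paper's proof: part (2) is deduced from part (1) via \thmref{randomized}; the contributions are $\Q{C}{k}{i\rightarrow j}{s}=M^s_{k,i}-M^s_{k,j}$; the sum over $s$ is split at $t_1=\Theta\bigl(\log s_{\max}/(1-\lambda)\bigr)$; and the tail is handled by the centered spectral bound $\sum_i\bigl(M^t_{k,i}-s_k/s\bigr)^2\le 2s_{\max}\lambda^{2t}$ (the paper's \lemref{msecondnorm}, proved with exactly the symmetrization you describe, via the orthonormal left-eigenbasis $\bu_i=S^{-1}\bv_i$), with $\lambda^{2t_1}\le 1/s_{\max}$ absorbing the $s_{\max}$ factor. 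The one step you leave open --- a speed-robust per-round bound for the transient phase --- is in fact the easiest part of the argument and needs no new idea: simply do not center. Bounding $\bigl(M^t_{k,i}-M^t_{k,j}\bigr)^2\le 2\bigl(M^t_{k,i}\bigr)^2+2\bigl(M^t_{k,j}\bigr)^2$ gives a per-round contribution of at most $4d\sum_i\bigl(M^t_{k,i}\bigr)^2=4d\,\|\hat{\mathbf{k}}M^t\|_2^2\le 4d$, since expanding $\hat{\mathbf{k}}$ in the same orthonormal eigenbasis and using $|\lambda_i|\le 1$ yields $\|\hat{\mathbf{k}}M^t\|_2\le\|\hat{\mathbf{k}}\|_2=1$ with no dependence on the speeds; summing over the $t_1$ transient rounds gives the desired $O\bigl(d\log s_{\max}/(1-\lambda)\bigr)$. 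Your detour through the Dirichlet form and the $(1-\mu_a)$ weights is unnecessary --- the crude $2a^2+2b^2$ bound suffices in both regimes, centered in the tail and uncentered in the head.
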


 To show \autoref{thm:fosrand} we first show the following lemma.

\begin{lemma}\label{lem:msecondnorm}
For an arbitrary $1\le k\le n$,  let the  vector $\ba$ be such that $\ba_i =  M^t_{k,i}- \frac {s_k} s $.
Then we have
$$\|\ba\|_2^2 \le 2\, s_{\max}\,\lambda^{2t} $$
\end{lemma}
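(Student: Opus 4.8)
The plan is to work in the basis of eigenvectors of the heterogeneous diffusion matrix $M = I - LS^{-1}$. The key observation is that although $M$ itself is not symmetric, it is similar to a symmetric matrix: conjugating by $S^{1/2}$ gives $\widetilde{M} = S^{-1/2} M S^{1/2} = I - S^{-1/2} L S^{-1/2}$, which is symmetric (since $L$ is symmetric). Hence $\widetilde{M}$ has an orthonormal eigenbasis $v_1,\dots,v_n$ with real eigenvalues $\mu_1 = 1 > \mu_2 \ge \dots \ge \mu_n$, where $|\mu_j| \le \lambda$ for $j \ge 2$, and the top eigenvector is $v_1 = S^{1/2}\mathbf{1}/\|S^{1/2}\mathbf{1}\|_2$ with $\|S^{1/2}\mathbf{1}\|_2^2 = s$. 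The entries of powers of $M$ can then be read off from those of $\widetilde{M}^t$ via $M^t_{k,i} = \sqrt{s_k/s_i}\,\widetilde{M}^t_{k,i}$.

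First I would write $M^t_{k,i} = \sqrt{s_k/s_i}\,(\widetilde{M}^t)_{k,i} = \sqrt{s_k/s_i}\sum_{j=1}^n \mu_j^t (v_j)_k (v_j)_i$. Splitting off the $j=1$ term, the contribution of $v_1$ is $\sqrt{s_k/s_i}\cdot (v_1)_k (v_1)_i = \sqrt{s_k/s_i}\cdot (\sqrt{s_k}/\sqrt{s})(\sqrt{s_i}/\sqrt{s}) = s_k/s$, which is exactly the term being subtracted. Therefore $\ba_i = M^t_{k,i} - s_k/s = \sqrt{s_k/s_i}\sum_{j=2}^n \mu_j^t (v_j)_k (v_j)_i$. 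Next I would compute the squared norm:
\begin{align*}
\|\ba\|_2^2 &= \sum_{i=1}^n \frac{s_k}{s_i}\left(\sum_{j=2}^n \mu_j^t (v_j)_k (v_j)_i\right)^2 \le s_k \sum_{i=1}^n \left(\sum_{j=2}^n \mu_j^t (v_j)_k (v_j)_i\right)^2,
\end{align*}
using $s_i \ge 1$. The inner double sum, summed over $i$, is the squared Euclidean norm of the vector $\sum_{j=2}^n \mu_j^t (v_j)_k v_j$, and since the $v_j$ are orthonormal this equals $\sum_{j=2}^n \mu_j^{2t} (v_j)_k^2 \le \lambda^{2t}\sum_{j=2}^n (v_j)_k^2 \le \lambda^{2t}\sum_{j=1}^n (v_j)_k^2 = \lambda^{2t}$, where the final equality is Parseval applied to the $k$-th standard basis vector in the orthonormal basis $\{v_j\}$. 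Combining, $\|\ba\|_2^2 \le s_k \lambda^{2t} \le s_{\max}\lambda^{2t}$, which is even slightly stronger than claimed (the factor $2$ gives slack, e.g.\ absorbing the $t=0$ case or a crude bound $|\mu_j|\le\lambda$ that might only hold up to rounding).

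The main obstacle is simply making the non-symmetry of $M$ precise: one must be careful that the relevant eigenvalue bound $|\mu_j|\le\lambda$ for $j\ge 2$ refers to the eigenvalues of the \emph{symmetrized} matrix $\widetilde{M}$ (equivalently of $M$, since similarity preserves the spectrum), and that the stationary vector of $M$ corresponds to $v_1 = S^{1/2}\mathbf{1}/\sqrt{s}$ under the conjugation — this is what makes the $j=1$ term cancel exactly against $s_k/s$. Everything else is the standard spectral-decomposition-plus-Parseval computation; the only inequalities used are $s_i \ge 1$ and $|\mu_j| \le \lambda$.
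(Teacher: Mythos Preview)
Your proof is correct and uses the same underlying idea as the paper --- symmetrize $M$ to get an orthonormal eigenbasis, then bound via the spectral gap --- but the execution differs in a way worth noting. The paper writes $\ba = \hat{\ba}\,M^t$ with $\hat{\ba} = \hat{\mathbf{k}} - (s_k/s)\mathbf{1}_n$, observes that $S^{-1}M^t$ is symmetric so the left eigenvectors $\bu_i = S^{-1}\bv_i$ form an orthonormal basis, uses $\hat{\ba}\perp(s_1,\dots,s_n)$ to kill the top eigencomponent, and then spends several lines bounding $\|\hat{\ba}\|_2^2 \le 2s_{\max}$ via $s \ge n-1+s_k$. You instead conjugate by $S^{1/2}$ to the symmetric $\widetilde{M} = I - S^{-1/2}LS^{-1/2}$, read off $M^t_{k,i} = \sqrt{s_k/s_i}\,(\widetilde{M}^t)_{k,i}$ entrywise, and use $s_i\ge 1$ together with Parseval on the $k$-th row of the orthogonal eigenvector matrix. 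This is the more standard reversible-chain symmetrization and yields the slightly sharper constant $s_{\max}\lambda^{2t}$ (no factor $2$); the paper's route trades the entrywise bound for a cruder norm estimate on the initial vector $\hat{\ba}$, which is where the $2$ appears.
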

\begin{proof}
Let $\hat{\ba}=\hat{\mathbf{k}} -  \frac{s_k}{s}\cdot\bOne_n$. Note that $\ba =\hat{\ba}\, M^t$.
Let $\bv_1,\dots,\bv_n$ be the eigenvectors of $M^t$ with eigenvalues $\lambda^t_1,\dots,\lambda^t_n$, and $\lambda^t$ be the second largest eigenvalue.
Using the fact that $M = I-LS^{-1}$, it is not hard to see that $S^{-1}M^t$ is symmetric. Hence, for each right eigenvector $\bv_i$ of $M^t$ there is a left eigenvector $\bu_i = S^{-1}\bv_i$ with the same eigenvalue $\lambda^t_i$ as proved in the following.
\begin{align*}(M^t)^T \bu_i
&= (M^t)^TS^{-1} S \bu_i
= ( S^{-1}M^t)^T \bv_i \\
&=  S^{-1}M^t\bv_i
= \lambda^t_i S^{-1}\bv_i
=  \lambda^t_i\bu_i
\end{align*}
Also, note that $S^{-1}M^tS\bu_i = S^{-1}M^t\bv_i = \lambda^t_i S^{-1}\bv_i = \lambda^t_i \bu_i $.
As a result, $\bu_i$'s are eigenvectors of $S^{-1}M^tS$, which is symmetric because it is the product of  symmetric matrices $S^{-1}M^t$ and $S$.
Therefore, $\bu_1,\dots,\bu_n$ form an orthonormal basis; so  we can write $\hat{\ba} = \sum_{i=1}^{n} c_i \bu_i$.
Now we write
\begin{equation*} \ba = \hat{\ba}\,M^t = \sum_{i=1}^{n}  c_i \bu_i  M^t  =\sum_{i=1}^{n} \lambda^t_i c_i \bu_i \end{equation*}
Therefore,
\begin{equation}
\|\ba\|_2^2  =   \sum_{i=1}^{n}\lambda_i^{2t} c_i^2 \| \bu_i\|_2^2 
\le \lambda^{2t}\sum_{i=1}^{n} c_i^2 \| \bu_i\|_2^2 
= \lambda^{2t} \| \hat{\ba}\|_2^2 \label{eq:foshatba}
\end{equation}
where  
the  inequality  uses the fact that $\hat{\ba} \perp (s_1,\dots,s_n)$ which is the eigenvector corresponding to the largest eigenvalue.
Also, the last equality follows from the fact that $\bu_i$'s form an orthonormal basis.  On the other hand,
\begin{align*}
 \| \hat{\ba}\|_2^2
&\le  n\cdot  \frac{s_k^2}{s^2}+1
 \le \frac{n \,s_k^2}{(n-1+s_k)^2}+1  \\
&   \le  \frac{n\, s_k^2}{2 (n-1) s_k }+1
\le   s_k+1
\le  2 s_{\max}
\end{align*}
Together with \eq{foshatba}, this yields
\begin{equation*} \|\ba\|_2^2\le 2 \,s_{\max}\lambda^{2t} \enspace , \end{equation*}
as required.
\end{proof}

\begin{proof}[Proof of \thmref{fosrand}]
We have
\begin{align}
\left(\Upsilon^{\operatorname{FOS}}(G)\right)^2
&=  \sum_{t=0}^{\infty} \sum_{i=1}^{n}  \max_{j \in N(i)} \big ( M_{k,i}^t - M_{k,j}^t \big)^2 \notag\\
&\le \sum_{t=0}^{\infty}\sum_{i=1}^n \sum_{j\in N(i)}   \left( M_{k,i}^t - M_{k,j}^t \right)^2 \notag\\
&=  \sum_{t=0}^{t_1-1} \sum_{i=1}^n \sum_{j\in N(i)}   \left( M_{k,i}^t - M_{k,j}^t \right)^2
\notag\\
& \phantom{{}={}} + \sum_{t=t_1}^{\infty}\sum_{i=1}^n \sum_{j\in N(i)}
   \left( M_{k,i}^t - M_{k,j}^t \right)^2 \label{eq:breakupsilon}
\end{align}

In the following, we use $\sigma=\beta-1$ for brevity. Note that $0<\sigma<1$.
\begin{align}
\MoveEqLeft \sum_{t=0}^{t_1-1} \sum_{i=1}^n \sum_{j\in N(i)} \notag
 \left(M_{k,i}^t - M_{k,j}^t \right)^2\\
 &\le \sum_{t=0}^{t_1} \sum_{i=1}^n \sum_{j\in N(i)} 2 \left( (M_{k,i}^t)^2+ (M_{k,j}^t )^2 \right) \notag\\
 &\le 4 \cdot d\cdot \sum_{t=0}^{t_1-1} \sum_{i=1}^n (M_{k,i}^t)^2 \notag\\
 &\le 4 \cdot d\cdot \sum_{t=0}^{t_1-1} \|M^t \,\hat{\textbf{k}} \|_2^2 \notag\\
 &= 4 \cdot d\cdot \sum_{t=0}^{t_1-1}\left( \|\hat{\textbf{k}} \|_2\cdot \max_i \lambda_i^t\right)^2\notag\\
 &= 4 \cdot d\cdot \sum_{t=0}^{t_1-1}1\le
 4 \cdot d\cdot t_1 \label{eq:fosupshead}
\end{align}

Let $t_1 = (\log s_{\max})/(2-2\lambda) $. Note that $\lambda^{1/(1-\lambda)} \le 1/e$. Then we have
\begin{align}
\MoveEqLeft \sum_{t=t_1 }^{\infty}\sum_{i=1}^n \sum_{j\in N(i)}
\left( M_{k,i}^t - M_{k,j}^t \right)^2\notag\\
&\le  \sum_{t=t_1 }^\infty \sum_{i=1}^n \sum_{j\in N(i)} 2\left(\left(M_{k,i}^t- \frac {s_k} s\right)^2+\left(M_{k,j}^t- \frac {s_k} s\right)^2\right)\notag\\
&= 4\cdot d\cdot \sum_{t=t_1 }^\infty \sum_{i=1}^n \left(M_{k,i}^t- \frac {s_k} s\right)^2\notag\\
&\le 8\cdot d\cdot s_{\max}\cdot \sum_{t=t_1 }^\infty \lambda^{2t} \label{eq:msecondnorm}\\
&\le 8\cdot d\cdot s_{\max}\cdot\lambda^{2t_1}\cdot\frac{1}{1-\lambda}
\le \frac{8\,d}{1-\lambda} \label{eq:fosupstail}
\end{align}
where \eq{msecondnorm} follows from \lemref{msecondnorm}.
Combining equations \eqref{eq:breakupsilon}, \eqref{eq:fosupstail}, and \eqref{eq:fosupshead} we get
\begin{equation}
 \left(\Upsilon^{\operatorname{FOS}}(G)\right)^2 = O\left(\frac{d\cdot \log s_{\max}}{1-\lambda }\right) \enspace ,
\end{equation}
which proves the first statement. The bound in the second statement follows immediately from statement (1) and \thmref{randomized}.
\end{proof}

\section{Second Order Diffusion Processes}\label{sec:sos}
In this section  we show that after some slight adjustments the framework 
of \secref{general} can be applied 
to second order processes on heterogeneous networks.  
All we have to do is to  state definitions \ref{def:linear} and \ref{def:contribution} in a more general 
way that captures the dependence of SOS on the load transfer of the previous round. 
It is easy to see that  \lemref{deterministic} and \thmref{randomized}  
still hold assuming the new definitions. (Note that 
SOS is  linear).
 If $C$ is a second order process, then $\yv{C}{t}$ is determined based on 
 $\xv{C}{t}$  and $\yv{C}{t-1}$. More formally, $\yv{C}{t} = C(\xv{C}{t},\yv{C}{t-1})$.
Thus,  the new definitions also incorporate  $\yv{C}{t-1}$. 
We again use  $\bi$ to denote the unit vector with $1$ as its $i$'th entry.

\begin{definition}[Linearity]\label{def:linear2}
A process $A$ is said to be \emph{linear} if for all $\bx, \bx'\in \mathbb{R}^n, \by,\by'\in \mathbb{R}^{n\times n} $ and $a,b\in\mathbb{R}$ we have $A(a\bx+b\bx', a\by+b\by' ) = a A(\bx, \by)+b A(\bx', \by')$.
\end{definition}

\begin{definition}[Contributions]\label{def:contribution2}
Let $\bx(0)=\bx'(0) = \bi$,  $\by(0) = \mathbf{0}_{n\times n}$ and let $\by'(0)$ be also all zero except $\by'_{i,j}(0)=1$, so that $\bx(1)=\bi$, $\bx'(1) = \bj$. Let $\bx(t+1)$ and $\bx'(t+1)$ be the load vectors obtained from applying $C$ for $t$ rounds on  $(\bx(1), \by(0))$ and $(\bx'(1), \by'(0))$, respectively. Then the \emph{contribution} of the edge $(i,j)$ on a  node $k$ after $t$ rounds is defined as
$\Q{C}{k}{i\rightarrow j}{t} = \bx_k(t) - \bx'_k(t)$.
\end{definition}

To prove  bounds of the deviation of theorems \ref{thm:sosdet} and \ref{thm:sosrand} we 
apply
Observation \ref{obs:randomized2} which follows from
\lemref{deterministic} and \thmref{randomized}. 
This gives us an upper bound in terms of the $\Q{C}{k}{i\rightarrow j}{t}$'s. 
Hence to  obtain a more concrete bound we have to estimate 
\Q{C}{k}{i\rightarrow j}{t}
which is done in \lemref{soscontribution}
and upper bounded in \lemref{q}. 
The contributions are expressed based on a sequence of matrices $Q(t)$ defined 
below, whose role in error propagation is similar to that of the diffusion matrix in FOS.
\begin{equation}\label{eq:Qrecursion}
Q(t) = \begin{cases}
  \mathbf{I}  & \text{if } t=0 \\
  \beta \cdot M  & \text{if } t=1\\
  \beta \cdot M\, Q(t-1) + (1-\beta)\cdot Q(t-2)    & \text{if }  t\ge 2
\end{cases}
\end{equation}

\begin{lemma}\label{lem:soscontribution}
 For $t>0$, we have \[\Q{\operatorname{SOS}}{k}{i\rightarrow j}{t} = Q_{k,i}(t-1) - Q_{k,j}(t-1) \enspace . \]
\end{lemma}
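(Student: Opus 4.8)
The plan is to proceed by induction on $t$, using the linearity of SOS together with the recursion \eqref{eq:sositeration} that governs the continuous second order scheme. First I would unwind the definition of $\Q{\operatorname{SOS}}{k}{i\rightarrow j}{t}$ from \defref{contribution2}: we have two runs of the continuous SOS process, one started from the state $(\bx(1),\by(0)) = (\bi,\mathbf{0})$ and the other from $(\bx'(1),\by'(0))$, where $\by'(0)$ differs from $\mathbf{0}$ only in entry $(i,j)$, which equals $1$, and correspondingly $\bx'(1) = \bj$. By linearity (\defref{linear2}), the difference of the two runs, call it $\bD(t) = \bx(t) - \bx'(t)$ together with the flow difference, is itself the trajectory of the SOS iteration applied to the \emph{difference} of the initial data, namely load vector $\bi - \bj$ at time $1$ and flow matrix $\mathbf{0} - \by'(0) = -\by'(0)$ at time $0$. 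So it suffices to track how this particular initial perturbation propagates under \eqref{eq:sositeration}.

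Next I would observe that the matrices $Q(t)$ defined in \eqref{eq:Qrecursion} are exactly the ``fundamental solution'' of the SOS iteration in the following sense: if a process starts with load vector $\bu$ at time $0$ and zero flow, then its load vector at time $t$ is $Q(t)\bu$ — this is immediate from comparing \eqref{eq:Qrecursion} with \eqref{eq:sositeration}, since $Q(0) = \mathbf{I}$, $Q(1) = \beta M$ and $Q(t) = \beta M Q(t-1) + (1-\beta)Q(t-2)$ matches $\xv{}{t+1} = \beta M \xv{}{t} + (1-\beta)\xv{}{t-1}$. The only subtlety is the nonzero initial flow $\by'(0)$ in the second run. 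I would handle this by noting that \defref{contribution2} is rigged precisely so that the two runs agree at time $1$ in a ``clean'' way: the initial flow $\by'_{i,j}(0) = 1$ combined with $\bx'(0) = \bi$ produces $\bx'(1) = \bj$, after which (for $t \ge 1$) the flow $\by'(1)$ is recomputed from $\bx'(1)$ and $\by'(0)$ via \eqref{eq:sosy}, and from round $1$ onward the process is just the ordinary SOS iteration. The same holds for the first run, which is plain SOS started at $\bi$ with zero flow. Thus for $t \ge 1$ I want to re-index: let $\tilde C$ denote SOS viewed as starting at round $1$; then $\bx(t) = $ the SOS-trajectory with load $\bi$ at round $1$ and flow $\by(0)=\mathbf 0$ feeding into round $1$, and similarly for $\bx'$. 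Because the flows fed into round $1$ differ ($\mathbf 0$ versus $\by'(0)$), I would need to carry the flow-difference through one step of the recursion as well, i.e. track the pair $(\text{load diff at }t,\ \text{flow diff at }t-1)$ and show it coincides with $(Q_{k,\cdot}(t-1)$-type expressions$)$.

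Concretely, the induction I would set up is: the difference run has load vector $\bi - \bj$ at round $1$ and, after one application of \eqref{eq:sosy} to the difference data, a flow difference at round $1$ that feeds forward; I claim the load vector of the difference run at round $t$ equals $Q(t-1)(\bi - \bj) = Q(t-1)\bi - Q(t-1)\bj$, whose $k$-th coordinate is $Q_{k,i}(t-1) - Q_{k,j}(t-1)$, giving the statement. Base cases $t=1$ (load diff is $\bi - \bj$, and $Q(0) = \mathbf I$ gives $\bi - \bj$, so $\Q{\operatorname{SOS}}{k}{i\rightarrow j}{1} = \delta_{k,i} - \delta_{k,j}$, matching $Q_{k,i}(0) - Q_{k,j}(0)$) and $t=2$ (one SOS step: load diff becomes $\beta M(\bi-\bj)$ plus $(1-\beta)\cdot(\text{load diff at round }0)$ — here I must check that the ``round $0$'' load difference entering the $t=2$ step is again $\bi-\bj$, which follows from how $\by'(0)$ and $\bx'(0)=\bi$ were chosen, so the diff at round $2$ is $\beta M(\bi - \bj) = Q(1)(\bi-\bj)$) anchor the induction; the inductive step is a direct substitution into \eqref{eq:sositeration} and \eqref{eq:Qrecursion}.

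\textbf{Main obstacle.} The routine algebra is harmless; the delicate point is the bookkeeping around the initial flow $\by'(0)$ and the ``round $1$ is FOS, subsequent rounds are SOS'' convention. I would need to verify carefully that the perturbation injected at round $0$ via $\by'_{i,j}(0) = 1$ is exactly equivalent, from round $1$ onward, to having started the two runs with load vectors $\bi$ and $\bj$ and \emph{matching} (namely zero, after the linearity subtraction) flow histories, so that the clean recursion $Q(t)$ — which has no flow term — correctly captures the propagation. Getting the off-by-one in the index of $Q$ right (why $Q(t-1)$ and not $Q(t)$) is where the choice of initial conditions in \defref{contribution2} does all the work, and that is the step I would write out most carefully.
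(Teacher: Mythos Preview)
Your plan is essentially the paper's proof: track the difference vector $\bx(t)-\bx'(t)$, observe that for $t\ge 2$ both runs obey the same linear recursion $\xv{}{t}=\beta M\xv{}{t-1}+(1-\beta)\xv{}{t-2}$, and identify the difference with $Q(t-1)(\bi-\bj)$ by induction.

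One correction to the point you flagged as delicate: the round-$0$ load difference is $\bx(0)-\bx'(0)=\bi-\bi=\mathbf{0}$, \emph{not} $\bi-\bj$. This is precisely what makes the $t=2$ case work out to $\beta M(\bi-\bj)+(1-\beta)\cdot\mathbf{0}=Q(1)(\bi-\bj)$; with your stated value $\bi-\bj$ you would get an extra $(1-\beta)(\bi-\bj)$ term and the identity would fail. The initial flow $\by'_{i,j}(0)=1$ is set up exactly so that $\bx'(0)=\bi$ is transported to $\bx'(1)=\bj$ while leaving $\bx'(0)$ equal to $\bx(0)$, and after that no special bookkeeping about flows is needed---the load recursion \eqref{eq:sositeration} alone carries the induction.
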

\begin{proof}
Let $\bx(0)=\bx'(0) = \bi$,  $\by(0) = \mathbf{0}_{n\times n}$ and let $\by'(0)$ be also all zero except $\by'_{i,j}(0)=1$, so that $\bx(1)=\bi$, $\bx'(1) = \bj$. Let $\bx(t+1)$ and $\bx'(t+1)$ be the load vectors obtained from applying SOS for $t$ rounds on  $(\bx(1), \by(0))$ and $(\bx'(1), \by'(0))$, respectively. 
Let $\Qv{\operatorname{SOS}}{i\rightarrow j}{t}$ be a vector that has  $\Q{\operatorname{SOS}}{k}{i\rightarrow j}{t}$ as its $k$'th entry, for $1\le k\le n$. Let  $\be= \hat{\mathbf{i}}-\hat{\mathbf{j}}$.
Then we have
\begin{equation*}\label{eq:Qrecursion2}
\Qv{\operatorname{SOS}}{i\rightarrow j}{t} = \bx(t)- \bx'(t)=
\begin{cases} 
  \mathbf{0}  & \text{if } t=0 \\
   \be  & \text{if } t=1\\
  \beta \cdot M\, \Qv{\operatorname{SOS}}{i\rightarrow j}{t-1}\\ + (1-\beta)\cdot \Qv{\operatorname{SOS}}{i\rightarrow j}{t-2}    & \text{if } t\ge 2
\end{cases}
\end{equation*}
where the third equation holds because for all $t\ge 2$, both $\bx(t)$ and $\bx'(t)$ follow the same equation $\xv{}{t} = \beta \cdot M\xv{}{t-1} + (1-\beta)\cdot \xv{}{t-2}$.
Now, it can be proved by induction that 
$\Qv{\operatorname{SOS}}{i\rightarrow j}{t} = Q(t-1) \,\be$. 
Recall that all entries of $\be$ are zero except $\be_i =  1$ and $\be_j = -1$. Therefore, for $t>0$ we get 
$\Q{\operatorname{SOS}}{k}{i\rightarrow j}{t} = Q_{k,i}(t-1) - Q_{k,j}(t-1)$.
\end{proof}

The following lemma provides a bound for the second norm of $Q(t)$, which is
later used in the proofs of theorems \ref{thm:sosdet} and \ref{thm:sosrand}. 

\begin{lemma}\label{lem:q}
Let $\beta = \beta_{\small{opt}}= 2/(1+\sqrt{1-\lambda^2})$. The following statements are true.
\begin{enumerate}
\item Eigenvectors of $Q(t)$ form a basis for $\mathbb{R}^n$.
\item Let  $\gamma =  \left(\sqrt{\beta-1}\right)^{t} (t+1)$.
Then  $\gamma$ is an upper bound on the eigenvalues of $Q(t)$ except the eigenvalue corresponding to the eigenvector $(s_1,\cdots,s_n)$.
\item  $Q(t)$ has equal column sums.
\item
Define $q(t)=\sum_{1\le j\le n} Q_{i,j}(t)$ for an arbitrary $1\le i\le n$ (note that by the statement (3), this is a valid definition).
Fix a  $1\le k\le n$, and let the  vector $\ba$ be such that \break $\ba_i =  Q_{k,i}(t)- s_k/ s\cdot q(t) $. 
Then we have
$\|\ba\|_2^2 \le 2 \,  s_{\max} (\beta-1)^{t} (t+1)^2$.
\end{enumerate}
\end{lemma}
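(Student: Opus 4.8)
The plan rests on the observation that every $Q(t)$ is a polynomial in the diffusion matrix $M$: setting $p_0\equiv 1$, $p_1(x)=\beta x$ and $p_t(x)=\beta x\,p_{t-1}(x)+(1-\beta)p_{t-2}(x)$, a straightforward induction on $t$ via \eqref{eq:Qrecursion} gives $Q(t)=p_t(M)$. Since $S^{-1}M$ is symmetric (as recalled in the proof of \lemref{msecondnorm}), $M$ is diagonalizable with real eigenvalues; I would fix an eigenbasis $\bv_1,\dots,\bv_n$ with eigenvalues $\lambda_1=1,\lambda_2,\dots,\lambda_n$, where $\bv_1=(s_1,\dots,s_n)$ and $|\lambda_i|\le\lambda$ for $i\ge 2$. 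Each $\bv_i$ is automatically an eigenvector of $Q(t)=p_t(M)$ with eigenvalue $p_t(\lambda_i)$, which already gives statement~(1). For statement~(3) I would use that load conservation forces $\mathbf 1^\top M=\mathbf 1^\top$ (the columns of $M$ sum to one), so $\mathbf 1^\top M^j=\mathbf 1^\top$ for every $j$ and hence $\mathbf 1^\top Q(t)=\mathbf 1^\top p_t(M)=p_t(1)\,\mathbf 1^\top$; thus every column of $Q(t)$ sums to $q(t):=p_t(1)$.

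The core of the proof is statement~(2), i.e.\ bounding $|p_t(x)|$ for $|x|\le\lambda$, and here I would invoke the defining feature of the optimal parameter. A direct substitution (writing $u=\sqrt{1-\lambda^2}$, so $\beta_{\mathrm{opt}}=2/(1+u)$) gives the identity $\beta^2\lambda^2=4(\beta-1)$, which makes the discriminant $\beta^2x^2-4(\beta-1)$ of the recurrence's characteristic equation $z^2-\beta x z+(\beta-1)=0$ nonpositive for all $|x|\le\lambda$. Its two roots are then complex conjugates $z_\pm=\sqrt{\beta-1}\,e^{\pm i\theta}$ of modulus $\sqrt{\beta-1}$ (and $0<\beta-1<1$ since $0<\lambda<1$). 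Solving the recurrence with $p_0=1$ and $p_1(x)=\beta x=z_++z_-$ yields the closed form $p_t(x)=(z_+^{t+1}-z_-^{t+1})/(z_+-z_-)=(\sqrt{\beta-1})^{\,t}\,\sin((t+1)\theta)/\sin\theta$, and the elementary estimate $|\sin((t+1)\theta)|\le(t+1)|\sin\theta|$ gives $|p_t(x)|\le(\sqrt{\beta-1})^{\,t}(t+1)=\gamma$ (the case $x=\pm\lambda$ being the limit $\theta\to 0$). In particular every eigenvalue $p_t(\lambda_i)$ of $Q(t)$ other than $p_t(1)$ — the one belonging to $(s_1,\dots,s_n)$ — has absolute value at most $\gamma$.

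Finally, statement~(4) I would prove by replaying the argument of \lemref{msecondnorm} with $Q(t)$ in place of $M^t$. With $\hat{\mathbf a}=\hat{\mathbf k}-(s_k/s)\mathbf 1$ one checks, using statement~(3), that $\mathbf a^\top=\hat{\mathbf a}^\top Q(t)$, exactly as $\mathbf a=\hat{\mathbf a}\,M^t$ there. Since $Q(t)$ is a polynomial in $M$, $S^{-1}Q(t)$ inherits the symmetry of $S^{-1}M$, so the same reasoning produces an orthonormal basis of left eigenvectors $\bu_1,\dots,\bu_n$ of $Q(t)$; expanding $\hat{\mathbf a}=\sum_i c_i\bu_i$ gives $\mathbf a=\sum_i c_i\,p_t(\lambda_i)\,\bu_i$, whence $\|\mathbf a\|_2^2=\sum_i c_i^2\,p_t(\lambda_i)^2$. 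The $i=1$ term vanishes because $\hat{\mathbf a}\perp(s_1,\dots,s_n)$, the remaining terms are bounded via statement~(2) by $\gamma^2\sum_{i\ge 2}c_i^2\le\gamma^2\|\hat{\mathbf a}\|_2^2$, and $\|\hat{\mathbf a}\|_2^2\le 2s_{\max}$ by the very computation already carried out in \lemref{msecondnorm}. Combining, $\|\mathbf a\|_2^2\le 2s_{\max}\gamma^2=2s_{\max}(\beta-1)^t(t+1)^2$.

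I expect statement~(2) to be the only real obstacle: one must recognize the second-order scheme as a successive-over-relaxation iteration and extract the algebraic identity $\beta_{\mathrm{opt}}^2\lambda^2=4(\beta_{\mathrm{opt}}-1)$ that collapses the discriminant, after which the closed form for $p_t$ and the $\sin$-estimate are routine. Statements~(1) and~(3) are immediate once $Q(t)=p_t(M)$ is noted, and statement~(4) is essentially a transcription of \lemref{msecondnorm}, so the care there is only bookkeeping — checking that the symmetry of $S^{-1}Q(t)$ and the orthogonality $\hat{\mathbf a}\perp(s_1,\dots,s_n)$ still hold in this setting.
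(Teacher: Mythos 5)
Your proposal is correct and follows essentially the same route as the paper: the paper likewise shows the eigenvectors of $Q(t)$ coincide with those of $M$, solves the scalar eigenvalue recurrence (its closed form $r^t(\cos(\theta t)+\sin(\theta t)\lambda_j/\sqrt{\lambda^2-\lambda_j^2})$ is exactly your $(\sqrt{\beta-1})^t\sin((t+1)\theta)/\sin\theta$), applies the same $\sin((t+1)\theta)\le (t+1)\sin\theta$ bound, and proves (4) by transplanting the argument of \lemref{msecondnorm}. The only cosmetic differences are that you package $Q(t)$ as $p_t(M)$ and extract the discriminant identity $\beta_{\mathrm{opt}}^2\lambda^2=4(\beta_{\mathrm{opt}}-1)$ explicitly, while the paper argues by induction on the recurrence and cites \cite{EMP02} for the eigenbasis.
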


\begin{proof}
\setcounter{paragraph}{0}
\paragraph{Proof of (1)}
First we observe that the eigenvectors of $Q(t)$ are the same as the eigenvectors of $M$.
This can be proved by an induction using the recurrence  of \eq{Qrecursion} as follows.

Suppose $\bv$ is an eigenvector of $M$ with eigenvalue $\alpha$. Then $\bv$ is
also an eigenvector of $Q(t)$ and $Q(t-1)$ by the induction hypothesis. Let
$\mu_1$ and $\mu_2$ be the corresponding eigenvalues. We have
\begin{align*}
Q(t+1) \bv &=  \beta \cdot M\, Q(t) \bv + (1-\beta)\cdot Q(t-1) \bv\\
 &=  \beta \mu_1 \cdot M\, \bv + (1-\beta) \mu_2 \,  \bv\\
 &=  \beta \mu_1 \alpha \bv + (1-\beta) \mu_2   \bv\\
 &=  (\beta \mu_1 \alpha  + (1-\beta) \mu_2)\,   \bv,
\end{align*}
which shows that $\bv$ is also an eigenvector of $Q(t+1)$.
Also, note that $M = I- L S^{-1}$ where $L$ is the Laplacian matrix of the graph and $S$ is the diagonal  matrix of speeds.
The eigenvectors of $M$ are the same as those of $L S^{-1}$.
By \cite[proof of Lemma 1]{EMP02} the eigenvectors of  $L S^{-1}$ form a basis for $\mathbb{R}^n$. Therefore the eigenvectors of $M$ and the eigenvectors of $Q(t)$  form a basis for $\mathbb{R}^n$.

\paragraph{Proof of (2)}
From the induction in the proof of statement (1) one can see that corresponding to each eigenvalue $\lambda_j$ of $M$ an eigenvalue $\gamma_j(t)$ of $Q(t)$ can be obtained according to the following recursion.
\begin{equation*}
\gamma_j(t) =
\begin{cases}
  1 & \text{if } t = 0 \\
  \beta \lambda_j  & \text{if } t = 1\\
  \beta \lambda_j\cdot \gamma_j(t-1) + (1-\beta)\cdot \gamma_j(t-2)   & \text{if }  t\ge 2
\end{cases}
\end{equation*}

Solving the above recursion we get
\begin{equation*}\label{eq:gammas}
\gamma_j(t) =
\begin{cases}
  \frac{1-(\beta-1)^{t+1}}{2-\beta} & \text{if } \lambda_j = 1\\
   \left(\sqrt{\beta-1}\right)^{t} (t+1) & \text{if }  |\lambda_j| = \lambda\\
   r^t \left(\cos (\theta t) + \sin(\theta t)\cdot \frac{\lambda_j}{\sqrt{\lambda^2-\lambda_j^2}}\right)  & \text{if }  |\lambda_j| < \lambda
 \end{cases}
\end{equation*}
where $r = \sqrt{\beta-1}$, and $0 < \theta < \pi$ is such that $\sin \theta = \sqrt{\lambda^2-\lambda_j^2}/\lambda$, and $\cos \theta = \lambda_j/\lambda$.
Note that the eigenvalue corresponding to $ \lambda_j = 1$ belongs to the eigenvector $(s_1,\cdots,s_n)$.
 Hence, it suffices to prove that in \eq{gammas}  the case $ |\lambda_j| < \lambda$  does not produce eigenvalues bigger than those obtained in the case $ |\lambda_j| = \lambda$.
Note that
\begin{align}
\gamma_j(t)  &=
 r^t \left(\cos (\theta t) + \sin(\theta t)\cdot \frac{\lambda_j}{\sqrt{\lambda^2-\lambda_j^2}}\right)\notag\\
&\le \left(\sqrt{\beta-1}\right)^t \cdot \frac{\sin((t+1)\theta)}{\sin \theta}\label{eq:sin}\\
&\le \left(\sqrt{\beta-1}\right)^t\cdot (t+1),\notag
\end{align}
where in \eq{sin} we use $\sin (nx) \le n \sin x$ for  $0< x < \pi$ and $n\in \mathbb{N}$.

\paragraph{Proof of (3)} The statement follows from a simple induction using  \eq{Qrecursion}.
The case $Q(0) = \mathbf{I}$ is trivial. $Q(1) = \beta M$ also has equal column sums, since the entries in each column of  $M$ sum to one (this is necessay to guarantee load conservation).
Suppose for all $t_1\le t$, $Q(t_1)$ has equal column sums. Let us denote this value by $q(t_1)$.
\begin{align*}
\bOne_n Q(t+1)  &=  \beta \cdot\bOne_n M\, Q(t)  + (1-\beta)\cdot \bOne_n Q(t-1) \\
 &=   \beta \cdot\bOne_n Q(t)  + (1-\beta)\cdot q(t-1)\cdot  \bOne_n \\
 &=   \beta \cdot q(t) \cdot\bOne_n   + (1-\beta)\cdot q(t-1)\cdot  \bOne_n \\
 &=   \left(\beta \cdot q(t)   + (1-\beta)\cdot q(t-1)\right)\cdot  \bOne_n
\end{align*}
which shows that all column sums of $Q(t+1)$ are equal to $\beta \cdot q(t)   + (1-\beta)\cdot q(t-1)$.

\paragraph{Proof of (4)}
Let $\hat{\ba}=\hat{\mathbf{k}} -  \frac{s_k}{s}\cdot\bOne_n$. Note that $\ba =\hat{\ba}\, Q(t) $.
Let $\bv_1,\dots,\bv_n$ be the eigenvectors of $Q(t)$ with eigenvalues $\gamma_1,\dots,\gamma_n$, and $\gamma$ be defined as in the statement (2) of the lemma.
Using the fact that $M = I-LS^{-1}$, it can be proved by induction that $S^{-1}Q(t)$ is symmetric. Hence, for each right eigenvector $\bv_i$ of $Q(t)$ there is a left eigenvector $\bu_i = S^{-1}\bv_i$ with the same eigenvalue $\gamma_i$ as proved in the following.
\begin{align*}
(Q(t))^T \bu_i &=  (Q(t))^TS^{-1} S \bu_i 
  = ( S^{-1}Q(t))^T \bv_i \\
  &= S^{-1}Q(t)\bv_i
   = \gamma_i S^{-1}\bv_i  =  \gamma_i\bu_i
\end{align*}
Also, note that $S^{-1}Q(t)S\bu_i = S^{-1}Q(t)\bv_i = \gamma_i S^{-1}\bv_i = \gamma_i \bu_i $.
As a result, $\bu_i$'s are eigenvectors of $S^{-1}Q(t)S$, which is symmetric because it is the product of  symmetric matrices $S^{-1}Q(t)$ and $S$.
Therefore, $\bu_1,\dots,\bu_n$ form an orthonormal basis; so  we can write $\hat{\ba} = \sum_{i=1}^{n} c_i \bu_i$.
Now we write
\begin{equation*}\ba = \hat{\ba}\,Q(t) = \sum_{i=1}^{n}  c_i \bu_i  Q(t)  =\sum_{i=1}^{n} \gamma_i c_i \bu_i \enspace .\end{equation*}
Therefore,
\begin{equation}
\|\ba\|_2^2 = \sum_{i=1}^{n}\gamma_i^2 c_i^2 \| \bu_i\|_2^2 \le \gamma^2\sum_{i=1}^{n} c_i^2 \| \bu_i\|_2^2 = \gamma^2 \| \hat{\ba}\|_2^2 \label{eq:hatba}
\end{equation}
where the inequality uses the fact that $\hat{\ba} \perp (s_1,\dots,s_n)$
 and part (2) of the lemma, and the last equality follows from the fact that $\bu_i$'s form an orthonormal basis. Also,
\begin{align*}
 \| \hat{\ba}\|_2^2
&\le n\cdot  \frac{s_k^2}{s^2}+1 \\
&\le \frac{n \,s_k^2}{(n-1+s_k)^2}+1  \\
&\le   \frac{n\, s_k^2}{2 (n-1) s_k }+1 \\
&\le   s_k+1 \\
&\le  2 s_{\max} \enspace .
\end{align*}
Together with \eq{hatba}, this yields
$\|\ba\|_2^2\le 2 \,s_{\max}(\beta-1)^{t} (t+1)^2$.
\end{proof}


 \subsection{Deviation between Continuous and Discrete SOS  }\label{sec:sosdet}
In this section we show a bound on the deviation between a continuous SOS and 
its rounded version. 
The authors of \cite{EMS06} show a similar bound 
on the deviation using the second norm, i.e., they show  a bound of 
$||\x{D(\operatorname{SOS})}{}{t} - \x{\operatorname{SOS}}{}{t}||_2 = O\left(d \sqrt{n s_{\max}}/(1-\lambda)\right)$. Note that the bound on the deviation for FOS, which is
$O(d \sqrt{s_{\max}\log n  /(1-\lambda)})$,
is smaller.

\begin{theorem}\label{thm:sosdet}
 Consider a discrete SOS process $D = {D(\operatorname{SOS})}$ with optimal $\beta$ and a rounding scheme that rounds a fractional value to either its floor or its ceiling. Then for arbitrary $t\ge 0$ we have
$\left|\x{D}{k}{t} - \x{\operatorname{SOS}}{k}{t}\right| = 
O\left(d \sqrt{n s_{\max}}/(1-\lambda)\right)$.
\end{theorem}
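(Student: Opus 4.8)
The plan is to start from the deterministic deviation formula in \lemref{deterministic}, which applies since SOS is linear (\lemref{linear}), and combine it with the contribution formula of \lemref{soscontribution}. This gives
\[
\left|\x{D}{k}{t}-\x{\operatorname{SOS}}{k}{t}\right|
= \left| \sum_{s=1}^{t} \sum_{\{i,j\}\in E} \e{i}{j}{t-s}\, \big(Q_{k,i}(s-1)-Q_{k,j}(s-1)\big) \right|.
\]
Because the rounding scheme rounds to the floor or ceiling, every rounding error satisfies $|\e{i}{j}{s}| \le 1$. So the whole quantity is bounded by $\sum_{s\ge 1}\sum_{\{i,j\}\in E} |Q_{k,i}(s-1)-Q_{k,j}(s-1)|$, and it suffices to bound this sum of ``contributions'' — the SOS analogue of the $\sum \Upsilon$ quantity appearing in the FOS analysis.

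Next I would bound $\sum_{\{i,j\}\in E}|Q_{k,i}(s)-Q_{k,j}(s)|$ for each fixed $s$. Using Cauchy--Schwarz over the (at most $nd/2$) edges, this is at most $\sqrt{|E|}\cdot\big(\sum_{\{i,j\}\in E}(Q_{k,i}(s)-Q_{k,j}(s))^2\big)^{1/2} \le \sqrt{nd/2}\cdot\big(d\sum_i (Q_{k,i}(s) - c)^2\big)^{1/2}$ for any constant $c$ (expanding the square of a difference and summing over neighbors introduces a factor $d$), and choosing $c = (s_k/s)\,q(s)$ exactly matches the vector $\ba$ of \lemref{q}(4). Hence $\sum_{\{i,j\}\in E}|Q_{k,i}(s)-Q_{k,j}(s)| \le d\sqrt{n}\cdot\|\ba\|_2 \le d\sqrt{n}\cdot\sqrt{2 s_{\max}\,(\beta-1)^{s}(s+1)^2}$, using part (4) of \lemref{q}. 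Since $\beta=\beta_{\mathrm{opt}}$, we have $\beta-1 = (1-\sqrt{1-\lambda^2})/(1+\sqrt{1-\lambda^2}) < 1$, so $\sqrt{(\beta-1)^s}=(\sqrt{\beta-1})^s =: \rho^s$ with $\rho<1$.

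Finally I would sum over $s$. Summing the per-round bound gives
\[
\left|\x{D}{k}{t}-\x{\operatorname{SOS}}{k}{t}\right|
\le d\sqrt{n}\,\sqrt{2 s_{\max}}\;\sum_{s=0}^{\infty} (s+1)\,\rho^{s}
= O\!\left(d\sqrt{n s_{\max}}\cdot\frac{1}{(1-\rho)^2}\right),
\]
since $\sum_{s\ge0}(s+1)\rho^s = (1-\rho)^{-2}$. It remains to show $1-\rho = \Omega(1-\lambda)$, equivalently $1-(\beta-1) = \Omega(1-\lambda)$: indeed $2-\beta = 2\sqrt{1-\lambda^2}/(1+\sqrt{1-\lambda^2}) = \Theta(\sqrt{1-\lambda^2}) = \Theta(\sqrt{(1-\lambda)(1+\lambda)}) = \Theta(\sqrt{1-\lambda})$, and $1-\rho^2 = 2-\beta$, so $1-\rho = \Theta(1-\rho^2) = \Theta(\sqrt{1-\lambda})$ (using $\rho$ bounded away from $0$; one handles the degenerate regime $\lambda$ near $0$ separately, where everything is trivially $O(d\sqrt{n s_{\max}})$). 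Thus $(1-\rho)^{-2} = O(1/(1-\lambda))$ and the bound $O(d\sqrt{n s_{\max}}/(1-\lambda))$ follows.

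The main obstacle I anticipate is the bookkeeping around the $\beta$-dependent geometric decay: one must be careful that $\rho = \sqrt{\beta-1}$ is bounded strictly below $1$ with a gap that is exactly $\Theta(\sqrt{1-\lambda})$, so that the $(1-\rho)^{-2}$ factor converts to $1/(1-\lambda)$ rather than $1/(1-\lambda)^2$ — the ``$(t+1)$'' polynomial factor from the repeated eigenvalue in \lemref{q}(2) is what makes this quadratic-in-$(1-\rho)^{-1}$ rather than linear, and it must be reconciled with the claimed $1/(1-\lambda)$ (not $1/(1-\lambda)^{3/2}$) dependence. The other minor care point is justifying the Cauchy--Schwarz / degree-factor step cleanly and confirming the inner-sum constant $c$ may be chosen freely because $\sum_{j\in N(i)}(a_i - a_j)$-type expressions only depend on differences.
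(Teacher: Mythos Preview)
Your proposal is correct and follows essentially the same route as the paper: start from \lemref{deterministic} and \lemref{soscontribution}, bound $|\e{i}{j}{s}|\le 1$, reduce the per-round edge sum to $d\sqrt{n}\,\|\ba\|_2$ and invoke \lemref{q}(4), then sum the resulting $(s+1)(\sqrt{\beta-1})^s$ series and convert $(1-\sqrt{\beta-1})^{-2}$ to $O(1/(1-\lambda))$. The only cosmetic difference is that the paper first applies the triangle inequality $|Q_{k,i}(s)-Q_{k,j}(s)|\le|Q_{k,i}(s)-\tfrac{s_k}{s}q(s)|+|Q_{k,j}(s)-\tfrac{s_k}{s}q(s)|$ and then Cauchy--Schwarz over the $n$ nodes, whereas you apply Cauchy--Schwarz over the edges first and then bound the squared differences; both yield the same $d\sqrt{n}\,\|\ba\|_2$ factor.
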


\begin{proof}
We use \lemref{deterministic} to obtain a deviation bound for the general case.
We have
%
%
\begin{align}
\MoveEqLeft \left|\x{D(\operatorname{SOS})}{k}{t+1} - \x{\operatorname{SOS}}{k}{t+1}\right| \notag\\
&=  \left|\sum_{s=0}^t \sum_{\{i,j\}\in E} \left(Q_{k,i}(s) - Q_{k,j}(s)\right) \cdot \e{i}{j}{t-s} \right|\notag\\
&\le  \sum_{s=0}^t  \sum_{\{i,j\}\in E} \left|Q_{k,i}(s)-Q_{k,j}(s)\right|\notag\\
&\le  \sum_{s=0}^t  \sum_{\{i,j\}\in E} \left(\left|Q_{k,i}(s)-\frac {s_k} s\cdot q(s)\right| \right. \notag \\
& \phantom{{}={}} + \left. \left|Q_{k,j}(s)-\frac {s_k} s\cdot q(s)\right|\right)\notag \\
&=   d\cdot \sum_{s=0}^t  \sum_{i=1}^n \left|Q_{k,i}(s)-\frac {s_k} s\cdot q(s)\right|\notag\\
&\le   d\cdot \sqrt{n}\cdot \sum_{s=0}^t \left( \sum_{i=1}^n \left(Q_{k,i}(s)-\frac {s_k} s\cdot q(s)\right)^2\right)^{1/2}\label{eq:cauchy}\\
&\le   4\cdot d\cdot \sqrt{n}\cdot \sqrt{2 s_{\max}} \cdot\sum_{s=0}^\infty\left(\sqrt{\beta-1}\right)^{s} (s+1)   \label{eq:basedonq}\\
&\le   4\cdot d\cdot \sqrt{2n s_{\max}} \cdot \frac{1}{\left(1- \sqrt{\beta-1}\right)^2}   \notag\\
&\le  16\cdot d\cdot \sqrt{2n s_{\max}} \cdot \frac{1}{1-\lambda}   \notag
\end{align}
where  Equation \ref{eq:cauchy} follows from the Cauchy-Schwarz inequality and Equation \ref{eq:basedonq} follows from \lemref{q}.(4).

\end{proof}

\subsection{Framework for Randomized SOS}\label{sec:sosrand}

In the next theorem we bound the deviation between continuous and discrete 
SOS  using the randomized rounding scheme from \secref{rand}.
As  mentioned earlier in this section, is easy to see that the 
proof of Theorem  \ref{thm:randomized} holds for the more general definitions 
of linearity and contribution 
of this section. Hence, we can state the following observation and show similar to \secref{general} the next theorem.

\begin{observation}\label{obs:randomized2}
In the setting of Section \ref{sec:sos} for an arbitrary 
round $t$ we have w.h.p.
\[
\left|\X{R}{k}{t} - \x{{C}}{k}{t}\right| = O\left(  \Upsilon^{C}(G) \, \cdot \sqrt{d\log n} \,\right)
\]
\end{observation}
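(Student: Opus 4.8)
The plan is to verify that every ingredient of the proof of \thmref{randomized} carries over verbatim once \defref{linear} and \defref{contribution} are replaced by their second-order counterparts \defref{linear2} and \defref{contribution2}, and once \lemref{deterministic} is read with the generalized notion of contribution. First I would observe that \lemref{linear} already establishes that SOS is linear in the sense of \defref{linear2}, so the hybrid-process telescoping argument of \lemref{deterministic} goes through: when we sequentialize the edges and swap the action on edge $\{i,j\}$ in round $s$ from the continuous to the discrete rule, the perturbation introduced is exactly $\e{i}{j}{s}$ on coordinate $i$ (and its negative on coordinate $j$), fed into the now history-dependent but still \emph{linear} evolution; by \defref{contribution2} the effect of this unit perturbation on node $k$ after the remaining $t-s$ rounds is precisely $\Q{C}{k}{i\rightarrow j}{t-s}$. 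Hence the identity
\[
\x{R}{k}{t}-\x{C}{k}{t} \;=\; \sum_{s=1}^{t}\sum_{\{i,j\}\in E}\e{i}{j}{t-s}\,\Q{C}{k}{i\rightarrow j}{s}
\]
holds unchanged, and \obsref{randsoserr} (which only uses the definition of the rounding scheme, not the order of the process) still gives $\Ex{\e{i}{j}{s}}=0$, so $\Ex{f_k}=0$.

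Next I would re-run the concentration argument. The random choices are still described by the $tnd$ variables $Y_1,\dots,Y_{tnd}$ encoding the destinations of the at most $d$ excess tokens per node per round, and the martingale difference bound decomposes exactly as before into the three ranges $s<s_1$, $s=s_1$, $s>s_1$. The two crucial facts are: (i) for $s<s_1$ the error $\e{i}{j}{s}$ is already $\yellmin$-measurable, hence contributes $0$; and (ii) for $s>s_1$ one conditions on the largest index $\widetilde\ell$ completing round $s-1$, notes $\widetilde\ell\ge\ell$, and uses $\Ex{\e{i}{j}{s}\mid Y_{\widetilde\ell},\dots,Y_1}=0$ together with the tower property to kill that range. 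Neither of these depends on whether the process is first or second order — they only use that $\e{i}{j}{s}$ is a function of the load vector $X^{(s)}$ (which in the SOS setting is still determined by the choices made up to the end of round $s-1$, since the previous-round flow $\yv{}{s-1}$ is itself a deterministic function of those choices) and the freshly-sampled destinations in round $s$. For $s=s_1$ the Case 1 / Case 2 split and the bound $\sum_{j\in N(i_1)}|\Lambda_{i_1,j}^{(s_1)}|\le 2$ from \eqref{eq:martingalebound} are purely about the rounding scheme on a single node and are untouched. Therefore $c_\ell\le 2\max_{j\in N(i_1)}|\Q{C}{k}{i_1\rightarrow j}{t-s_1}|$, and summing squares as in \eqref{eq:boundeddiff} yields $\sum_\ell c_\ell^2\le 8d\,(\Upsilon^{C}(G))^2$; Azuma plus a union bound over $k$ then gives the claimed w.h.p.\ bound $O(\Upsilon^{C}(G)\sqrt{d\log n})$.

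The only genuinely new point to check — and the one I would spell out explicitly — is that in the second-order setting the rounding error $\e{i}{j}{s}$ really is measurable with respect to the choices made strictly before round $s$ plus the round-$s$ destination variables, because $\hat Y_{i,j}(s)=C(\xv{D}{s},\yv{D}{s-1})_{i,j}$ now depends on the previous flow matrix $\yv{D}{s-1}$ as well as on $\xv{D}{s}$. Since $\yv{D}{s-1}$ is in turn a deterministic function of $\xv{D}{s-1}$ and $\yv{D}{s-2}$ and of the excess-token destinations in round $s-1$, an easy induction shows $\yv{D}{s-1}$ is $\mathbb{Y}_{\widetilde\ell'}$-measurable where $\widetilde\ell'$ is the last index of round $s-1$; this is exactly what the cases $s<s_1$ and $s>s_1$ require. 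I expect this measurability bookkeeping to be the main (though routine) obstacle; once it is in place, the rest of the proof is word-for-word identical to that of \thmref{randomized}, which is why we state the result as an observation rather than re-proving it in full.
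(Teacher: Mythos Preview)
Your proposal is correct and takes exactly the approach the paper intends: the paper does not give a separate proof of this observation at all, merely remarking that ``it is easy to see that the proof of \thmref{randomized} holds for the more general definitions of linearity and contribution of this section,'' and you have spelled out precisely why that is so. Your explicit measurability check for $\e{i}{j}{s}$ in the second-order setting (that $\yv{D}{s-1}$ is determined by the random choices up through round $s-1$) is the one point the paper glosses over, and it is indeed routine once stated.
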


Similar to \secref{general}, we can use \autoref{obs:randomized2} to show the
next theorem.

\begin{theorem}\label{thm:sosrand}
Let $R = R(\operatorname{SOS})$ be a randomized-rounding discrete SOS process 
with optimal $\beta$ obtained using our randomized rounding scheme. Then 
\begin{enumerate}
\item[(1)]  $ \Upsilon^{\operatorname{SOS}}(G) =  O\left({ \sqrt{d}\cdot \log s_{\max}}/{\left(1-\lambda\right)^{3/4} }\right)\enspace $.
\item[(2)] The deviation of $R$  from the continuous SOS  in an arbitrary round $t$ is w.h.p. 
\[\left|\x{R}{k}{t} - \x{\operatorname{SOS}}{k}{t}\right| = O \Big(  \frac{ d\cdot \log s_{\max}\cdot\sqrt{\log n}}{(1-\lambda)^{3/4} } \Big) \enspace . \]
\end{enumerate}
\end{theorem}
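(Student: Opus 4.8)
The plan is to reduce the whole statement to an estimate on the refined local divergence $\Upsilon^{\operatorname{SOS}}(G)$. Part~(2) is then immediate: feeding the bound of part~(1) into \autoref{obs:randomized2} (which gives a deviation of order $\Upsilon^{\operatorname{SOS}}(G)\sqrt{d\log n}$) yields exactly $O(d\log s_{\max}\sqrt{\log n}/(1-\lambda)^{3/4})$. So the real work is bounding $\Upsilon^{\operatorname{SOS}}(G)$, and the argument is the $Q(t)$-analogue of the proof of \thmref{fosrand}, with $M^t$ replaced throughout by the matrices $Q(t)$ of \eqref{eq:Qrecursion}. Using \lemref{soscontribution} I would first write
\[
\left(\Upsilon^{\operatorname{SOS}}(G)\right)^2 = \max_{k\in V}\sum_{t=0}^{\infty}\sum_{i=1}^{n}\max_{j\in N(i)}\left(Q_{k,i}(t)-Q_{k,j}(t)\right)^2 ,
\]
and then split the sum over $t$ at a threshold $t_1$ exactly as in the FOS proof. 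I would take $t_1$ to be the least integer with $(\beta-1)^{t_1}\le 1/s_{\max}$, i.e.\ $t_1=\lceil \ln s_{\max}/\ln(1/(\beta-1))\rceil$. The only place the explicit value $\beta=\beta_\mathrm{opt}$ is needed is to certify that $2-\beta=\Theta(\sqrt{1-\lambda})$ and $\ln(1/(\beta-1))=\Theta(\sqrt{1-\lambda})$; in particular this makes $t_1=O(\log s_{\max}/\sqrt{1-\lambda})$.

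For the tail $t\ge t_1$ I would argue as in the derivation of \eqref{eq:fosupstail}: insert the term $(s_k/s)\,q(t)$ into each difference, use $(Q_{k,i}(t)-Q_{k,j}(t))^2\le 2(Q_{k,i}(t)-(s_k/s)q(t))^2+2(Q_{k,j}(t)-(s_k/s)q(t))^2$, sum over neighbours to gain a factor $O(d)$, and then apply \lemref{q}.(4) to obtain a tail bound of order $d\,s_{\max}\sum_{t\ge t_1}(\beta-1)^t(t+1)^2$. Since $(\beta-1)^{t_1}\le 1/s_{\max}$ and $\sum_{u\ge 0}(\beta-1)^u(u+t_1+1)^2=O(t_1^2/(2-\beta)+1/(2-\beta)^3)=O(\log^2 s_{\max}/(1-\lambda)^{3/2})$, this tail contributes $O(d\log^2 s_{\max}/(1-\lambda)^{3/2})$ to $(\Upsilon^{\operatorname{SOS}}(G))^2$.

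For the head $t<t_1$ the FOS argument used that $M^t$ has spectral radius at most $1$ in the relevant basis, and this is the one step that genuinely changes. I would instead bound $\sum_{i}\max_{j\in N(i)}(Q_{k,i}(t)-Q_{k,j}(t))^2\le O(d)\sum_i Q_{k,i}(t)^2=O(d)\,\|Q(t)^{T}\hat{\mathbf{k}}\|_2^2$, expand $\hat{\mathbf{k}}$ in the orthonormal eigenbasis of \lemref{q}.(1) (as in the proof of \lemref{q}.(4)), and bound every eigenvalue of $Q(t)$ by $O(1/\sqrt{1-\lambda})$: the eigenvalues different from $q(t)$ via \lemref{q}.(2) together with $\max_{t\ge 0}(\sqrt{\beta-1})^t(t+1)=O(1/\sqrt{1-\lambda})$, and the remaining one by $q(t)\le 1/(2-\beta)=O(1/\sqrt{1-\lambda})$. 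This yields $\sum_i Q_{k,i}(t)^2=O(1/(1-\lambda))$ uniformly in $t$ and with no $s_{\max}$ factor, so the head contributes $O(d\,t_1/(1-\lambda))=O(d\log s_{\max}/(1-\lambda)^{3/2})$. Adding head and tail gives $(\Upsilon^{\operatorname{SOS}}(G))^2=O(d\log^2 s_{\max}/(1-\lambda)^{3/2})$, and part~(1) follows on taking square roots.

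I expect the main obstacle to be this head estimate: unlike $M^t$, the matrices $Q(t)$ have spectral radius that grows to order $1/\sqrt{1-\lambda}$, so one must separate out the eigendirection $(s_1,\dots,s_n)$ and verify that all the $\beta_\mathrm{opt}$-dependent quantities --- $2-\beta$, $\beta-1$, $\ln(1/(\beta-1))$, and $\max_{t\ge 0}(\sqrt{\beta-1})^t(t+1)$ --- collapse to the correct powers of $1-\lambda$. This extra factor $1/\sqrt{1-\lambda}$ compared with FOS is precisely what pushes the exponent from $1/2$ (FOS) to $3/4$ (SOS).
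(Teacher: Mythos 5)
Your proposal follows essentially the same route as the paper's proof: part (2) is reduced to part (1) via \autoref{obs:randomized2}, and $\Upsilon^{\operatorname{SOS}}(G)^2$ is bounded by expressing the contributions through $Q(t)$ (\lemref{soscontribution}), splitting at $t_1=\Theta(\log s_{\max}/\sqrt{1-\lambda})$, bounding the head by $O(d\,t_1/(1-\lambda))$ using the fact that all eigenvalues of $Q(t)$ are $O(1/(2-\beta))=O(1/\sqrt{1-\lambda})$, bounding the tail via \lemref{q}.(4) and $\sum_{t\ge t_1}(\beta-1)^t(t+1)^2$, and finally converting $2-\beta$ into $\Theta(\sqrt{1-\lambda})$. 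Your handling of the head is, if anything, slightly more explicit than the paper's (which simply takes the maximum eigenvalue $(1-\sigma^{t+1})/(1-\sigma)$), but the argument and all resulting bounds coincide.
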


\begin{proof}
The bound on the refined local divergence is obtained using the formulation of
\lemref{soscontribution} and the bound in \lemref{q}. This bound together with
the parametric deviation bound of \thmref{randomized} yield the second
statement of the theorem.

We write
\begin{align}
 \left(\Upsilon^{\operatorname{SOS}}(G)\right)^2
&= \sum_{t=0}^{\infty} \sum_{i=1}^{n} \max_{j \in N(i)} \big ( Q_{k,i}(t) - Q_{k,j}(t) \big)^2 \notag\\
&\le  \sum_{t=0}^{\infty}\sum_{i=1}^n \sum_{j\in N(i)}   \left( Q_{k,i}(t) - Q_{k,j}(t) \right)^2 \notag\\
&=  \sum_{t=0}^{t_1-1} \sum_{i=1}^n \sum_{j\in N(i)}   \left( Q_{k,i}(t) - Q_{k,j}(t) \right)^2 \notag\\
& \phantom{{}={}}+ \sum_{t=t_1}^{\infty}\sum_{i=1}^n \sum_{j\in N(i)}
\left( Q_{k,i}(t) - Q_{k,j}(t) \right)^2\notag
\end{align}

In the following, we use $\sigma=\beta-1$ for brevity. Note that we have $0<\sigma<1$.
{
\allowdisplaybreaks
\begin{align}
  \MoveEqLeft \sum_{t=0}^{t_1-1} \sum_{i=1}^n \sum_{j\in N(i)}
  \left( Q_{k,i}(t) - Q_{k,j}(t) \right)^2\notag\\
 &\le  \sum_{t=0}^{t_1} \sum_{i=1}^n \sum_{j\in N(i)} 2 \left(  (Q_{k,i}(t))^2+ (Q_{k,j}(t) )^2  \right)  \notag\\
 &\le  4 \cdot d\cdot \sum_{t=0}^{t_1-1} \sum_{i=1}^n(Q_{k,i}(t))^2
 \le  4 \cdot d\cdot \sum_{t=0}^{t_1-1} \|Q(t) \,\hat{\textbf{k}} \|_2^2 \notag\\
 &=  4 \cdot d\cdot \sum_{t=0}^{t_1-1}\left( \|\hat{\textbf{k}} \|_2\cdot \max_i \gamma_i(t)\right)^2\notag\\
 &=  4 \cdot d\cdot \sum_{t=0}^{t_1-1}\left(\frac{1-\sigma^{t+1}}{1-\sigma}\right)^2\notag\\
 &\le  4 \cdot d\cdot t_1\cdot \left(\frac{1-\sigma^{t_1}}{1-\sigma}\right)^2
 \le  4 \cdot d\cdot t_1\cdot \left({1-\sigma}\right)^{-2}\label{eq:sosupshead}
\end{align}
}
We also get
\begin{align}
\MoveEqLeft \sum_{t=t_1 }^{\infty}\sum_{i=1}^n \sum_{j\in N(i)}   \left( Q_{k,i}(t) - Q_{k,j}(t) \right)^2 \notag\\
&\le   \sum_{t=t_1 }^\infty  \sum_{i=1}^n \sum_{j\in N(i)} 2\left(Q_{k,i}(t) - \frac {s_k} s\cdot q(t)\right)^2\notag \\
& \phantom{{}={}}  + 2 \left(Q_{k,j}(t)- \frac {s_k} s\cdot q(t)\right)^2\notag\\
&=  4\cdot  d\cdot \sum_{t=t_1 }^\infty  \sum_{i=1}^n \left(Q_{k,i}(t)- \frac {s_k} s\cdot q(t)\right)^2\\
&\le  8\cdot  d\cdot s_{\max}\cdot \sum_{t=t_1 }^\infty \left(\sigma^{t} \,(t+1)^2\right)  \label{eq:upsilonmiddle}
\end{align}
where the last inequality follows from part (4) of \lemref{q}. The above
summation can be bounded as follows.
\begin{align}
\MoveEqLeft \sum_{t=t_1}^\infty \left(\sigma^{t}\, (t+1)^2\right) 
= \frac{d}{d\sigma} \left( \sigma \frac{d}{d\sigma} \left(\frac{\sigma^{t_1 + 1}}{1-\sigma}\right) \right) \notag\\
&\le \frac{2{\sigma}^{t_1+2}}{{\left(1-\sigma\right)}^{3}}+\dfrac{\left(2t_1+3\right){\cdot}{\sigma}^{t_1+1}}{{\left(1-\sigma\right)}^{2}}
+\frac{{\left(t_1+1\right)}^{2}{\cdot}{\sigma}^{t_1}}{1-\sigma}
 \label{eq:sosupstail}
\end{align}

Let $t_1 = (\log s_{\max})/(1-\sigma) $. Note that $\sigma^{1/(1-\sigma)} \le 1/e$. Then \eq{sosupstail} yields
\begin{align}
 \sum_{t=t_1}^\infty \left(\sigma^{t}\, (t+1)^2\right)
&=O\left(\frac{\log^2 s_{\max}}{s_{\max}\cdot(1-\sigma)^3}\right) \enspace .
 \label{eq:sosupstail2}
\end{align}

Combining equations \eqref{eq:sosupstail2} and \eqref{eq:upsilonmiddle} and then \eqref{eq:sosupshead} we get
\begin{align}
 \left(\Upsilon^{\operatorname{SOS}}(G)\right)^2
&= O\left(\frac{d\cdot \log s_{\max}}{\left(1-\sigma\right)^{3} }\right)  + O\left(\frac{d\cdot \log^2 s_{\max}}{\left(1-\sigma\right)^{3} }\right) \notag\\
&= O\left(\frac{d\cdot \log^2 s_{\max}}{\left(1-\sigma\right)^{3} }\right) \enspace .\notag
\end{align}
Observe that
\begin{equation*} 1-\sigma = (1-\sqrt{\sigma}) (1+\sqrt{\sigma}) \ge  (1-\sqrt{\sigma})=  1-\sqrt{\beta-1} \enspace , \end{equation*}
where we have
\begin{align*}
 1-\sqrt{\beta-1}
&=  1-\frac{\lambda}{1+\sqrt{1-\lambda^2}}\\
&=  \frac{1-\lambda+\sqrt{1-\lambda^2}}{1+\sqrt{1-\lambda^2}} \\
&\ge \frac{1}{2}\cdot \left({1-\lambda+\sqrt{1-\lambda^2}}\right) \\
&\ge \frac{1}{2}\cdot {\sqrt{1-\lambda}\cdot\left(\sqrt{1-\lambda}+\sqrt{1+\lambda}\right)} \\
&\ge \frac{1}{2}\cdot {\sqrt{1-\lambda}} \enspace .
\end{align*}
Therefore,
\begin{equation*}
\Upsilon^{\operatorname{SOS}}(G) =O\left(\frac{ \sqrt{d}\cdot \log s_{\max}}{\left(1-\lambda\right)^{3/4} }\right) \enspace .
\end{equation*}
This finishes the proof of the first statement. The bound in the second statement follows immediately from statement (1) and  \thmref{randomized}.
\end{proof}



\section{Negative Load for SOS}\label{sec:negload}
In second order diffusion nodes might not have enough load to satisfy all
their neighbors' demand. 
This situation, which we refer to as \emph{negative load}, motivates studying by 
how much a node's load may become negative. 
Here we study the  
minimum amount of  load that nodes need in order  to prevent this event.
In the following we calculate a bound on the minimum load of every node that 
holds during the whole balancing process. Note that, if every processor has 
such a minimum load at the beginning of the balancing process, there will be no 
processor with negative load. Hence, these bounds can also be regarded as 
bounds on the minimum load of every processor in order to avoid negative load.

Let $\bar{x} = (\bar{x}_1,\dots,\bar{x}_n)$ be the balanced load vector. Define $\Delta(t) = \| x(t)-\bar{x}\|_\infty$, and $\Phi(t)= \|x(t)-\bar{x}\|_2$, where $\|.\|$ is the norm operator.
Then the following observation estimates the load at the end of every step.

\begin{observation}\label{obs:soslb}
In continuous SOS  with $\beta = \beta_{\small{opt}}$ we have \begin{equation*} x(t) \ge -\sqrt{n} \cdot\Delta(0) \enspace .\end{equation*}
\end{observation}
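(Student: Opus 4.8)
The plan is to bound each coordinate $x_i(t)$ from below by controlling how far the load vector can drift away from the balanced vector $\bar x$ in the $\ell_2$ norm, and then using the trivial inequality $\|v\|_\infty \le \|v\|_2$. Since $\bar x_i = m s_i / s \ge 0$, it suffices to show that $x_i(t) - \bar x_i \ge -\sqrt n \cdot \Delta(0)$, i.e. $\Phi(t) = \|x(t) - \bar x\|_2 \le \sqrt n \cdot \Delta(0)$ would be more than enough if we also knew $\Delta(0) = \|x(0)-\bar x\|_\infty$; but in fact the cleaner route is: $x_i(t) \ge \bar x_i - \|x(t)-\bar x\|_\infty \ge -\Delta(t)$, so it is enough to prove $\Delta(t) \le \sqrt n \cdot \Delta(0)$ for all $t$, or even the weaker $\Phi(t) \le \sqrt n \cdot \Delta(0)$ since $\Delta(t) \le \Phi(t)$. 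Actually $\Phi(0) \le \sqrt n \cdot \Delta(0)$ trivially (an $n$-dimensional vector with $\ell_\infty$ norm $\Delta(0)$ has $\ell_2$ norm at most $\sqrt n \Delta(0)$), so the crux is to show that $\Phi$ is non-increasing along the continuous SOS iteration with $\beta = \beta_{\mathrm{opt}}$.

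The key step is therefore a contraction/non-expansion estimate for the error vector $\epsilon(t) := x(t) - \bar x$ under the second-order recursion. Writing $\epsilon(t+1) = \beta M \epsilon(t) + (1-\beta)\epsilon(t-1)$ (with the first step $\epsilon(1) = M\epsilon(0)$), I would decompose $\epsilon(0)$ into the eigenbasis of the symmetrized diffusion operator — concretely, expand in the orthonormal basis $\{u_i\}$ of eigenvectors of $S^{-1/2} M S^{1/2}$, which is symmetric since $M = I - L S^{-1}$ makes $S^{-1}M$ symmetric (this is exactly the device already used in \lemref{msecondnorm} and \lemref{q}.(4)). Passing to the rescaled coordinates $\tilde\epsilon = S^{-1/2}\epsilon$ diagonalizes the recursion, so each mode $j$ with eigenvalue $\lambda_j$ evolves by the scalar second-order recursion whose solution is governed by the $\gamma_j(t)$ computed in \lemref{q}.(2)–(3): for $|\lambda_j| \le \lambda$ and $\beta = \beta_{\mathrm{opt}} = 2/(1+\sqrt{1-\lambda^2})$ we have $\beta - 1 = \lambda^2/(1+\sqrt{1-\lambda^2})^2 \le 1$, and the bound $|\gamma_j(t)| \le (\sqrt{\beta-1})^t (t+1)$ together with the fact that a pair of consecutive iterates controls the state shows that the "energy" of each mode does not grow. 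The component along $(s_1,\dots,s_n)$ (the $\lambda_j = 1$ mode) is precisely the balanced direction and is annihilated in $\epsilon$ by definition of $\bar x$, so it contributes nothing. Summing the per-mode bounds and converting back from the $\tilde\epsilon$ coordinates gives $\Phi(t) \le \Phi(0) \le \sqrt n \cdot \Delta(0)$.

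The main obstacle I anticipate is making the "energy does not grow" claim rigorous for a \emph{second}-order recursion: unlike the first-order case where $\|M\epsilon\|_2 \le \lambda' \|\epsilon\|_2$ on the orthogonal complement of the stationary direction, here a single iterate $\epsilon(t)$ can transiently exceed $\epsilon(0)$ even while the underlying "amplitude" decays, because of the $(t+1)$ factor in $\gamma_j(t)$. So I would need the right Lyapunov quantity — e.g. track the pair $(\tilde\epsilon_j(t), \tilde\epsilon_j(t-1))$ through the $2\times 2$ companion matrix $\begin{psmallmatrix}\beta\lambda_j & 1-\beta\\ 1 & 0\end{psmallmatrix}$ and bound its powers — or simply invoke that for the \emph{optimal} $\beta$ this companion matrix has both eigenvalues of modulus $\sqrt{\beta-1} \le 1$ (when $|\lambda_j| \le \lambda$, the discriminant $\beta^2\lambda_j^2 - 4(\beta-1)$ is $\le 0$, giving complex conjugate roots of modulus exactly $\sqrt{\beta-1}$), hence is power-bounded, and a careful constant chase shows the relevant norm never exceeds its initial value. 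Granting that clean modulus computation — which is essentially already packaged in the explicit formulas of \lemref{q}.(2) — the rest is bookkeeping, and the stated bound $x(t) \ge -\sqrt n\,\Delta(0)$ follows.
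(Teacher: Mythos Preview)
Your overall strategy---reduce to $\Delta(t) \le \Phi(t)$, use $\Phi(0) \le \sqrt n\,\Delta(0)$, and then argue $\Phi(t) \le \Phi(0)$---is exactly what the paper does. The paper, however, does not redo the spectral analysis: it simply cites \cite{MGS98} for the inequality $\Phi(t) \le \lambda^t \Phi(0)$ (valid for continuous SOS with the optimal $\beta$), and the observation follows in one line from $-x_i(t)\le\Delta(t)$ and $\lambda^t<1$.

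The gap in your proposal is precisely the step you yourself flag as the obstacle. You appeal to the bound $|\gamma_j(t)| \le (\sqrt{\beta-1})^t(t+1)$ from \lemref{q}, but those $\gamma_j(t)$ are the eigenvalues of $Q(t)$, defined with initial condition $Q(1)=\beta M$; this is the propagator for \emph{rounding errors}, not for the load vector itself. The load obeys $x(t)=P(t)x(0)$ with $P(1)=M$ (the first round of SOS is plain FOS, cf.\ \eq{sositeration}), and that different initial condition is exactly what kills the linear factor: for the critical mode $|\lambda_j|=\lambda$ one computes $p_j(t)=(1+t\sqrt{1-\lambda^2})\,(\sqrt{\beta-1})^t$, and Bernoulli's inequality $(1+a)^t\ge 1+at$ with $a=\sqrt{1-\lambda^2}$ gives $|p_j(t)|\le\lambda^t$. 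By contrast, with the $Q$-initialization you cite, the factor $(t+1)(\sqrt{\beta-1})^t$ genuinely exceeds $1$ (already at $t=1$) whenever $\beta>5/4$, so no ``constant chase'' from that bound can recover $\Phi(t)\le\Phi(0)$. Your fallback---that the companion matrix is power-bounded because its eigenvalues have modulus $\sqrt{\beta-1}<1$---only yields $\Phi(t)\le C\,\Phi(0)$ with a constant $C$ depending on $\lambda$, which would weaken the conclusion to $x(t)\ge -C\sqrt n\,\Delta(0)$ rather than the stated bound.
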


\begin{proof}
We first note that
\begin{equation}\label{eq:deltat}
 \Delta(t) \le \Phi(t) 
\stackrel{(*)}{\le} \lambda^{t} \cdot \Phi(0)
\le  \lambda^{t} \cdot\sqrt{n}\cdot \Delta(0) \enspace ,
\end{equation}
where $(*)$ follows from a result by Mutukrishnan et al.~\cite{MGS98}. They show that $x(t) = M(t)\, x(0)$ for an $n\times n$ matrix $M(t)$ defined recursively. 
They also show that $\Phi(t) \le \gamma(t) \cdot \Phi(0) $  where $\gamma(t)$ is the second largest eigenvalue in magnitude of $M(t)$ and
$\gamma(t) \le  \lambda^{t}$ \cite[Proof of Theorem 2]{MGS98}. 
Though they only consider homogeneous networks, their  argument also applies to the heterogeneous case.
The proof now follows by considering the facts $-x(t) \le \Delta(t) $ and $\lambda^{t} < 1$.
\end{proof}

It should be noted that the load during a single balancing step can be lower than the bound given in
in \obsref{soslb} since \obsref{soslb} 
considers only snapshots of the network at the end of each round. 
It might be possible that a node has to send more load items to some of its
neighbors than it has at the beginning of round $t$, but still its load remains
positive at the end of round $t$. This can happen if it also receives many load 
items from other neighbors in round $t$. To study the negative load issue it 
is helpful to divide every round in  two distinct steps, where in the first step
 all nodes send out their outgoing flows. In the 
second step, they receive incoming flows sent by their neighbors in the
first step. At the end of the first step  
all the outgoing flows are sent out but no incoming flow is yet received. 
To prevent negative load the load of every node has to be non-negative at  
this point. We call this state the \emph{transient state} and use $\breve{x}_i(t)$ to denote 
the load  in the transient state.
Note that we always have $\breve{x}_i(t) \le \x{}{i}{t}$ and $\breve{x}_i(t) \le \x{}{i}{t+1}$.
The following theorem provides a lower bound on $\breve{x}_i(t)$.

\begin{theorem}\label{thm:negload}
In a continuous SOS process with $\beta = \beta_{\small{opt}}$ we have
  $\breve{x}_i(t) \ge -O\left( {\sqrt{n}\cdot \Delta(0) }/{\sqrt{1-\lambda}}\right)$.
\end{theorem}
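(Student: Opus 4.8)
The plan is to bound the transient load $\breve{x}_i(t)$ by accounting, on top of the end-of-round load $x_i(t)$, for the total outgoing flow that node $i$ pushes to its neighbors in round $t$. Concretely, $\breve{x}_i(t) = x_i(t) - \sum_{j \in N(i)} \max\bigl(\y{}{i}{j}{t},0\bigr) \ge x_i(t) - \sum_{j \in N(i)} \bigl|\y{}{i}{j}{t}\bigr|$, so it suffices to combine the lower bound $x_i(t) \ge -\sqrt{n}\cdot\Delta(0)$ from \obsref{soslb} with an upper bound on $\sum_{j \in N(i)} |\y{}{i}{j}{t}|$. The first step is therefore to get a handle on the individual edge flows $\y{}{i}{j}{t}$ in continuous SOS. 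From \eq{sosy}, $\y{}{i}{j}{t} = (\beta-1)\,\y{}{i}{j}{t-1} + \beta\,\alpha_{i,j}\,(x_i(t)-x_j(t))$, and unrolling this recursion (with the first round being FOS) gives $\y{}{i}{j}{t}$ as a $(\beta-1)$-geometrically-weighted sum of the load differences $x_i(s)-x_j(s)$ over rounds $s \le t$; schematically $|\y{}{i}{j}{t}| \le \beta\,\alpha_{i,j} \sum_{s=1}^{t} (\beta-1)^{t-s}\,|x_i(s)-x_j(s)|$ (the $\beta-1$ factor being positive since $\beta>1$ for $\beta_{\mathrm{opt}}$).

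Next I would control $|x_i(s)-x_j(s)|$ by the global imbalance: since $x_i(s)$ and $x_j(s)$ each differ from their respective balanced values by at most $\Delta(s) \le \lambda^s\sqrt{n}\,\Delta(0)$ (the chain of inequalities \eq{deltat}), and since for adjacent nodes in the heterogeneous model the balanced loads $\bar{x}_i, \bar{x}_j$ differ by at most roughly $m\,s_{\max}/s = s_{\max}\,\Delta$-order terms — actually it is cleaner to write $|x_i(s)-x_j(s)| \le |x_i(s)-\bar{x}_i| + |\bar{x}_i-\bar{x}_j| + |\bar{x}_j-x_j(s)|$ and bound the middle term by $O(\Delta(0))$ as well, or simply by $2\Delta(0) + (\text{gap in } \bar x)$ — we obtain $|x_i(s)-x_j(s)| = O\bigl(\lambda^s \sqrt{n}\,\Delta(0)\bigr)$ up to lower-order slack. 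Plugging this in, $\sum_{j\in N(i)} |\y{}{i}{j}{t}| \le \beta \sum_{j\in N(i)}\alpha_{i,j}\sum_{s=1}^{t}(\beta-1)^{t-s}\,O(\lambda^s\sqrt n\,\Delta(0))$. Using $\sum_{j}\alpha_{i,j} \le 1$ and splitting the convolution $\sum_{s}(\beta-1)^{t-s}\lambda^s$, which is $O\bigl(1/(1-\max(\beta-1,\lambda))\bigr) = O\bigl(1/(1-\lambda)\bigr)$ since $\sqrt{\beta-1} = \lambda/(1+\sqrt{1-\lambda^2})$ and $1-\sqrt{\beta-1} \ge \tfrac12\sqrt{1-\lambda}$ (this last estimate is already established verbatim at the end of the proof of \thmref{sosrand}, so $1-(\beta-1) \ge 1-\sqrt{\beta-1} \ge \tfrac12\sqrt{1-\lambda}$, and likewise $1-\lambda$ dominates), the geometric sum over $s$ is $O(1/\sqrt{1-\lambda})$ — here one has to be a little careful: $\sum_{s=1}^t (\beta-1)^{t-s}\lambda^s$ is at most $\lambda^{?}\cdot\tfrac{1}{1-(\beta-1)/\lambda}$ or symmetrically $\tfrac{1}{1-\lambda/(\beta-1)}$ depending on which of $\beta-1,\lambda$ is larger, and in either case is $O(1/(1-\lambda))$; combined with a stray $\lambda^{\min(\dots)}$ one gets the cleaner $O(1/\sqrt{1-\lambda})$ after using $\beta-1 \le \lambda$. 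So $\sum_{j\in N(i)}|\y{}{i}{j}{t}| = O\bigl(\sqrt n\,\Delta(0)/\sqrt{1-\lambda}\bigr)$, and adding the $-\sqrt n\,\Delta(0)$ from \obsref{soslb} (which is absorbed into the bigger term) yields $\breve{x}_i(t) \ge -O\bigl(\sqrt n\,\Delta(0)/\sqrt{1-\lambda}\bigr)$.

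The main obstacle I anticipate is the convolution estimate $\sum_{s=1}^{t}(\beta-1)^{t-s}\lambda^s = O(1/(1-\lambda))$ and, more delicately, squeezing it down to the claimed $1/\sqrt{1-\lambda}$: both geometric ratios $\beta-1$ and $\lambda$ approach $1$ as $1-\lambda \to 0$, so one needs the precise relation $\sqrt{\beta-1} = \lambda/(1+\sqrt{1-\lambda^2})$ to see that the decay of $(\beta-1)^{t-s}\lambda^s$ beats $1-\sqrt{1-\lambda}$ rather than merely $1-\lambda$; the sum telescopes to something like $\frac{\lambda^{t+1} - (\beta-1)^{t+1}}{\lambda - (\beta-1)}$ (when $\lambda \ne \beta-1$), and bounding this requires care near the boundary — this is exactly the kind of estimate already carried out for $\Upsilon^{\operatorname{SOS}}(G)$, so I would mirror that computation. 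A secondary subtlety is bounding $|\bar x_i - \bar x_j|$ for adjacent $i,j$; since $\bar x_i = m s_i/s$ this is $m|s_i-s_j|/s \le m s_{\max}/s$, which should be subsumed into $O(\Delta(0))$ under the implicit normalization, but if not it contributes an additive $O(d\cdot s_{\max}\cdot m/s / \sqrt{1-\lambda})$ lower-order term that is dominated for the regimes of interest.
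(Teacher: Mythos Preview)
Your proposal is correct and follows essentially the same route as the paper: bound $\breve{x}_i(t)\ge x_i(t)-\sum_{j\in N(i)}|\y{}{i}{j}{t}|$, use \obsref{soslb} for $x_i(t)$, and control $g(t)=\sum_{j}|\y{}{i}{j}{t}|$ via the SOS recursion together with $\Delta(s)\le\lambda^s\sqrt{n}\,\Delta(0)$, arriving at exactly the convolution $\sum_{s}(\beta-1)^{t-s}\lambda^s=\bigl(\lambda^{t+1}-(\beta-1)^{t+1}\bigr)/(\lambda-(\beta-1))$; the paper then bounds $\lambda-(\beta-1)>\tfrac{\lambda}{4}\sqrt{1-\lambda}$ directly (rather than via the $\Upsilon^{\operatorname{SOS}}$ computation) to get the $1/\sqrt{1-\lambda}$ factor.

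One simplification worth noting: your ``secondary subtlety'' about $|\bar{x}_i-\bar{x}_j|$ evaporates if you use the heterogeneous form of the flow recursion $\y{}{i}{j}{t}=(\beta-1)\,\y{}{i}{j}{t-1}+\beta\,\alpha_{i,j}\bigl(x_i(t)/s_i-x_j(t)/s_j\bigr)$, since $\bar{x}_i/s_i=\bar{x}_j/s_j$; triangulating through this common value and using $s_i,s_j\ge 1$ gives $\bigl|x_i(t)/s_i-x_j(t)/s_j\bigr|\le 2\Delta(t)$ with no residual term. The paper does exactly this, which keeps the argument clean in the heterogeneous case.
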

\begin{proof}
Observe that for $t>1$ and an arbitrary node $i$
\begin{equation}
 \y{}{i}{j}{t} = (\beta-1)\cdot \y{}{i}{j}{t-1}+ \beta\cdot \alpha_{i,j}\cdot \left(\frac{\x{}{i}{t}}{s_i}-\frac{\x{}{j}{t}}{s_j}\right) \label{sosyrec}
\end{equation}
and since $\bar{x_i}/s_i = \bar{x}_j/s_j$ we have
\begin{align*} 
\MoveEqLeft \sum_{j\in N(i)}|\y{}{i}{j}{t}|\\ 
&\le (\beta-1)\cdot\sum_{j\in N(i)}|\y{}{i}{j}{t-1}| \\
&\phantom{{}={}} + \beta\cdot \sum_{j\in N(i)}\alpha_{i,j}\cdot \left|\frac{\x{}{i}{t}}{s_i}-\frac{\x{}{j}{t}}{s_j}\right|\\
&\le  (\beta-1)\cdot\sum_{j\in N(i)}|\y{}{i}{j}{t-1}| \\
&\phantom{{}={}} + \beta\cdot \sum_{j\in N(i)}\alpha_{i,j}\cdot \left(\left|\frac{\x{}{i}{t}}{s_i}-\frac{\bar{x}_i}{s_i}\right| + \left|\frac{\x{}{j}{t}}{s_j}-\frac{\bar{x}_j}{s_j}\right|\right) \enspace .
\end{align*}

Let  $g(t)=\sum_{j\in N(i)}|\y{}{i}{j}{t}|$. Recall that $\beta<2$, and for all $i$, $s_i\ge 1$  and $\sum_{j\in N(i)\cup\{i\}}\alpha_{i,j} = 1$. So we get
\begin{align}
\MoveEqLeft g(t+1)  \notag \\
&\le (\beta-1)\cdot g(t) \notag \\
& \phantom{{}={}} + 2\sum_{j\in N(i)}\alpha_{i,j} \left(\left|\x{}{i}{t+1}-\bar{x}_i\right| + \left|\x{}{j}{t+1}-\bar{x}_j\right|\right)\notag\\
&\le  (\beta-1)\cdot g(t) + 4\cdot \Delta(t+1)\cdot\sum_{j\in N(i)}\alpha_{i,j} \notag\\
&\le  (\beta-1)\cdot g(t) + 4\cdot \Delta(t+1) \notag\\
\intertext{and by \eq{deltat}} 
&\le  (\beta-1)\cdot g(t) + 4\cdot  \lambda^{t+1} \cdot\sqrt{n}\cdot \Delta(0) \enspace . \label{eq:grec}
\end{align}
Note that $g(0) <\Delta(0)$. From the recurrence of \eq{grec} we obtain
\begin{align}
g(t+1) &\le 4\sum_{i=0}^t (\beta-1)^{t-i} \cdot \lambda^i \cdot \sqrt{n}\cdot \Delta(0)\notag\\
&=  4\,\sqrt{n}\cdot \Delta(0) \cdot \frac{\lambda^{t+1} - (\beta-1)^{t+1}}{\lambda - (\beta-1)}\notag\\
&\le 4\,\sqrt{n}\cdot \Delta(0) \cdot \frac{\lambda}{\lambda - (\beta-1)}\label{eq:gexpansion}
\end{align}
where the last inequality holds because $\lambda <1$.
On the other hand, we have
\begin{align*}
\lambda - (\beta-1) &= \frac{(1+\lambda) \sqrt{1-\lambda^2}-(1-\lambda)}{1+\sqrt{1-\lambda^2}} \\
&> \frac{\sqrt{1-\lambda}\cdot (\sqrt{(1+\lambda)^3}-\sqrt{1-\lambda})}{2}\\
&> \frac{\sqrt{1-\lambda}\cdot (1-\sqrt{1-\lambda})}{2}\\&
> \frac{\sqrt{1-\lambda}\cdot \lambda}{4} \enspace .
\end{align*}
Therefore, we can apply the above to  \eq{gexpansion} to get the bound
$g(t) =O\left( {\sqrt{n}\cdot \Delta(0) }/{\sqrt{1-\lambda}}\right)$.

To complete the proof, we note that
 $\breve{x}_i(t) \ge \x{}{i}{t}-g(t),$
 while by \obsref{soslb} we have $ \x{}{i}{t} \ge -\sqrt{n} \cdot\Delta(0)$. This yields the
lower bound of $-O\left( {\sqrt{n}\cdot \Delta(0) }/{\sqrt{1-\lambda}}\right)$.
\end{proof}

The next result shows that the asymptotic lower bound  obtained in \obsref{soslb} also holds for the randomized discrete second-order process R = R(SOS) in many cases, for instance, when $s_{\max}$ is polynomial in $n$ and $d/(1-\lambda)^{3/4} = O(n^{0.5-\varepsilon})$ for some $\varepsilon>0$.
This is true, e.g., for tori with four or more dimensions, hypercubes, and expanders.
Then we can apply a similar argument as in the proof of \thmref{negload} to get a lower bound for $R$.

\begin{theorem}\label{thm:negloaddisc}
In a discrete SOS process $R=R(\operatorname{SOS})$ with $\beta = \beta_{\small{opt}}$,
$s_{\max}$ polynomial in $n$, and $d/(1-\lambda)^{3/4} = O(n^{0.5-\epsilon})$ for some 
$\epsilon > 0$, we have
  \[\breve{x}_i^{R}(t) \ge -O\left( \frac{\sqrt{n}\cdot \Delta(0) +d^2}{\sqrt{1-\lambda}}\right) \enspace .\]
\end{theorem}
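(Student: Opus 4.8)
The plan is to follow the argument for the continuous case (\thmref{negload}) almost verbatim, while tracking the additional rounding error introduced by the scheme $R$ and using \thmref{sosrand} to keep the discrete load vector close to its continuous counterpart. As in \thmref{negload}, the key quantity is $g^R(t) := \sum_{j \in N(i)} |\Y{R}{i}{j}{t}|$, the total outgoing flow of node $i$ at the transient step of round $t$, since $\breve{x}_i^{R}(t) \ge \X{R}{i}{t} - g^R(t)$; it then suffices to lower-bound $\X{R}{i}{t}$ and upper-bound $g^R(t)$.

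First I would derive a recursion for $g^R(t)$. Writing $\hat{Y}_{i,j}(t) = (\beta-1)\,\Y{R}{i}{j}{t-1} + \beta\,\alpha_{i,j}\bigl(\X{R}{i}{t}/s_i - \X{R}{j}{t}/s_j\bigr)$ for the continuous scheduled flow, and recalling that $\Y{R}{i}{j}{t} = \lfloor \hat{Y}_{i,j}(t)\rfloor + Z_{i,j}(t)$ on outgoing edges (and $-\Y{R}{j}{i}{t}$ on incoming ones), we obtain $g^R(t) \le \sum_{j\in N(i)} |\hat{Y}_{i,j}(t)| + W_i(t)$, where $W_i(t)$ collects the excess tokens handled by $i$ in round $t$: those sent out (at most $\lceil r_i\rceil \le d$) and those arriving (at most $\sum_{j\in N(i)}\lceil r_j\rceil \le d^2$), so $W_i(t) = O(d^2)$ \emph{deterministically}. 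The first sum is handled exactly as in \thmref{negload}: using $s_i\ge 1$, $\bar{x}_i/s_i = \bar{x}_j/s_j$, $\sum_{j\in N(i)}\alpha_{i,j}\le 1$, and $\beta<2$, it is at most $(\beta-1)\,g^R(t-1) + 4\,\Delta^R(t)$ with $\Delta^R(t) := \|x^R(t) - \bar{x}\|_\infty$. Hence $g^R(t) \le (\beta-1)\,g^R(t-1) + 4\,\Delta^R(t) + O(d^2)$, and $g^R(0) = O(\Delta(0) + d^2)$ since round $0$ is FOS.

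Next I would control $\Delta^R(t)$. By \thmref{sosrand}(2), for any fixed round $\max_k |\X{R}{k}{t} - \x{\operatorname{SOS}}{k}{t}| = O\bigl(d\log s_{\max}\sqrt{\log n}/(1-\lambda)^{3/4}\bigr)$ w.h.p.; a union bound over the $\operatorname{poly}(n)$ rounds up to the continuous balancing time (after which $\Delta^{\operatorname{SOS}}$ is negligible and the recursion is driven purely by the $O(d^2)$ term) makes this hold simultaneously, and under the hypotheses $s_{\max} = \operatorname{poly}(n)$ and $d/(1-\lambda)^{3/4} = O(n^{1/2-\epsilon})$ this bound is $o(\sqrt{n})$. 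Combined with $\Delta^{\operatorname{SOS}}(s) \le \lambda^s \sqrt{n}\,\Delta(0)$ from \eq{deltat}, this gives $\Delta^R(s) \le \lambda^s\sqrt{n}\,\Delta(0) + o(\sqrt{n})$. Unrolling the recursion as in \thmref{negload}, the geometric factor $(\beta-1)<1$ turns the $O(d^2)$ terms into $O\bigl(d^2/(2-\beta)\bigr) = O\bigl(d^2/\sqrt{1-\lambda}\bigr)$ (using $2-\beta \ge \sqrt{1-\lambda^2}\ge\sqrt{1-\lambda}$), and — via the estimate $\lambda - (\beta-1) > \lambda\sqrt{1-\lambda}/4$ established in \thmref{negload} (or, more simply, using $\Delta^{\operatorname{SOS}}(s)\le\sqrt n\,\Delta(0)$ uniformly) — the $4\,\Delta^R(s)$ terms into $O\bigl(\sqrt{n}\,\Delta(0)/\sqrt{1-\lambda}\bigr)$ plus an absorbed $o\bigl(\sqrt{n}/\sqrt{1-\lambda}\bigr)$; with $g^R(0)$ also absorbed, $g^R(t) = O\bigl((\sqrt{n}\,\Delta(0) + d^2)/\sqrt{1-\lambda}\bigr)$. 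Finally $\X{R}{i}{t} \ge \x{\operatorname{SOS}}{i}{t} - o(\sqrt n) \ge -\sqrt{n}\,\Delta(0) - o(\sqrt{n})$ by \obsref{soslb} and the deviation bound, so $\breve{x}_i^{R}(t) \ge \X{R}{i}{t} - g^R(t) = -O\bigl((\sqrt{n}\,\Delta(0) + d^2)/\sqrt{1-\lambda}\bigr)$, as claimed.

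The main obstacle I expect is the bookkeeping of the rounding surplus $W_i(t)$: a single incoming edge can deliver up to $d$ excess tokens, so the naive per-edge bound $|e_{i,j}| = O(1)$ fails, and one must argue that the \emph{total} number of excess tokens arriving at a node in one round is nonetheless only $O(d^2)$, and — crucially — that feeding this additive $O(d^2)$ into the second-order flow recursion costs merely a $1/\sqrt{1-\lambda}$ factor rather than accumulating linearly in $t$. A secondary subtlety is that \thmref{sosrand} is a per-round statement, so the union bound must be confined to $\operatorname{poly}(n)$ rounds (which suffices, since beyond the continuous balancing time $\Delta^{\operatorname{SOS}}$ is negligible), and the precise role of the hypotheses is exactly to force the deviation down to $o(\sqrt{n})$ so that it is never the dominant term; for degenerate instances with $\Delta(0) = o(1)$ one would phrase the conclusion with this $o(\sqrt n/\sqrt{1-\lambda})$ contribution made explicit.
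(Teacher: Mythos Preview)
Your proposal is correct and takes essentially the same approach as the paper: add an $O(d)$ slack per edge to the flow recursion (giving the extra $+d^2$ in the inequality for $g^R(t)$), invoke the deviation bound of \thmref{sosrand} under the stated hypotheses to replace $\Delta^R(t)$ by $\lambda^t\sqrt{n}\,\Delta(0)+o(\sqrt{n})$, and then unroll exactly as in \thmref{negload}. Your write-up is in fact more explicit than the paper's own proof, which states the modified recursion and refers back to \thmref{negload} without spelling out where the hypotheses on $s_{\max}$ and $d/(1-\lambda)^{3/4}$ enter or that a union bound over rounds is required.
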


\begin{proof}
To show this result we first rewrite \eq{sosy} as follows.
%
%
\begin{align*}
 \y{}{i}{j}{t} & \le
 (\beta-1)\cdot \y{}{i}{j}{t-1} \\
& \phantom{{}={}} +\beta\cdot \alpha_{i,j}\cdot \left(\frac{\x{}{i}{t}}{s_i}-\frac{\x{}{j}{t}}{s_j}\right) +d \enspace ,
\end{align*}
resulting in
\begin{align}
g(t+1) &\le (\beta-1)\cdot g(t) + 4\cdot  \lambda^{t+1} \cdot\sqrt{n}\cdot \Delta(0) +d^2 \enspace .\notag
\end{align}
Then we rewrite \eqref{sosyrec} in the proof of \thmref{negload} as follows.
\begin{align}
 \y{}{i}{j}{t}
&\le (\beta\!-\!1)\cdot \y{}{i}{j}{t\!-\!1}+ \beta\cdot \alpha_{i,j}\cdot \left(\frac{\x{}{i}{t}}{s_i}\!-\!\frac{\x{}{j}{t}}{s_j}\right) +d \notag
\end{align}
which gives us
\begin{align}
g(t+1) &\le (\beta-1)\cdot g(t) + 4\cdot  \lambda^{t+1} \cdot\sqrt{n}\cdot \Delta(0) +d^2 \enspace .\notag
\end{align}
Proceeding with similar steps as in the proof of \thmref{negload} we get the
 following bound for D. 
\[\breve{x}_i^{D}(t) \ge -O\left( \frac{\sqrt{n}\cdot \Delta(0) +d^2}{\sqrt{1-\lambda}}\right) \enspace . \]

\end{proof}

\FloatBarrier
\section{Simulation Results} \label{sect:simulation}

In this section we present some simulation results for several balancing
algorithms. We simulated discrete versions of both, first order and second
order balancing schemes where we use \emph{randomized rounding} as described in
Section \ref{sec:rand} for the discretization. 
Our main goal is to see under which circumstances SOS outperforms FOS.

We consider different networks which are based on various graph classes.
A complete list of all graph types and parameters used for the
simulation can be obtained from Table \ref{tab:graph-types}. 
%
Our simulation tool is highly modularized and
supports various load balancing schemes and rounding procedures. It can be
used to simulate the load balancing process using multiple threads on a
shared-memory machine. To fully utilize the capability of modern CPUs we used
OpenMP to generate code that performs suitable instructions in parallel.
The simulation was implemented using the
C\kern-.02em\raise.667ex\hbox{\footnotesize{++}} programming language. Our
tests were conducted on an Intel Core i7 machine with 4 cores and 8 GB
system memory. 

If not stated otherwise we initialize our system by assigning a load of $1000
\cdot n$ to a fixed node $v_0$, where $n$ is the number of nodes of the network,
and the load of all other nodes is set to zero. Our data plotted in \autoref{fig:2d-torus-1000x1000-initial-loads}, however, indicate
that the amount of initial load does only have limited impact on the behavior
of the simulation, especially once the system has converged.

We investigate the following metrics measuring the quality of the load
distribution.
\begin{enumerate}[leftmargin=*]
\item \textbf{Maximal local load difference.} This is
the maximum load difference between the nodes connected by an edge.
 That is, the maximum local load difference for given
load vectors $x(t)$ in a round $t$ is defined as
\begin{equation*}
\phi_{\text{local}}\left(x(t)\right) = \max_{\left\{u, v\right\} \in E}\left\{ \left| x_u(t) - x_v(t) \right| \right\} \enspace .
\end{equation*}

\item \textbf{Maximum load.}  
This is the maximum load of any node minus the average load $\overline{x}$.
\begin{equation*}
\phi_{\text{global}}\left(x(t)\right) = \Delta(t) = \max_{v \in V}\left\{ x_v(t) \right\} - \overline{x}
\end{equation*}

\item \textbf{Potential based on $2$-norm.} We compute the value of the potential function
$\phi_t$ proposed by Muthukrishnan et al.\ \cite{MGS98} which is defined as
\begin{equation*}
\phi_t = \phi(x(t)) = \sum_{v \in V}\left( x_v(t) - \overline{x} \right)^2
\end{equation*}
In our plots we divided this potential by $n$.

\item \textbf{Impact of eigenvectors on load.} We initially compute the
eigenvectors of the diffusion matrix and solve in each round $t$ the the linear
system $V\cdot a = x(t)$, which is defined over the orthonormal matrix of $n$
eigenvectors $V$ and the load vector $x(t)$. We then identify the
\emph{leading} eigenvector, i.e., the eigenvector with the largest $|a_i|$. The
coefficients $a_i$ for $i = 2, \dots, n$ describe together with the eigenvectors
the load imbalance completely \cite{GV61}. Observe that the coefficient $a_i$
in round $t$ multiplied with the corresponding eigenvalue $\mu_i$ yields the
coefficient in the following round $t+1$. 
Therefore, the largest coefficient governs the convergence rate in that step.

\item \textbf{Remaining imbalance.}
This is the remaining imbalance of the converged system (see \cite{ES10}),
i.e., the number of tokens above average once this number starts to fluctuate and does not visibly improve any
more. This imbalance does not occur in continuous systems and is due to the
applied rounding in discrete systems.
\end{enumerate}

\noindent In this first section we focus on the torus. For results w.r.t.\ other graph classes see
\autoref{sect:other}.

\subsection{Results for the Torus}

\begin{table} \centering
\caption{graph types and parameters used in simulation}
\label{tab:graph-types}
\begin{tabular}{lll}
Graph & Size & Parameter $\beta$ \\
\hline
Two-Dimensional Torus& $n = 1000\times1000$ & $1.9920836447$ \\
Two-Dimensional Torus& $n = 100\times100$ & $1.9235874877$ \\
Random Graph (CM) & $n = 10^6$, $d = \left\lfloor\log_2{n}\right\rfloor$ & $1.0651965147$\\
Random Geometric Graph & $n = 10^4$, $r = \sqrt[4]{\log{n}}$ & $1.9554636334$\\
Hypercube & $n = 2^{20}$ & $1.4026054847$\\
\end{tabular}

\end{table}

\begin{figure}
\centering
\includegraphics{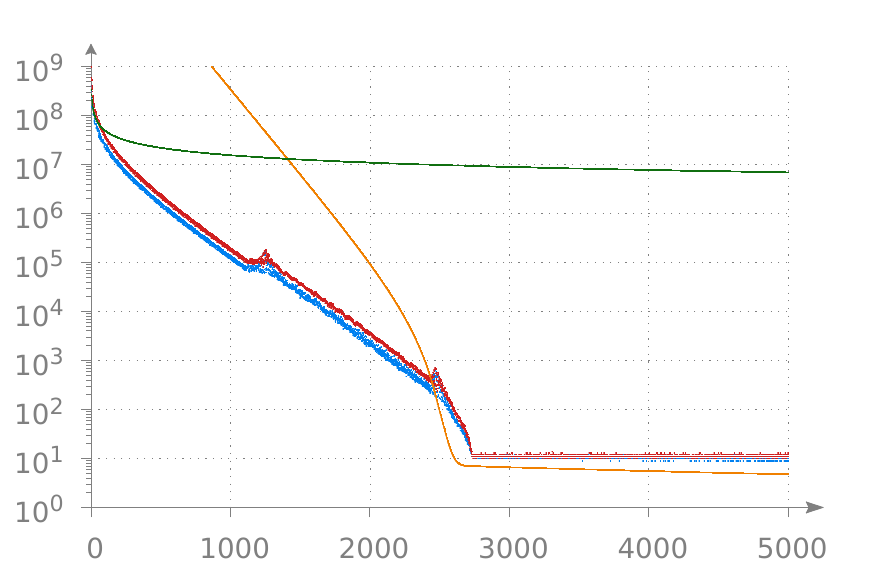}
\caption{The maximum load minus the average load is plotted in blue, the
maximum local load difference in red and the potential function $\phi_t$ in
yellow, using SOS on a two-dimensional torus size $1000 \times 1000$. As a
comparison, the green line shows the maximum load minus the average load using
FOS.}
\label{fig:sos}
\end{figure}

Our main results are shown in Figure \ref{fig:sos}, where we plotted
the simulation results using the second order scheme with randomized rounding
in a two-dimensional torus consisting of $1000 \times 1000$ nodes and an average load of $1000$. As in all
following plots, the $x$-axis represents the number of rounds. The plot shows
the maximum load minus the average load, the maximum local load difference, and
the potential function $\phi_t$ on the $y$-axis. As a comparison, a simulation run
using only first order scheme is shown as well.

\begin{figure}[b]
\centering
\includegraphics{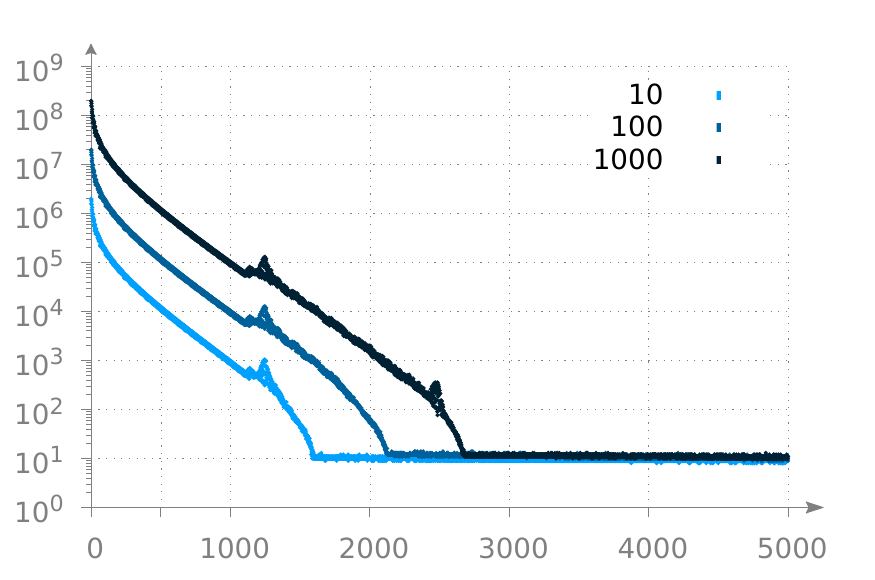}
\caption{The plot shows the maximum load minus the average load 
on a two-dimensional torus of size $1000 \times 1000$.
Three different initial loads were used with average loads of $10$, $100$, and
$1000$, colored from light to dark.}
\label{fig:2d-torus-1000x1000-initial-loads}
\end{figure}

\begin{figure}
\centering
\includegraphics[page=1]{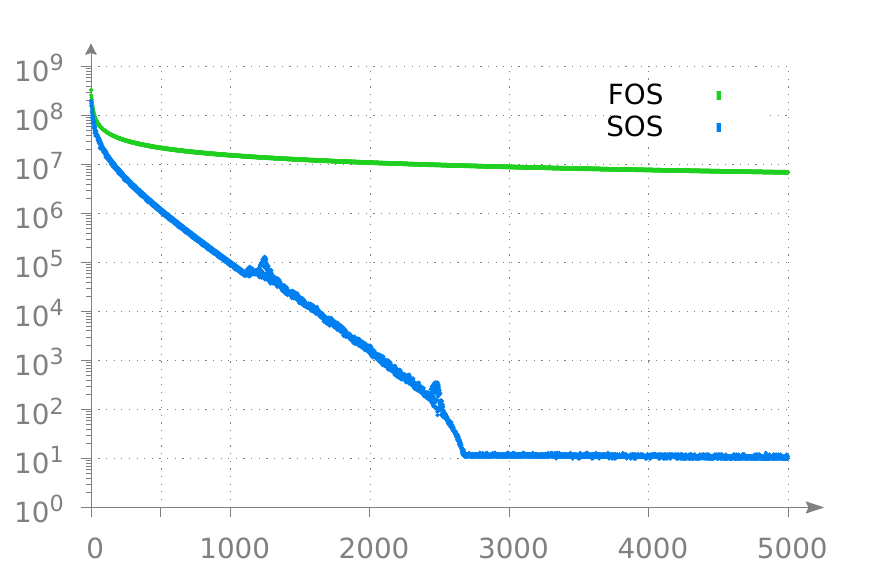}
\includegraphics[page=2]{figures/plots/2d-torus-1000x1000-fos-sos-comparison-small}
\caption{A comparison of the maximum load minus the average load using SOS
(blue) and FOS (green) on a two-dimensional torus of size $1000 \times 1000$.
The first plot shows discrete loads and randomized rounding, the second plot
shows an idealized scheme.}
\label{fig:2d-torus-1000x1000-fos-sos-comparison}
\end{figure}

\begin{figure*}
\centering
\includegraphics{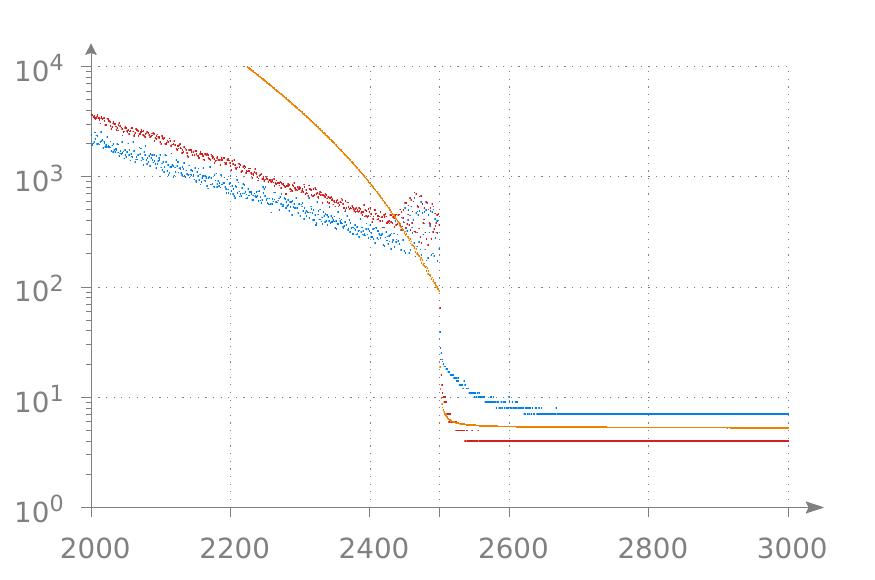} \hfill
\includegraphics{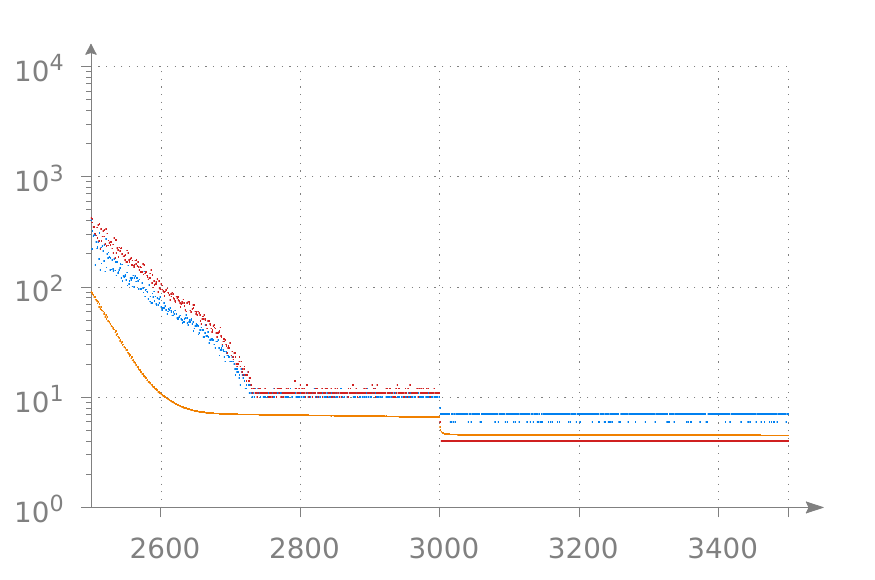}
\caption{The plots show the maximal local difference in red, the maximal load
minus the average load in blue, and the potential function $\phi_t$ in yellow.
The simulation switches from second order scheme to first order scheme in
the left and the right plot after 2500 and 3000 rounds, respectively.}
\label{fig:sos-switch-to-fos}
\end{figure*}
It is known that the second order scheme is faster than the first order scheme  w.r.t.\ the convergence time
of the load balancing system in graphs with a suitable eigenvalue gap. However, our simulations indicate that for SOS the 
remaining maximal load difference does not drop below a certain threshold.
Therefore, we implemented the following approach to decrease the load
differences even further. First we perform a number of steps using the fast second
order scheme. Then, every node synchronously switches to first
order scheme. We considered two different scenarios.  
In the first case we switched to FOS early after 2500
SOS steps. This number of steps corresponds roughly to the end of a phase of
exponential decay in the potential function. In the second case we switched to
FOS rather late at 3000 steps, allowing the system to run for a few hundred
additional steps using the second order scheme. In both cases we observed a
significant drop in both, the local and the global load differences. That is,
the values for the load differences do not drop below $10$ when using SOS. Once
the simulation is switched to FOS, the maximum local load difference converges
to a value of $4$ and the maximum load minus the average load drops to $7$.
This is shown in Figure \ref{fig:sos-switch-to-fos}; a direct comparison is
shown in Figure \ref{fig:sos-switch-to-fos-comparison}.

\begin{figure*}
\includegraphics{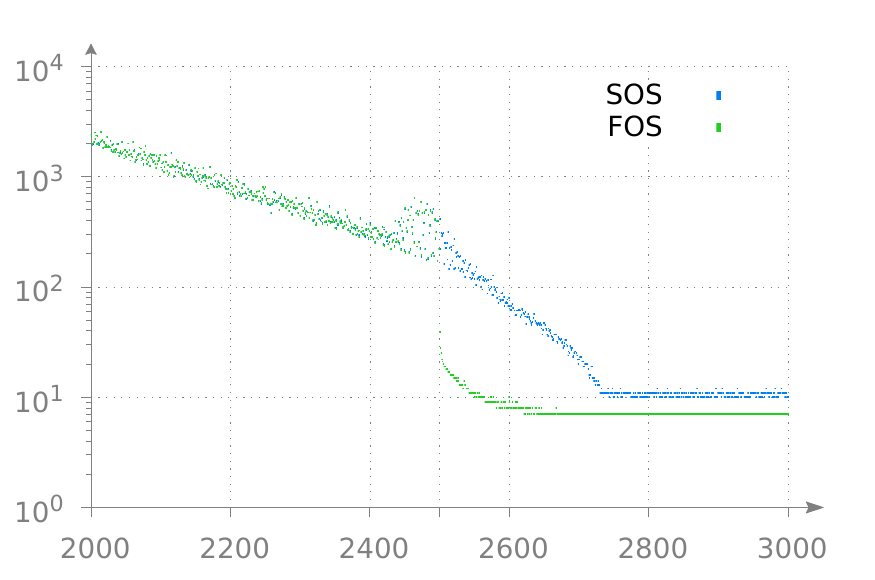} \hfill
\includegraphics{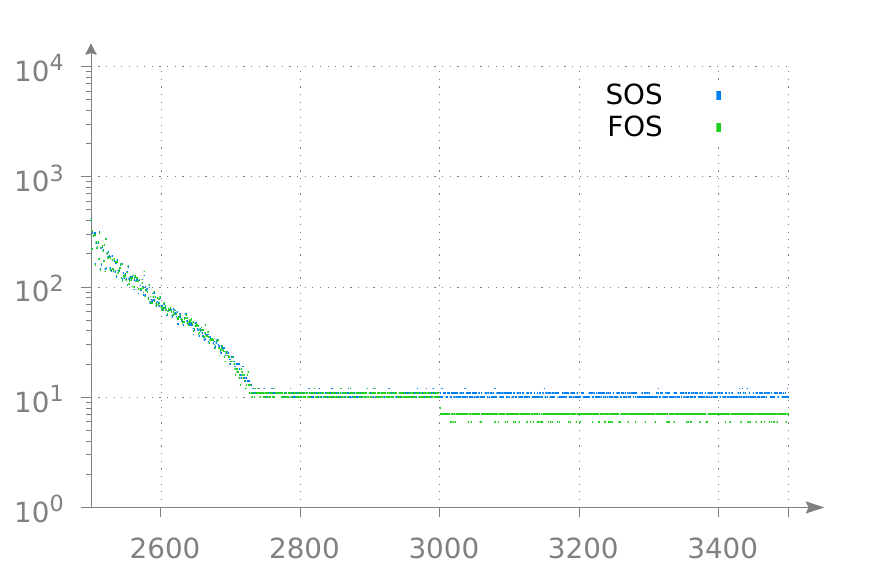}
\caption{The plots show a direct comparison of the same data presented in
Figure \ref{fig:sos-switch-to-fos}. The blue data points show the maximal load
minus the average load using only a SOS approach while the green data points
show the maximal load minus the average load when switching to FOS. Again, the
switch has been conducted after 2500 steps in the left and 3000 steps in the
right plot.}
\label{fig:sos-switch-to-fos-comparison}
\end{figure*}

\begin{figure*}
\includegraphics{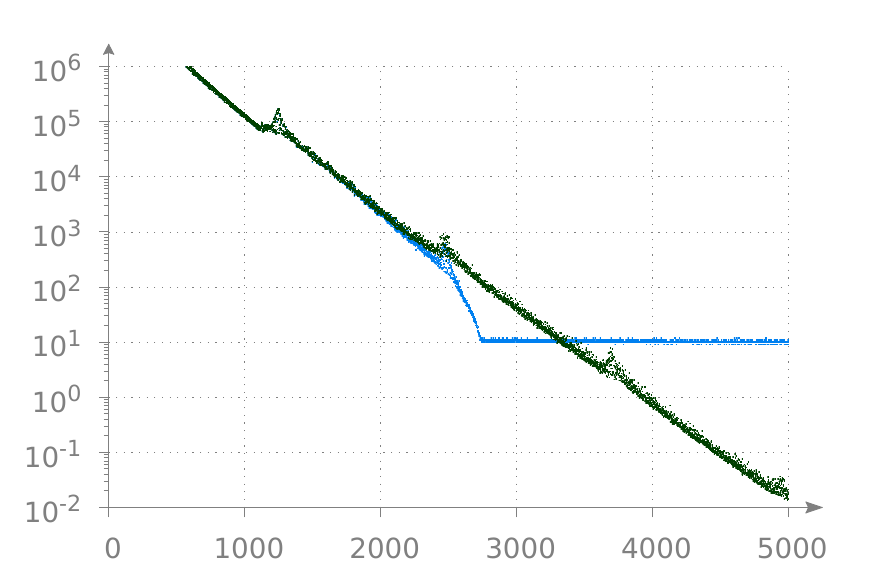} \hfill \includegraphics{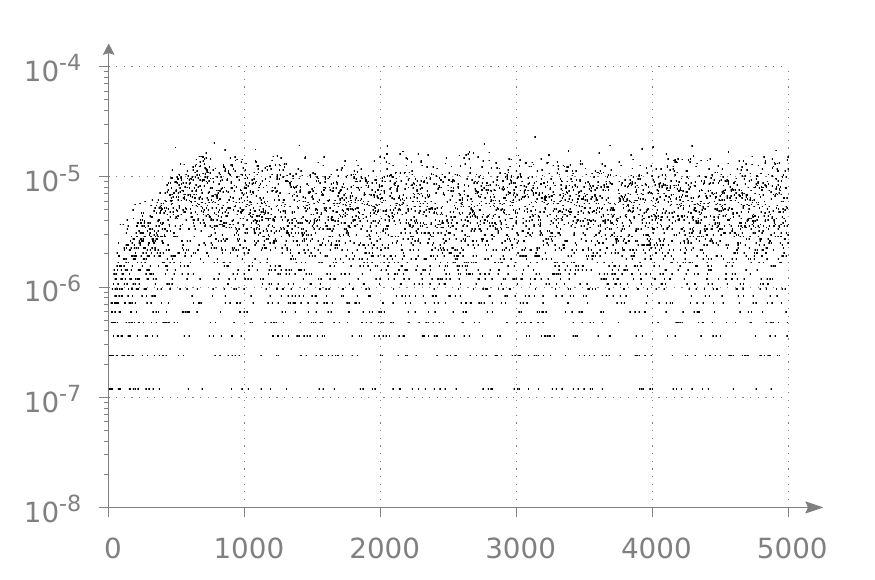}
\caption{A comparison of the idealized second order scheme in green with a SOS using
randomized rounding in blue. The idealized version is based on IEEE754 double precision
floating point values as loads. The data points show the maximum load of the
system minus the average load. The right plot shows the absolute value of the
total load in the system at round $t$ minus the initial total load, i.e., the
absolute error.}
\label{fig:comparison-idealized}
\end{figure*}

\begin{figure*}
\centering
\includegraphics{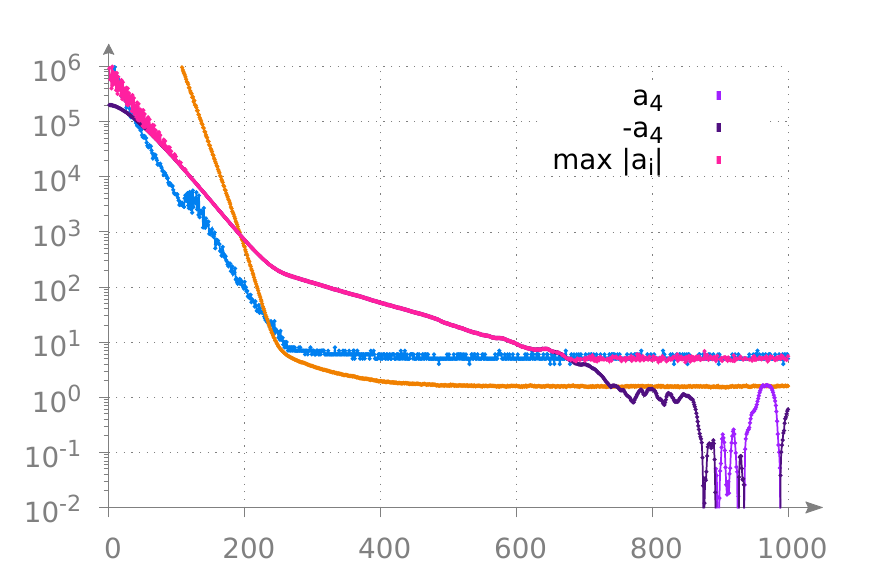} \hfill \includegraphics{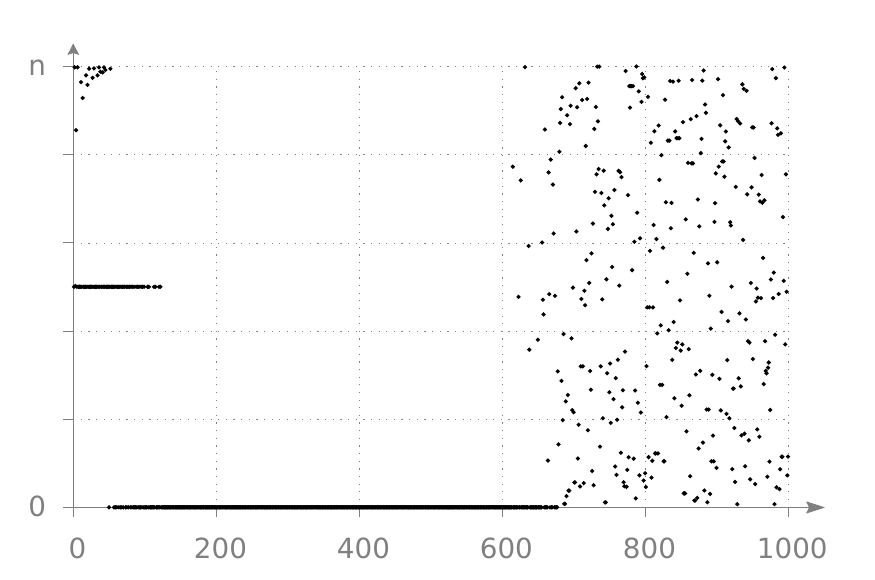}
\caption{The left plot shows the impact of eigenvectors on the load on a
two-dimensional torus of size $100\times100$. The maximum over all
coefficients, $\max_{i}\left\{|\mathfrak{a}_i|\right\}$, is shown along with
$\mathfrak{a}_4$. In the right plot the currently leading coefficient is shown,
i.e., a black point indicates that in the given round ($x$-axis) the
corresponding eigenvector ($y$-axis) has maximal impact.}
\label{fig:2d-torus-100x100-alpha}
\end{figure*}

In the left plot in Figure \ref{fig:sos-switch-to-fos} we furthermore observe that the load differences
continue to diminish for about 200 steps (during steps 2500 to 2700) when we switch to FOS after 2500 steps. When we
switch to FOS after 3000 steps (right plot in Figure \ref{fig:sos-switch-to-fos}) a drop can still be observed, however, the resulting load
differences remain at a low level. 
To explain this behavior of the load
balancing procedure we analyzed the impact of the eigenvectors of the diffusion
matrix on the load balancing process. Recall that the diffusion matrix $M = \left(M_{ij}\right)$
is defined as
\begin{equation*}
M_{ij} =
\begin{cases}
	\alpha_{ij} & \text{if } i \neq j \\
	1 - \sum_{i\neq j}\alpha_{ij} & \text{if } i = j
\end{cases}
\end{equation*}
with $\alpha_{ij} = 1 / \left(\max\left\{\deg(i), \deg(j)\right\} + 1\right)$
if node $i$ is adjacent to node $j$ and $0$ otherwise.

We used the Lapack library \cite{lapack} to compute the eigenvalues and
corresponding eigenvectors of $M$. The same library was then used to solve the
set of linear systems
\begin{equation*} V \cdot \mathfrak{a} = W \end{equation*}
for a matrix of coefficients $\mathfrak{a}$, where $V$ denotes a matrix of
eigenvectors of $M$ and $W = \left(x(t)\right)$ consists of row vectors
$x(t)$ as defined in \autoref{sec:known} containing the loads of the system at every round $t$. The resulting
coefficients in $\mathfrak{a}$ give the impact of the corresponding eigenvectors in
each round on the load. The results are shown for the torus of size
$100\times100$ in the two plots of Figure \ref{fig:2d-torus-100x100-alpha}. The
first plot shows the maximum of these coefficients. In the simulation run corresponding to this plot we observed that
starting roughly after 100 rounds this leading eigenvector corresponds to
$\mathfrak{a}_4$ up until roughly round 700. After 
that time there is no clear
leading eigenvector. This can be observed from the right plot in the same
figure, where the currently leading coefficient is plotted for each round.

It seems reasonable to switch from SOS to FOS once the impact of the leading eigenvector
drops below some threshold. This information, however, requires a global view on
the load balancing network and therefore cannot be used in a distributed
approach. In real-world applications also the trade-off between a remaining
imbalance and the time required to balance the loads must be considered. We
therefore investigate the effect of the time step when switching from SOS to
FOS. 

In Figure \ref{fig:2d-torus-100x100-fos} we plotted the maximum load minus the
average load for second order scheme and for an adaptive approach where we
switched to FOS after a number of SOS rounds. The impact of the leading
eigenvector (and the loss thereof) explains the data shown in Figure
\ref{fig:2d-torus-100x100-fos}. 
%
%
Our data indicate, that once the impact
of the leading eigenvector drops below a certain threshold in a round $R$, there is
no difference in the behavior of the system when switching to FOS in some
consecutive round $r \geq R$. Independently of the round $R$, however, we
observe a significant drop in the maximum load.

Note that the maximum local load difference seems to be a good indicator for
switching from SOS to FOS. Furthermore this local property is also available in
a distributed system with only limited global knowledge.

\begin{figure*}
\centering
\includegraphics{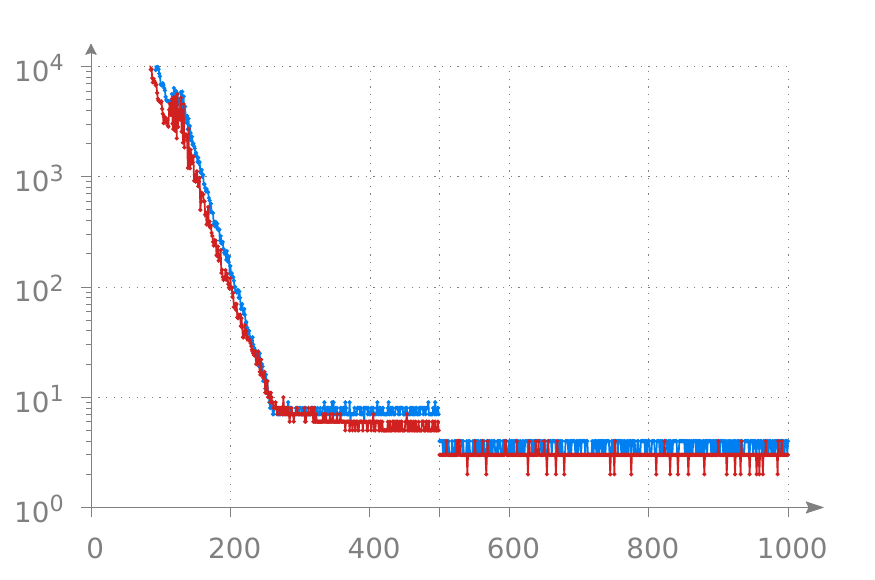} \hfill \includegraphics{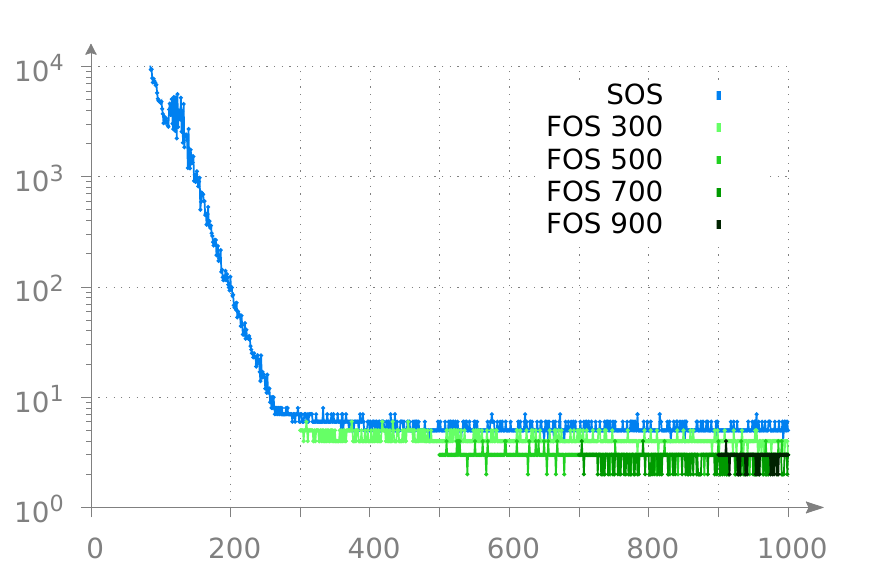}
\caption{A plot showing the effect of switching from SOS to FOS on a
two-dimensional torus of size $100\times100$. The left plot shows the maximum
load minus the average load in blue and the maximal local load difference in
red. After 500 SOS rounds the process switches to a FOS approach. In the right
plot various time steps to switch from SOS to SOS are used. All data points
show the current maximum load minus the average load.}
\label{fig:2d-torus-100x100-fos}
\end{figure*}

In Figure \ref{fig:sos} we also observe strong discontinuities of the local and
global maximum load differences which occur approximately every $1200$ to $1300$ steps.
To explain these discontinuities we visualized the load balancing process on
the two-dimensional torus in Figure \ref{fig:steps-bw}
as follows. We rendered
a raster graphic of size $1000\times1000$ pixels per round. In the graphic each
pixel represents a node of the torus such that neighboring pixels are connected
in the network and border-pixels are connected in a periodic manner. We now set
the pixels' colors to correspond to the nodes' loads, i.e., a pixel is shaded
bright if its load is close to the average load and dark otherwise. In the
visualization shown in Figure \ref{fig:steps-bw}
the initial load is placed at
the node with ID $0$, which corresponds to the top-left pixel. Since the
border-pixels \emph{wrap around}, the loads spread in circles from all four
corners, forming the \emph{wavefronts} in the graphic. Our visualizations now
indicate that the discontinuities in the local load differences and the maximum
load occur whenever these wavefronts collapse at the center of the graphic,
i.e., when the center node gets load for the first time. This is a consequence
of the second order scheme since nodes continue to push loads towards the
center pixel, even though this pixel may already have a load above average.
Note that these discontinuities also occur in the idealized scheme and for
smaller tori, see Figures \ref{fig:comparison-idealized} and
\ref{fig:2d-torus-100x100-fos}, respectively.

We furthermore rendered a video of the load balancing process (available
online, see \cite{video}) which shows the behavior of the system in an
intuitive way and thus helps understanding these discontinuities. Further
visualizations in \autoref{fig:smooth} show the
impact of the first order scheme. That is, after applying FOS steps the
rendered image becomes more \emph{smooth}, in contrast to the SOS steps where
our visualization shows  a significant amount of noise. 

\begin{figure}
\centering
\frame{\includegraphics[width=0.8\columnwidth]{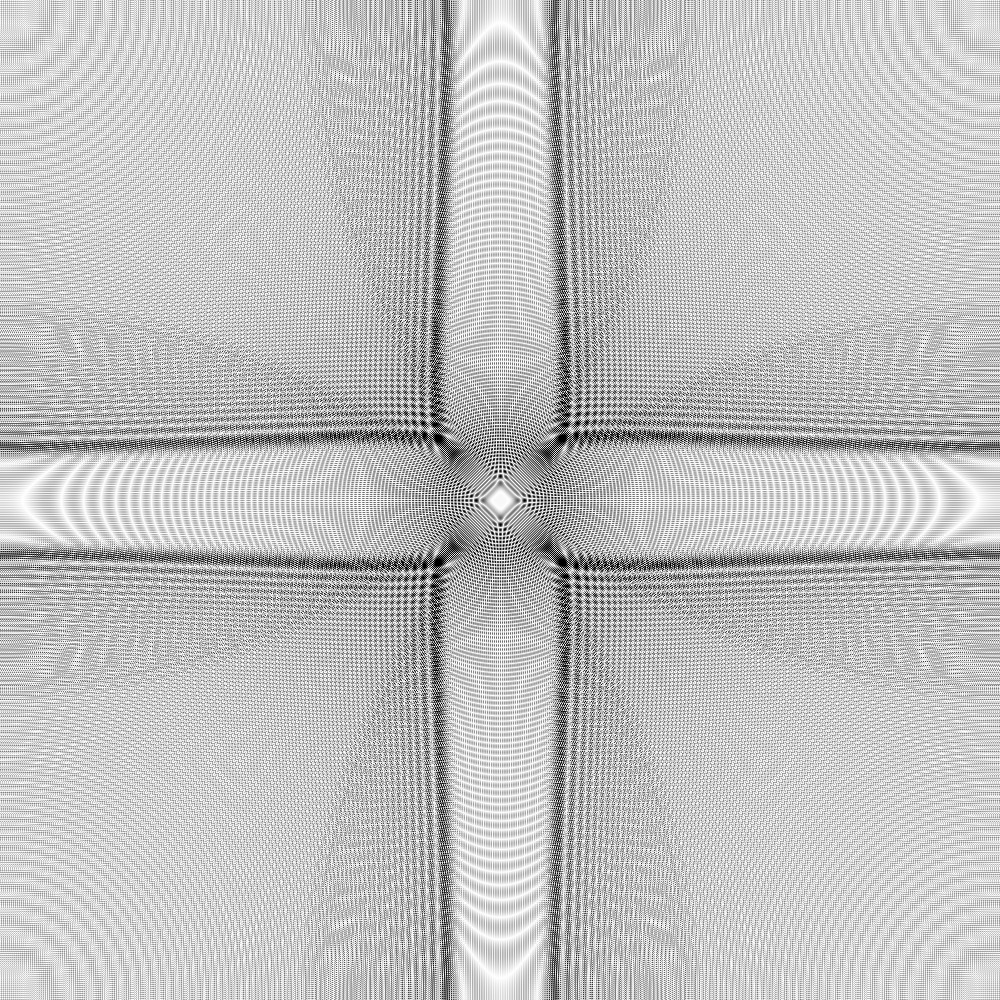}}
\caption{A visualization of the load balancing network (a two-dimensional torus
of size $1000\times1000$) after 1100 steps. Each pixel corresponds to one node
which has edges to its $4$-neighborhood and is shaded such that a light pixel
has a load close to the average load and a dark pixel a load close to either the
maximum or minimum load of the system. Further time steps are rendered in \autoref{fig:steps-bw-further-steps}}
\label{fig:steps-bw}
\end{figure}

\begin{figure*}[h]
\centering
\frame{\includegraphics[width=0.35\textwidth]{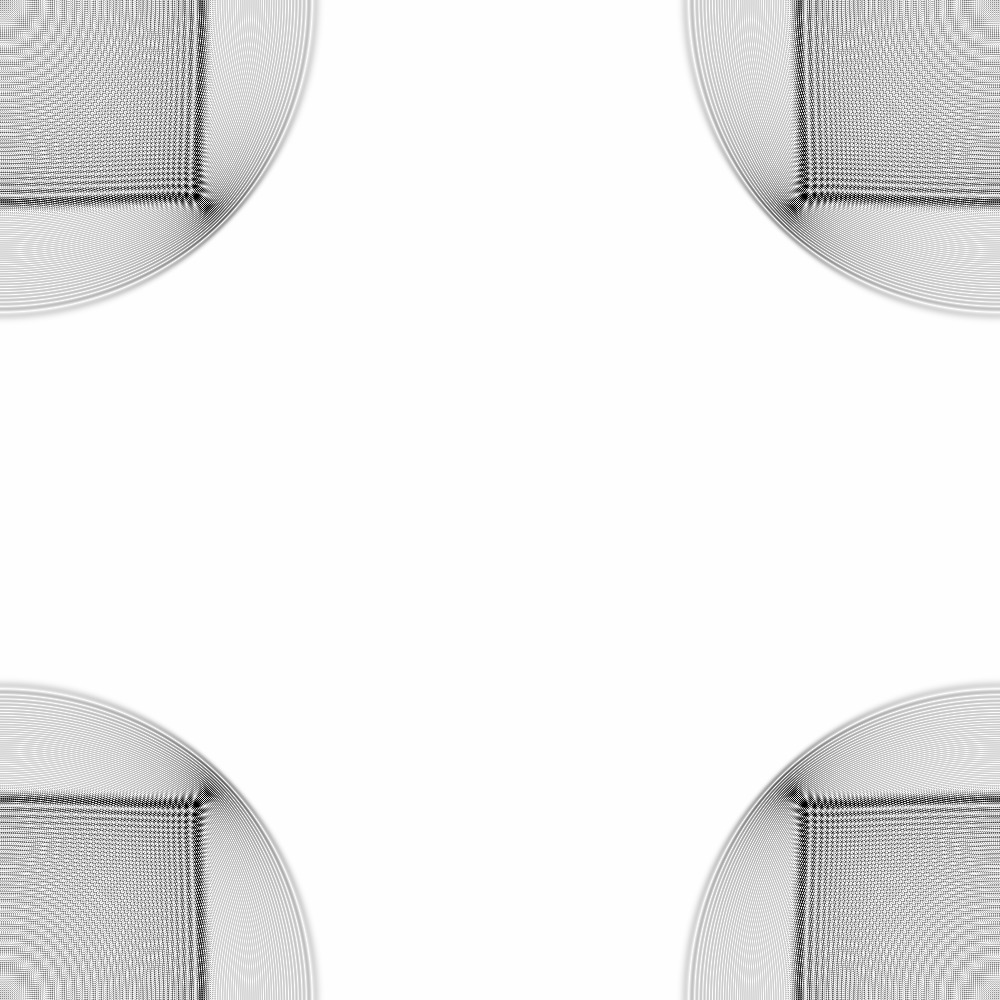}} \quad \frame{\includegraphics[width=0.35\textwidth]{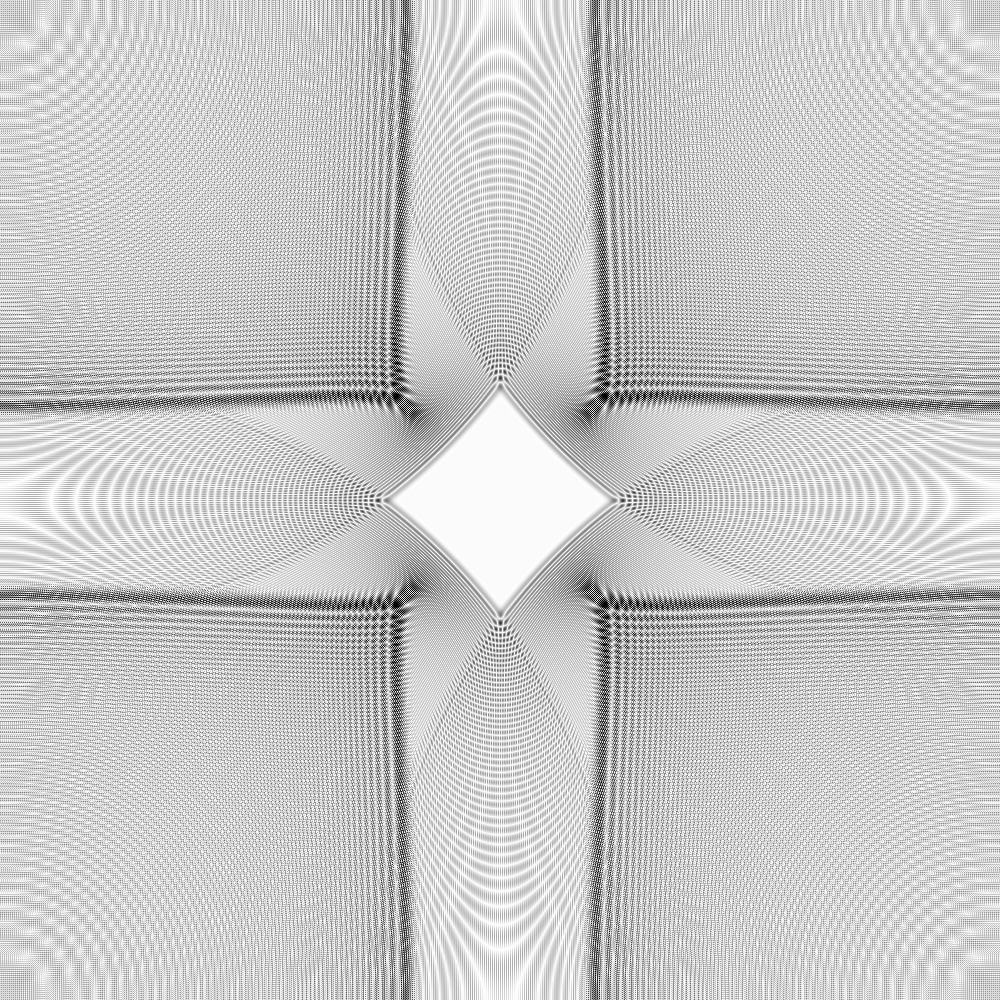}}\\[2ex]
\frame{\includegraphics[width=0.35\textwidth]{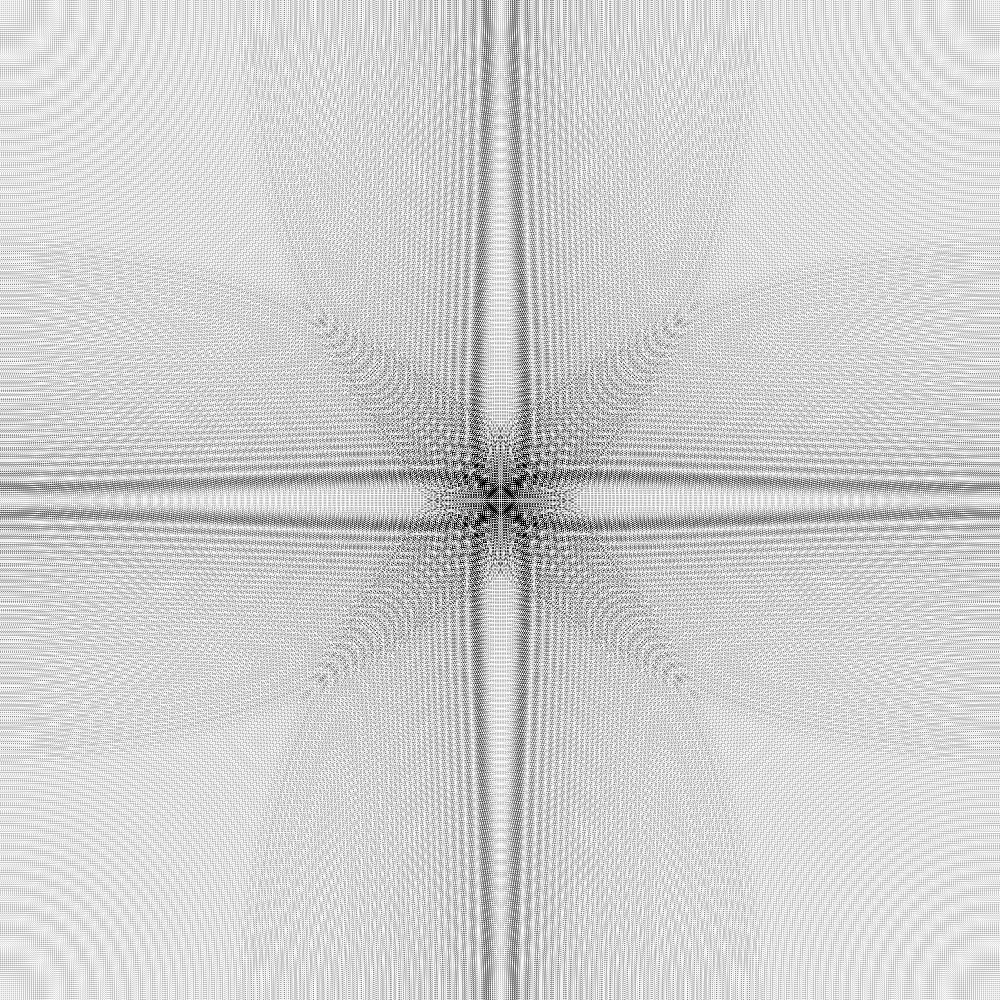}} \quad \frame{\includegraphics[width=0.35\textwidth]{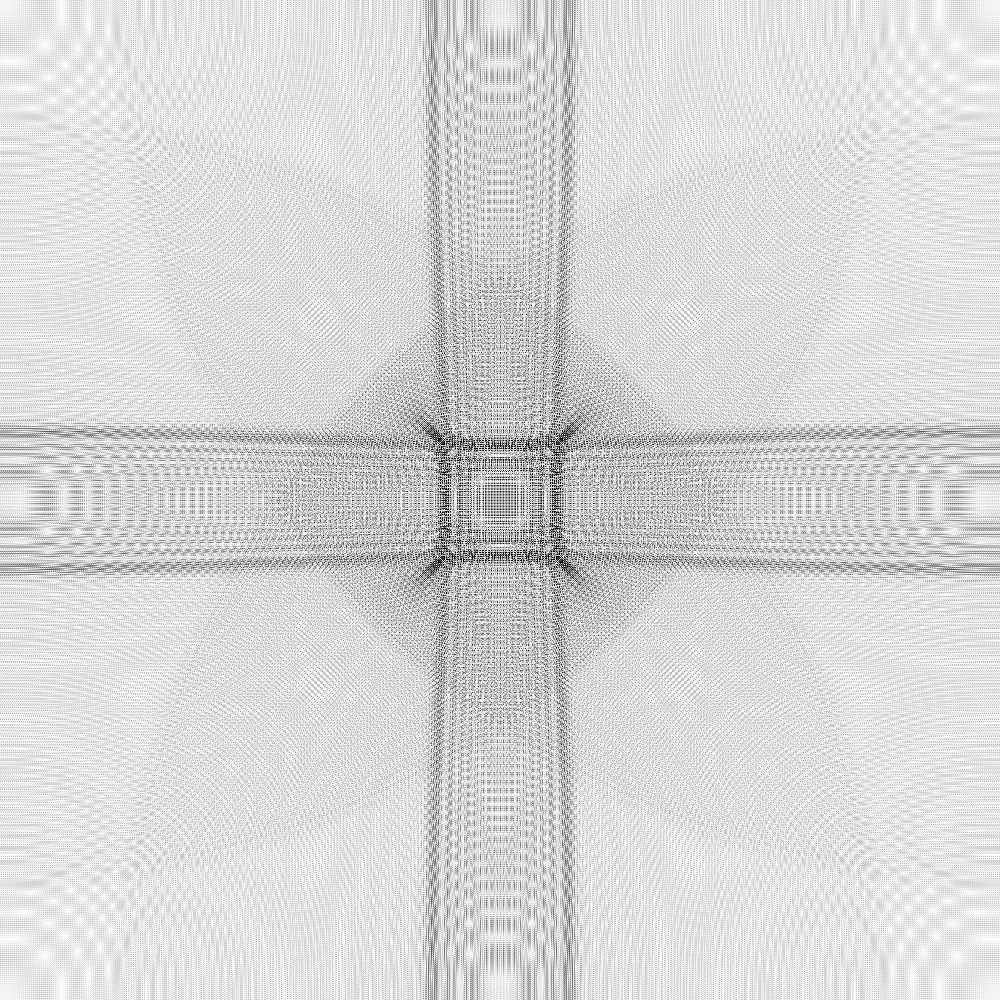}}
\caption{The figure shows the same visualization as Figure \ref{fig:steps-bw},
rendered after 500, 1000, 1200, and 1400 steps. The network is modeled as a
two-dimensional torus. Each pixel corresponds to one node which has edges to
its $4$-neighborhood. All pixels are shaded in an adaptive way, i.e., a light
gray or white pixel indicates a load close to the average load and a dark gray
or black pixel indicates a load close to either the maximum or minimum load of
the system.} \label{fig:steps-bw-further-steps}
\end{figure*}

\begin{figure*}[h]
\centering
\includegraphics[width=0.31\textwidth]{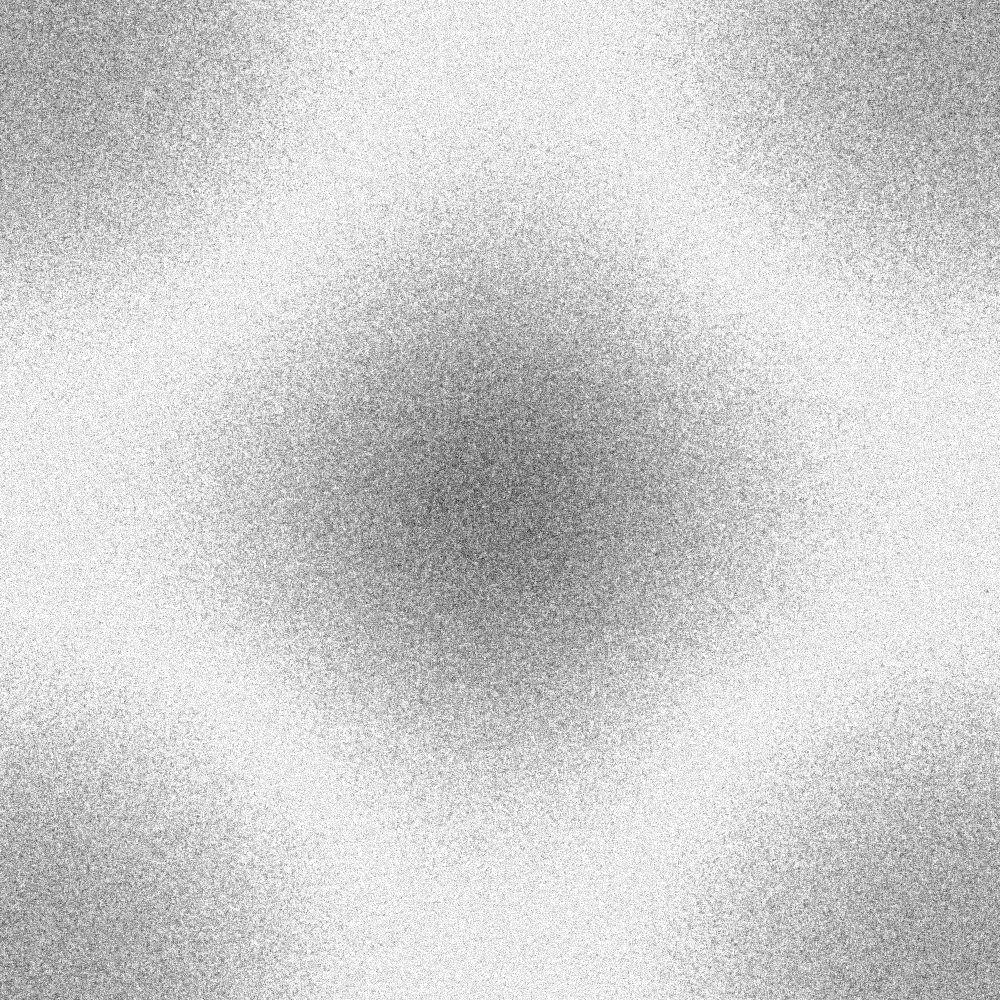} \hfill
\includegraphics[width=0.31\textwidth]{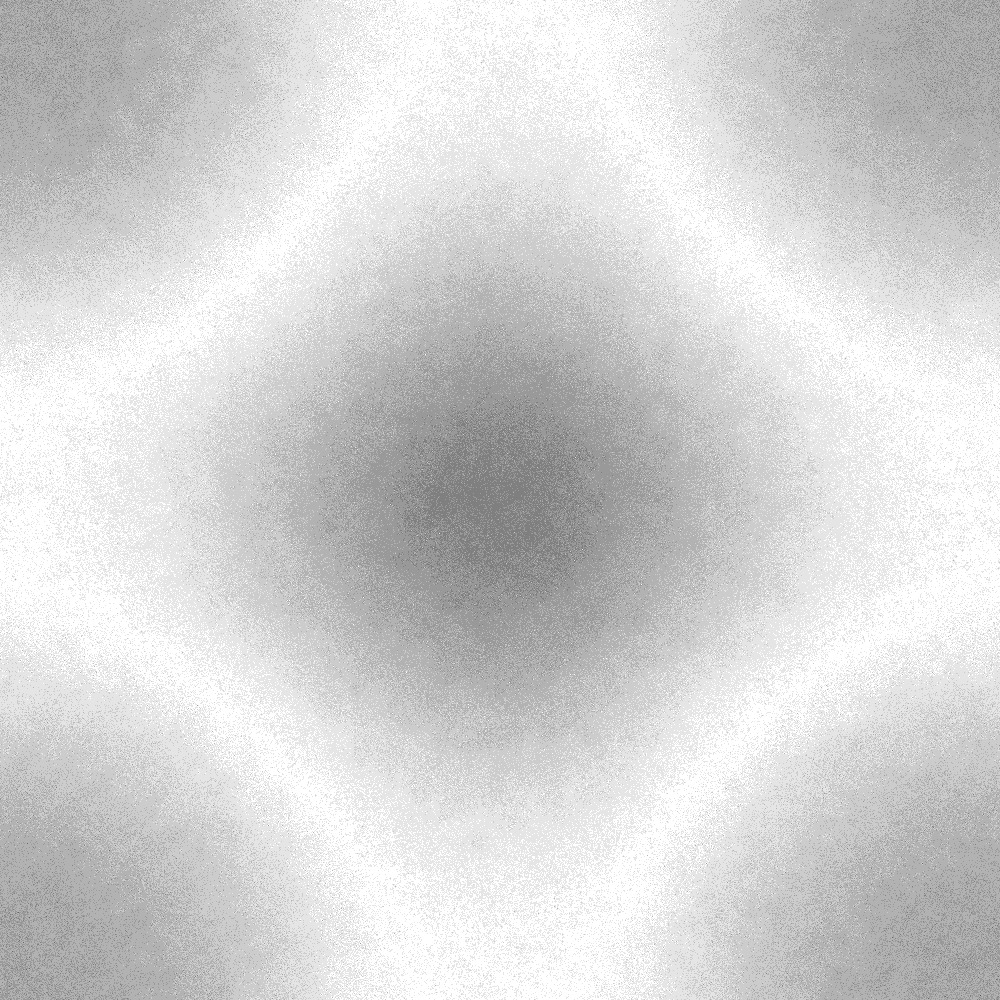} \hfill
\includegraphics[width=0.31\textwidth]{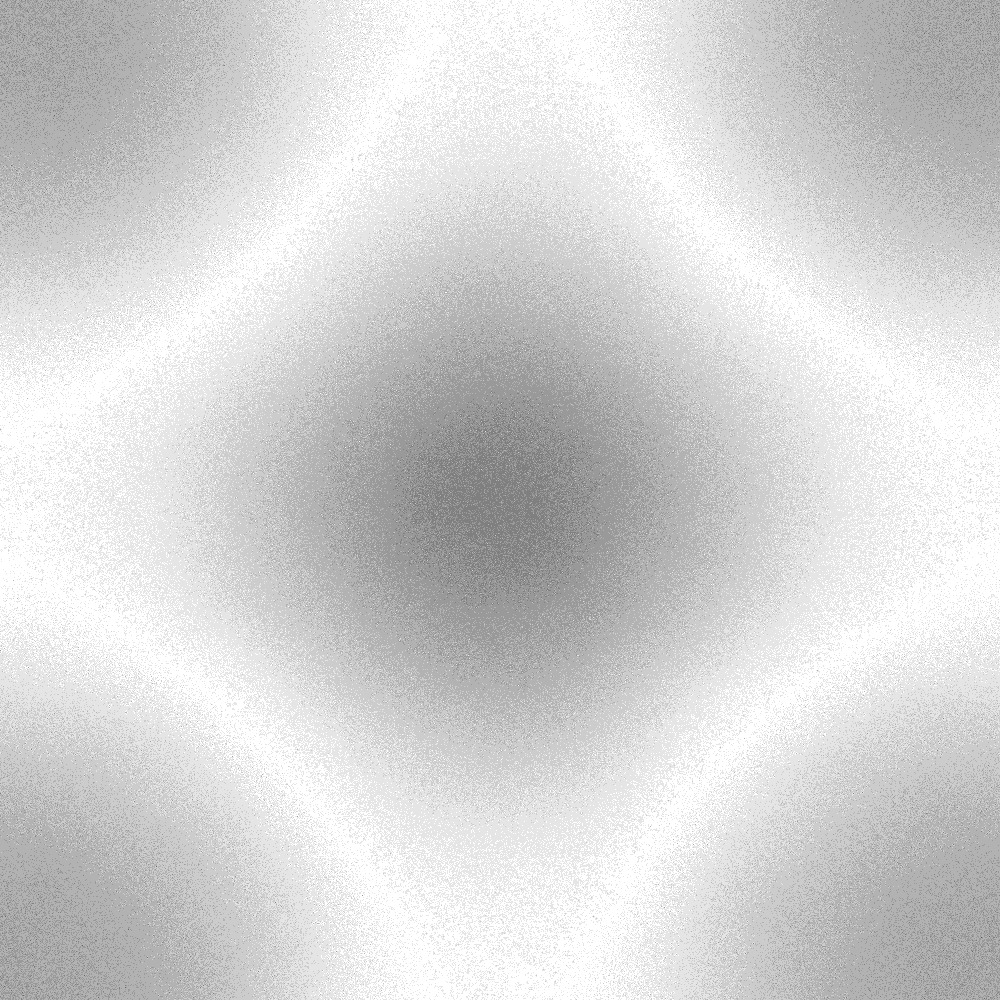}
\caption{Above figures show the same load balancing network as Figure
\ref{fig:steps-bw}. A pixel colored white indicates a node with optimal load, a
pixel colored black corresponds to a load that is more than $10$ units away
from the optimal value. Observe that in none of the above images such a load
(which exceeds the average load by more than 10 tokens) occurs. In the center
region of the left image there are several pixels which have load at least $9$,
whereas in the right image the maximum load exceeds the average load by at most $7$.
The first visualization has been rendered after 3000 SOS steps. The second
image and the third image show the same network after additional 100 and 1000
FOS steps, respectively. } \label{fig:smooth}
\end{figure*}

To gain further insights we also implemented a simulation of the idealized load
balancing procedure where loads can be split up in arbitrary small portions and
any real fraction of load can be transmitted. This simulation is based on
double precision floating point variables that represent the current load at a
node. Therefore, a quantification takes place which introduces an error. However, we observed that in our
setup the total error over all
loads is small and thus can be neglected. A comparison
of the idealized and discrete processes can
be found in Figure \ref{fig:comparison-idealized}.

\subsection{Other Networks}\label{sect:other}

For random regular graphs constructed using the configuration model
\cite{Wor99} and the hypercube we observe only a limited improvement of
SOS compared to FOS, see Figures \ref{fig:random-graph} and \ref{fig:hypercube}, respectively.  That is, the number of steps required to balance the
loads up to some additive constant is only slightly larger when using FOS
instead of SOS. For random graphs the remaining imbalance is the same for both
FOS and SOS. For the hypercube our results indicate that the remaining
imbalance using FOS is by one smaller than in the case of the SOS process.
Hence, our data only show a negligible difference between FOS and SOS in
these graphs. This can be related to the second largest eigenvalue of the diffusion
matrix, which is $(2+o(1))/\sqrt{d}$ for random graphs and $1-2/(\log{n}+1)$ for hypercubes (compared to approximately $1 - \pi^2/n$ for the torus) \cite{CDS80}. Note that the
spectral gap is also reflected in the corresponding values for
$\beta$ in \autoref{tab:graph-types}.

The random geometric graphs were generated by assigning each node
a coordinate pair in the range $[0, \sqrt{n}]^2$ uniformly at random and connecting nodes $v_i$ and
$v_j$ if and only if $d(v_i, v_j) \leq \sqrt[4]{\log{n}}$, where $d$ denotes the euclidean
distance. Remaining small isolated components were connected to the closest neighbor
in the largest component of the graph.  
Even though we observe a less pronounced potential drop in random geometric graphs, the behavior of FOS and SOS in these graphs is very similar to the behavior in the torus graphs, see Figures \ref{fig:random-geometric-graph} and \ref{fig:2d-torus-100x100}.


\begin{figure*}[H]
\includegraphics{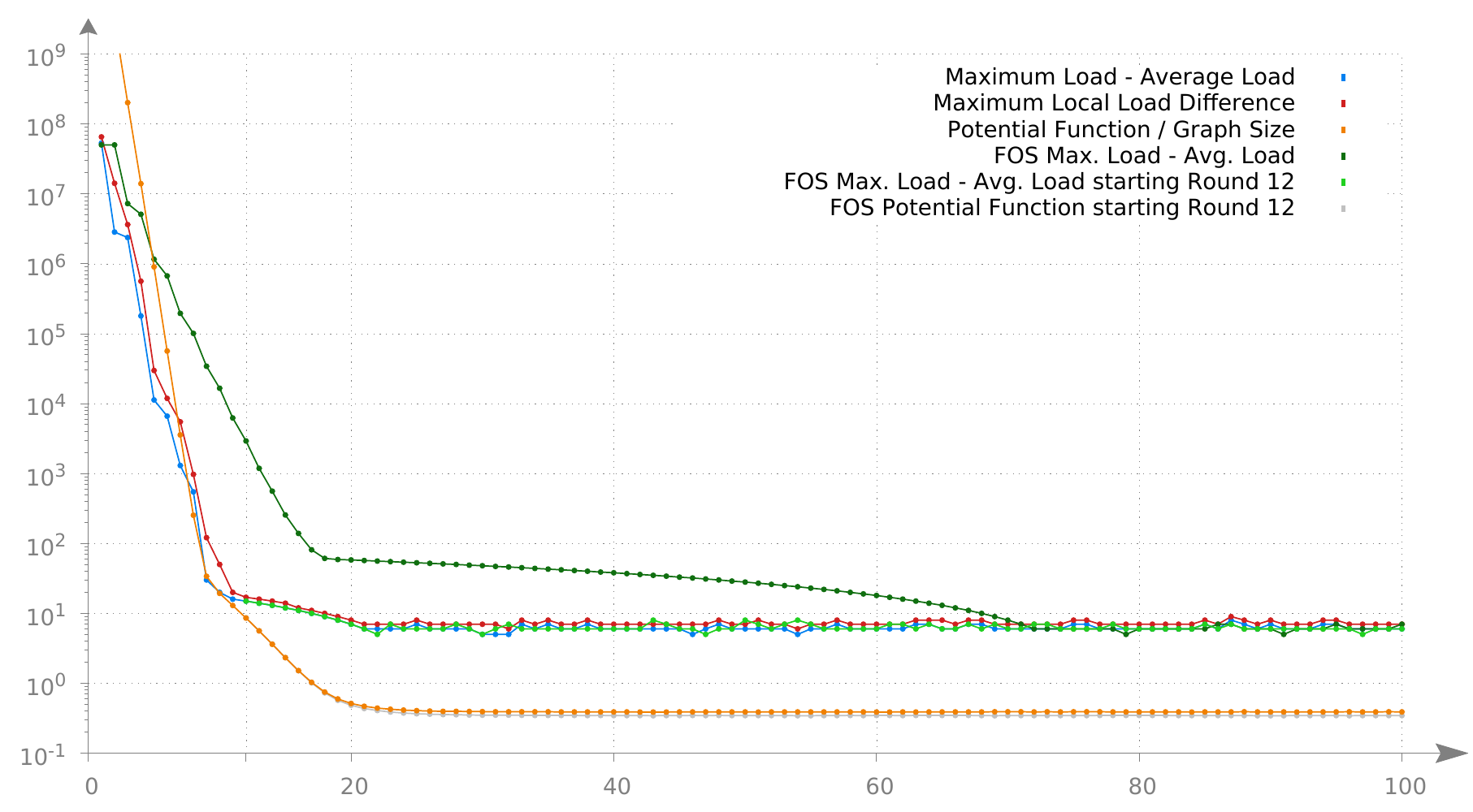}
\caption{Load balancing simulation on a random graph in the configuration model
of size $10^6$ nodes with $d = 19$.}
\label{fig:random-graph}
\end{figure*}

\begin{figure*}[H]
\includegraphics{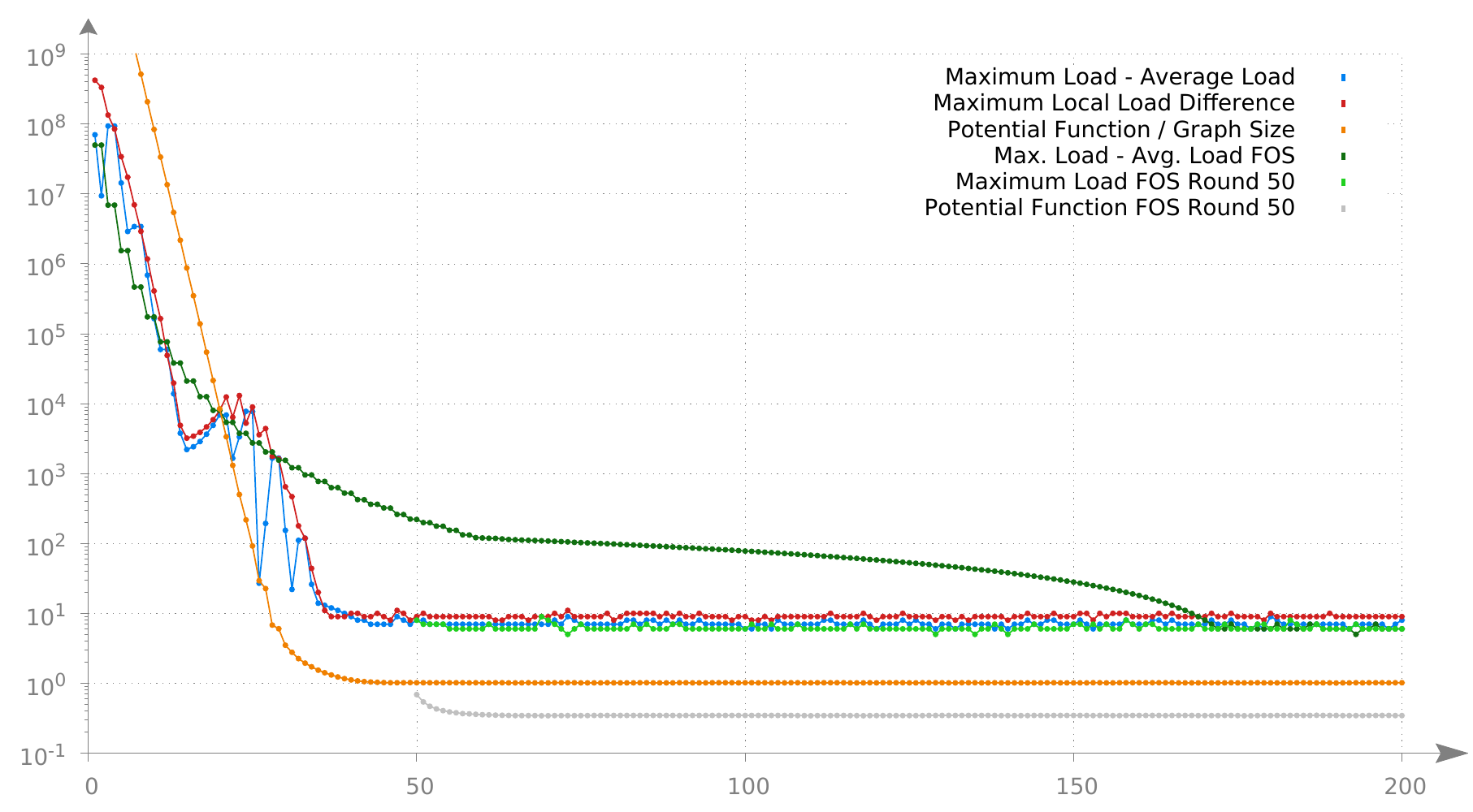}
\caption{Load balancing simulation on a hypercube with $n=2^{20}$ nodes. The
green data points show the effect of switching to FOS after 32 steps.}
\label{fig:hypercube}
\end{figure*}

\begin{figure*}[H]
\includegraphics{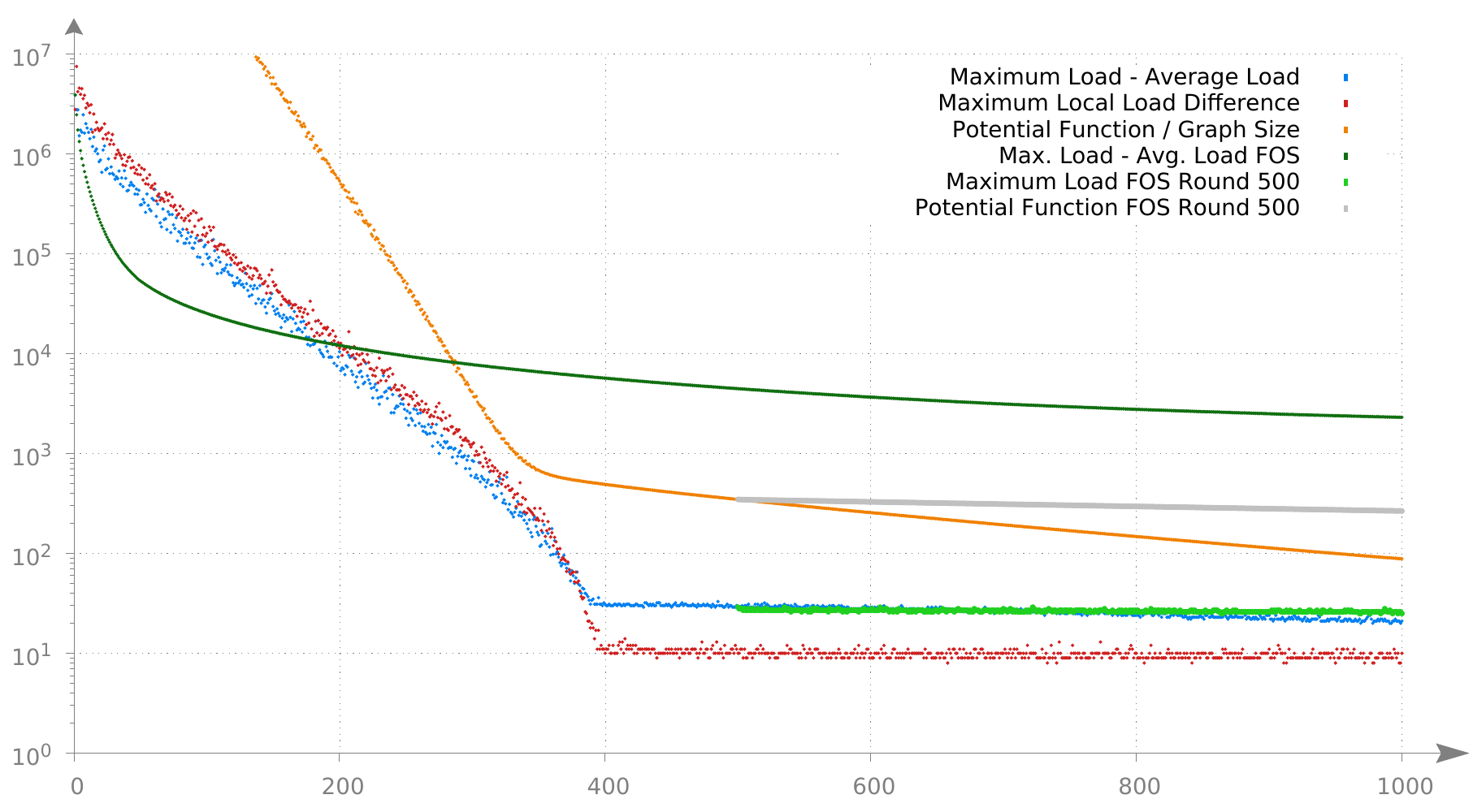}
\caption{Load balancing simulation on a random geometric graph with $10.000$
nodes in $[0, \sqrt{n}]^2$ with connectivity radius $\sqrt{\log{n}}$.}
\label{fig:random-geometric-graph}
\end{figure*}

\begin{figure*}[H]
\centering
\includegraphics{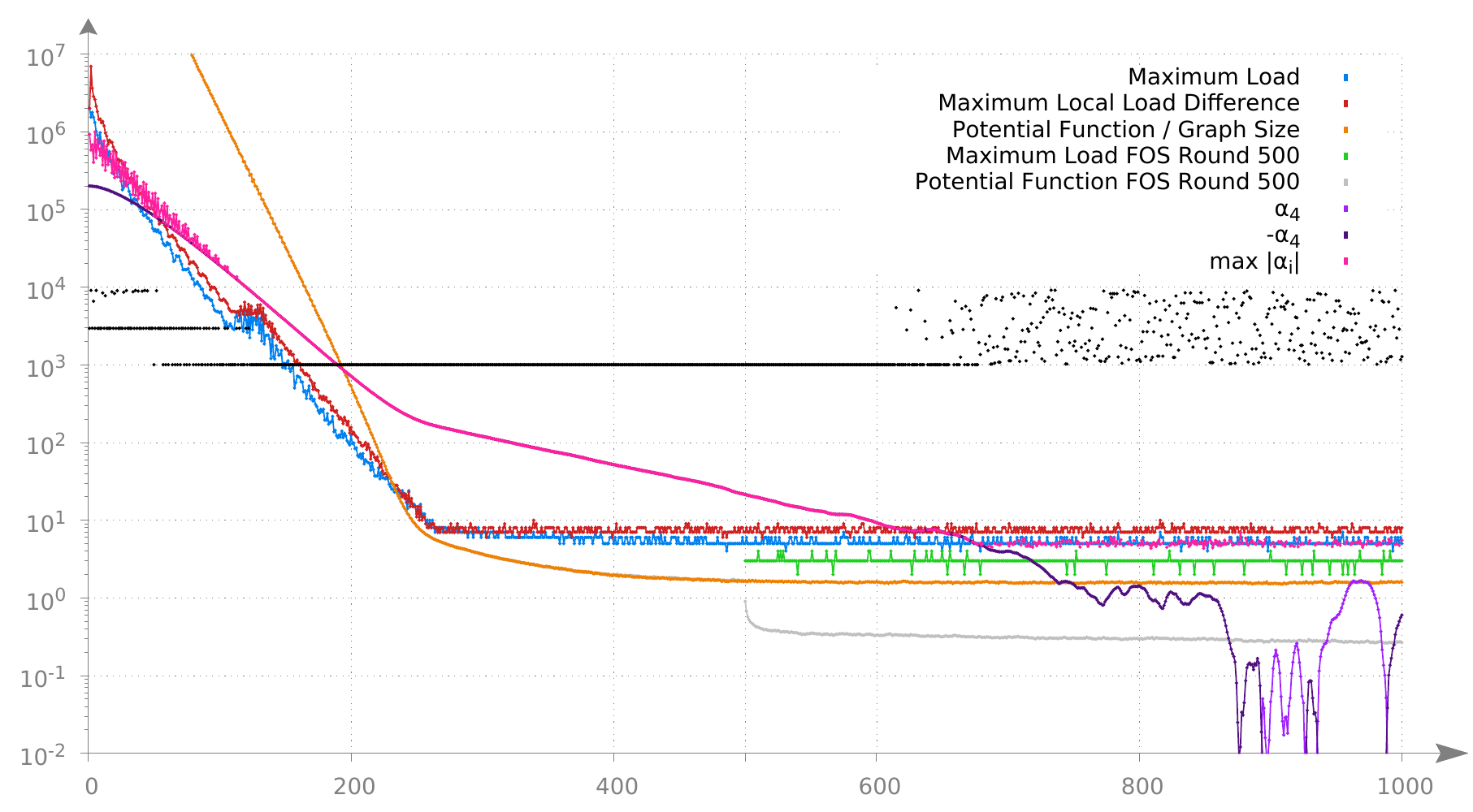}
\caption{Load balancing simulation on a two-dimensional torus of size
$100\times100$. The purple line shows the maximum coefficient
$\max\{|\alpha_i|\}$ for the impact of the eigenvectors on the load. This
coefficient is $-\alpha_4$, starting approximately in round 100 and up to
approximately round 700. The black dots also shown in this plot in the range
$[10^3, 10^4]$ represent the leading coefficient, where $\alpha_1$ is plotted
with a value of $10^3$ and $\alpha_n$ is plotted with a value of $10^4$, with a
linear scale between them. Observe that after approximately 700 rounds no
single eigenvector can be identified that has a leading impact on the load.}
\label{fig:2d-torus-100x100}
\end{figure*}

\section{Conclusion}

In this paper we analyzed a broad class of discrete diffusion type algorithms
by comparing them to their continuous counterparts. Furthermore, we studied the
problem of negative load in second order schemes and presented a bound for the
initial minimum load in the network in order to avoid negative load during the
execution of the algorithm.

Our analyses seem to provide bounds for the negative load and for the arbitrary
rounding of SOS which leave room for improvement. However, in order to tighten
these results, one needs some different analytic techniques. Therefore, we
think that any improvement would be an interesting contribution to the field of
second order diffusion schemes in particular and load balancing algorithms in
general.

\FloatBarrier
\balance
\bstctlcite{BSTcontrol}
\bibliographystyle{IEEEtranS}
\bibliography{paper}


\end{document}